\documentclass[fontsize=12,english,british,reprint,nocopyrightspace,mathic=true,xypdf,abstract=true,obeyfinal,final]{scrartcl}
\PassOptionsToPackage{natbib=true}{biblatex}
\PassOptionsToPackage{obeyFinal}{todonotes}
\usepackage[T1]{fontenc}
\usepackage[utf8]{inputenc}
\usepackage{color}
\usepackage{babel}
\usepackage{prettyref}
\usepackage{xr}
\usepackage{mathtools}
\usepackage{enumitem}
\usepackage{todonotes}
\usepackage{varwidth}
\usepackage{amsmath}
\usepackage{amsthm}
\usepackage{geometry}
\usepackage{xargs}[2008/03/08]
\usepackage[all]{xy}
\usepackage[pdfusetitle,
 bookmarks=true,bookmarksnumbered=false,bookmarksopen=false,
 breaklinks=true,pdfborder={0 0 0},pdfborderstyle={},backref=false,colorlinks=true]
 {hyperref}
\hypersetup{
 ocgcolorlinks,linkcolor={dark-red},citecolor={dark-blue},urlcolor={blue}}

\makeatletter

\g@addto@macro\bfseries{\boldmath}
\usepackage[fixxits,nomathstyle,realVert]{perfectcut}
\usepackage{ifxetex}
\newif\iffontspec
\ifxetex\fontspectrue\else\fontspecfalse\fi

\usepackage{trimclip}
\usepackage{mleftright}
\mleftright

\makeatother
\def\Xymatrixone#1{\xymatrix@1}
\makeatletter

\providecommand{\tabularnewline}{\\}

      \@ifundefined{mdframed}{\usepackage{mdframed}}{}
\newenvironment{boiteombree}
{\begin{mdframed}[innerleftmargin=0pt,innerrightmargin=0pt,shadow=true,shadowsize=2pt,shadowcolor=black!8]\vspace*{-1.5ex}}
{\vspace*{1ex}\end{mdframed}\vspace*{-2.5ex}}
\@ifundefined{varwidth}{\usepackage{varwidth}}{}

\newrefformat{lem}{Lemma \ref{#1}}
\newrefformat{thm}{Theorem \ref{#1}}
\newrefformat{prop}{Proposition \ref{#1}}
\newrefformat{cor}{Corollary \ref{#1}}
\newrefformat{defn}{Definition \ref{#1}}

\usepackage{tikz}
\usepackage{tikz-cd}
\usetikzlibrary{matrix}
\usetikzlibrary{arrows}
\usetikzlibrary{arrows.meta}
\usetikzlibrary{decorations.pathmorphing,shapes}
\usetikzlibrary{decorations.markings}

\tikzset{spanmap/.style={
            decoration={markings,
            mark= at position 0.5 with {
                  \node[transform shape] (tempnode) {$|$};
}
              },
              postaction={decorate}
}
}

\tikzcdset{every arrow/.append style = -{Computer Modern Rightarrow[scale=.7]}}

\usepackage{caption}
\usepackage{cprotect}
\usepackage{etoolbox}
\PassOptionsToPackage{caption=false}{subfig}

\usepackage{adjustbox}
\usepackage{mdframed}
\usepackage{xcolor}
  \definecolor{dark-red}{rgb}{0.6,0,0}
  \definecolor{dark-blue}{rgb}{0,0.0,0.6}
  \definecolor{blue}{rgb}{0,0,0.8}

\usepackage[expansion=false]{microtype}

\usepackage{mlmodern}
\usepackage{fdsymbol}\usepackage[oldbold,scr=boondox,scrscaled=1.07]{mathalfa}

\usepackage{comment}

\usepackage{etex}
\usepackage{bussproofs}
\usepackage{mleftright}
\mleftright
\usepackage[realVert]{perfectcut}
\usepackage[all]{xy}
\SelectTips{eu}{10}\UseTips

\hypersetup{unicode,psdextra}

\proofSkipAmount{\vskip1ex plus.8ex minus.4ex}

\usepackage{multirow}

\def\forceAfour{
  \setlength{\hoffset}{0in}
  \setlength{\voffset}{0in}
\KOMAoptions{paper=A4,DIV=18}
\recalctypearea
}
\forceAfour{}

\g@addto@macro\bfseries{\boldmath}

\usepackage{enumitem}

\usepackage{flushend}

\usepackage[expansion=false]{microtype}

\AtBeginDocument{
\setcounter{biburllcpenalty}{1000}
\setcounter{biburlucpenalty}{2000}
\setcounter{biburlnumpenalty}{200}

\DeclareFieldFormat[article]{number}{\mkbibparens{#1}}
\AtBeginBibliography{
\sloppy
}
}

\ifdefined\showcaptionsetup
\PassOptionsToPackage{caption=false}{subfig}
\fi
\usepackage{subfig}
\makeatother

\providecommand\theoremname{Theorem}
\theoremstyle{plain}
\newtheorem{thm}{\protect\theoremname}
\providecommand\propositionname{Proposition}
\newtheorem{prop}[thm]{\protect\propositionname}
\providecommand\definitionname{Definition}
\theoremstyle{definition}
\newtheorem{defn}[thm]{\protect\definitionname}
\providecommand\examplename{Example}
\newtheorem{example}[thm]{\protect\examplename}
\providecommand\corollaryname{Corollary}
\theoremstyle{plain}
\newtheorem{cor}[thm]{\protect\corollaryname}
\providecommand\lemmaname{Lemma}
\newtheorem{lem}[thm]{\protect\lemmaname}
\providecommand\remarkname{Remark}
\theoremstyle{remark}
\newtheorem{rem}[thm]{\protect\remarkname}
\usepackage[bibstyle=ext-authoryear,citestyle=ext-authoryear-comp,sortcites=false,date=year,maxcitenames=3, mincitenames=2,maxbibnames=10, minbibnames=10]{biblatex}
\addto\captionsbritish{\newrefformat{cor}{Corollary \ref{#1}}}
\addto\captionsbritish{\newrefformat{def}{Definition \ref{#1}}}
\addto\captionsbritish{\newrefformat{enu}{Item~\ref{#1}}}
\addto\captionsbritish{\newrefformat{exa}{Example \ref{#1}}}
\addto\captionsbritish{\newrefformat{lem}{Lemma \ref{#1}}}
\addto\captionsbritish{\newrefformat{par}{Paragraph~\ref{#1}}}
\addto\captionsbritish{\newrefformat{prop}{Proposition \ref{#1}}}
\addto\captionsbritish{\newrefformat{rem}{Remark \ref{#1}}}
\addto\captionsbritish{\newrefformat{sec}{Section~\ref{#1}}}
\addto\captionsbritish{\newrefformat{subsec}{Section~\ref{#1}}}
\addto\captionsbritish{\newrefformat{thm}{Theorem \ref{#1}}}
\addto\captionsbritish{\renewcommand{\corollaryname}{Corollary}}
\addto\captionsbritish{\renewcommand{\definitionname}{Definition}}
\addto\captionsbritish{\renewcommand{\examplename}{Example}}
\addto\captionsbritish{\renewcommand{\lemmaname}{Lemma}}
\addto\captionsbritish{\renewcommand{\propositionname}{Proposition}}
\addto\captionsbritish{\renewcommand{\remarkname}{Remark}}
\addto\captionsbritish{\renewcommand{\theoremname}{Theorem}}
\addto\captionsenglish{\newrefformat{cor}{Corollary \ref{#1}}}
\addto\captionsenglish{\newrefformat{def}{Definition \ref{#1}}}
\addto\captionsenglish{\newrefformat{enu}{Item~\ref{#1}}}
\addto\captionsenglish{\newrefformat{exa}{Example \ref{#1}}}
\addto\captionsenglish{\newrefformat{fn}{Footnote~\ref{#1}}}
\addto\captionsenglish{\newrefformat{lem}{Lemma \ref{#1}}}
\addto\captionsenglish{\newrefformat{par}{Paragraph~\ref{#1}}}
\addto\captionsenglish{\newrefformat{prop}{Proposition \ref{#1}}}
\addto\captionsenglish{\newrefformat{rem}{Remark \ref{#1}}}
\addto\captionsenglish{\newrefformat{sec}{Section~\ref{#1}}}
\addto\captionsenglish{\newrefformat{subsec}{Section~\ref{#1}}}
\addto\captionsenglish{\newrefformat{thm}{Theorem \ref{#1}}}
\addto\captionsenglish{\renewcommand{\corollaryname}{Corollary}}
\addto\captionsenglish{\renewcommand{\definitionname}{Definition}}
\addto\captionsenglish{\renewcommand{\examplename}{Example}}
\addto\captionsenglish{\renewcommand{\lemmaname}{Lemma}}
\addto\captionsenglish{\renewcommand{\propositionname}{Proposition}}
\addto\captionsenglish{\renewcommand{\remarkname}{Remark}}
\addto\captionsenglish{\renewcommand{\theoremname}{Theorem}}
\newrefformat{cor}{Corollary \ref{#1}}
\newrefformat{def}{Definition \ref{#1}}
\newrefformat{enu}{Item~\ref{#1}}
\newrefformat{exa}{Example \ref{#1}}
\newrefformat{fig}{\figurename~\ref{#1}}
\newrefformat{fn}{Footnote~\ref{#1}}
\newrefformat{lem}{Lemma \ref{#1}}
\newrefformat{par}{Paragraph~\ref{#1}}
\newrefformat{prop}{Proposition \ref{#1}}
\newrefformat{rem}{Remark \ref{#1}}
\newrefformat{sec}{Section~\ref{#1}}
\newrefformat{subsec}{Section~\ref{#1}}
\newrefformat{thm}{Theorem \ref{#1}}

\begin{document}
\selectlanguage{english}\global\long\def\proofsep{\mathbin{\vdash}}

\global\long\def\fCenter{\proofsep}

\global\long\def\prooflabelformat#1{{\scriptstyle \textrm{#1}}}

\global\long\def\binaryprimitive#1#2{\BinaryInf$#1\fCenter#2$}

\global\long\def\axiomprimitive#1#2{\Axiom$#1\fCenter#2$}

\global\long\def\unaryprimitive#1#2{\UnaryInf$#1\fCenter#2$}

\global\long\def\trinaryprimitive#1#2{\TrinaryInf$#1\fCenter#2$}

\global\long\def\bussproof#1{#1\DisplayProof}

\global\long\def\axiomc#1{\AxiomC{\ensuremath{#1}}}

\global\long\def\axiom#1#2{\axiomprimitive{#1}{#2}}

\global\long\def\binaryinfcdecorated#1#2#3#4#5{#1#2#5\RightLabel{\ensuremath{\prooflabelformat{#4}}}\BinaryInfC{\ensuremath{#3}}}

\global\long\def\trinaryinfcdecorated#1#2#3#4#5#6{#1#2#3#6\RightLabel{\ensuremath{\prooflabelformat{#5}}}\TrinaryInfC{\ensuremath{#4}}}

\global\long\def\unaryinfcdecorated#1#2#3#4{#1#4\RightLabel{\ensuremath{\prooflabelformat{#3}}}\UnaryInfC{\ensuremath{#2}}}

\global\long\def\binaryinfdecorated#1#2#3#4#5#6{#1#2#6\RightLabel{\ensuremath{\prooflabelformat{#5}}}\binaryprimitive{#3}{#4}}

\global\long\def\trinaryinfdecorated#1#2#3#4#5#6#7{#1#2#3#7\RightLabel{\ensuremath{\prooflabelformat{#6}}}\trinaryprimitive{#4}{#5}}

\global\long\def\unaryinfdecorated#1#2#3#4#5{#1#5\RightLabel{\ensuremath{\prooflabelformat{#4}}}\unaryprimitive{#2}{#3}}

\global\long\def\binaryinfc#1#2#3#4{\binaryinfcdecorated{#1}{#2}{#3}{#4}{}}

\global\long\def\trinaryinfc#1#2#3#4#5{\trinaryinfcdecorated{#1}{#2}{#3}{#4}{#5}{}}

\global\long\def\unaryinfc#1#2#3{\unaryinfcdecorated{#1}{#2}{#3}{}}

\global\long\def\binaryinf#1#2#3#4#5{\binaryinfdecorated{#1}{#2}{#3}{#4}{#5}{}}

\global\long\def\trinaryinf#1#2#3#4#5#6{\trinaryinfdecorated{#1}{#2}{#3}{#4}{#5}{#6}{}}

\global\long\def\unaryinf#1#2#3#4{\unaryinfdecorated{#1}{#2}{#3}{#4}{}}

\global\long\def\binaryinfcdouble#1#2#3#4{\binaryinfcdecorated{#1}{#2}{#3}{#4}{\doubleLine}}

\global\long\def\trinaryinfcdouble#1#2#3#4#5{\trinaryinfcdecorated{#1}{#2}{#3}{#4}{#5}{\doubleLine}}

\global\long\def\unaryinfcdouble#1#2#3{\unaryinfcdecorated{#1}{#2}{#3}{\doubleLine}}

\global\long\def\binaryinfdouble#1#2#3#4#5{\binaryinfdecorated{#1}{#2}{#3}{#4}{#5}{\doubleLine}}

\global\long\def\trinaryinfdouble#1#2#3#4#5#6{\trinaryinfdecorated{#1}{#2}{#3}{#4}{#5}{#6}{\doubleLine}}

\global\long\def\unaryinfdouble#1#2#3#4{\unaryinfdecorated{#1}{#2}{#3}{#4}{\doubleLine}}

\global\long\def\binaryinfcdotted#1#2#3#4{\binaryinfcdecorated{#1}{#2}{#3}{#4}{\dottedLine}}

\global\long\def\trinaryinfcdotted#1#2#3#4#5{\trinaryinfcdecorated{#1}{#2}{#3}{#4}{#5}{\dottedLine}}

\global\long\def\unaryinfcdotted#1#2#3{\unaryinfcdecorated{#1}{#2}{#3}{\dottedLine}}

\global\long\def\binaryinfdotted#1#2#3#4#5{\binaryinfdecorated{#1}{#2}{#3}{#4}{#5}{\dottedLine}}

\global\long\def\trinaryinfdotted#1#2#3#4#5#6{\trinaryinfdecorated{#1}{#2}{#3}{#4}{#5}{#6}{\dottedLine}}

\global\long\def\unaryinfdotted#1#2#3#4{\unaryinfdecorated{#1}{#2}{#3}{#4}{\dottedLine}}

\global\long\def\axrulesp#1#2#3#4{\bussproof{\unaryinf{\axiomc{\vphantom{#4}}}{#1}{#2}{#3}}}

\global\long\def\axrulespdouble#1#2#3#4{\bussproof{\unaryinfdouble{\axiomc{\vphantom{#4}}}{#1}{#2}{#3}}}

\global\long\def\axrulespdotted#1#2#3#4{\bussproof{\unaryinfdotted{\axiomc{\vphantom{#4}}}{#1}{#2}{#3}}}

\global\long\def\axrule#1#2#3{\axrulesp{#1}{#2}{#3}.}

\global\long\def\axruledouble#1#2#3{\axrulespdouble{#1}{#2}{#3}.}

\global\long\def\axruledotted#1#2#3{\axrulespdotted{#1}{#2}{#3}.}

\global\long\def\unrule#1#2#3#4#5{\bussproof{\unaryinf{\axiom{#1}{#2}}{#3}{#4}{#5}}}

\global\long\def\unruledouble#1#2#3#4#5{\bussproof{\unaryinfdouble{\axiom{#1}{#2}}{#3}{#4}{#5}}}

\global\long\def\unruledotted#1#2#3#4#5{\bussproof{\unaryinfdotted{\axiom{#1}{#2}}{#3}{#4}{#5}}}

\global\long\def\unrulec#1#2#3{\bussproof{\unaryinfc{\axiomc{#1}}{#2}{#3}}}

\global\long\def\unrulecdouble#1#2#3{\bussproof{\unaryinfcdouble{\axiomc{#1}}{#2}{#3}}}

\global\long\def\unrulecdotted#1#2#3{\bussproof{\unaryinfcdotted{\axiomc{#1}}{#2}{#3}}}

\global\long\def\unrulecsep#1#2#3#4#5{\bussproof{\unaryinfc{\axiomc{#1\proofsep#2}}{#3\proofsep#4}{#5}}}

\global\long\def\unrulecsepdouble#1#2#3#4#5{\bussproof{\unaryinfcdouble{\axiomc{#1\proofsep#2}}{#3\proofsep#4}{#5}}}

\global\long\def\unrulecsepdotted#1#2#3#4#5{\bussproof{\unaryinfcdotted{\axiomc{#1\proofsep#2}}{#3\proofsep#4}{#5}}}

\global\long\def\binrule#1#2#3#4#5#6#7{\bussproof{\binaryinf{\axiom{#1}{#2}}{\axiom{#3}{#4}}{#5}{#6}{#7}}}

\global\long\def\binruledouble#1#2#3#4#5#6#7{\bussproof{\binaryinfdouble{\axiom{#1}{#2}}{\axiom{#3}{#4}}{#5}{#6}{#7}}}

\global\long\def\binruledotted#1#2#3#4#5#6#7{\bussproof{\binaryinfdotted{\axiom{#1}{#2}}{\axiom{#3}{#4}}{#5}{#6}{#7}}}

\global\long\def\binrulec#1#2#3#4{\bussproof{\binaryinfc{\axiomc{#1}}{\axiomc{#2}}{#3}{#4}}}

\global\long\def\binrulecdouble#1#2#3#4{\bussproof{\binaryinfcdouble{\axiomc{#1}}{\axiomc{#2}}{#3}{#4}}}

\global\long\def\binrulecdotted#1#2#3#4{\bussproof{\binaryinfcdotted{\axiomc{#1}}{\axiomc{#2}}{#3}{#4}}}

\global\long\def\axrulec#1#2{\unrulec{}{#1}{#2}}

\global\long\def\axrulecdouble#1#2{\unrulecdouble{}{#1}{#2}}

\global\long\def\axrulecdotted#1#2{\unrulecdotted{}{#1}{#2}}

\global\long\def\xyhelp{{\scriptscriptstyle \begin{array}{lc}
\mathrm{\backslash xyarrow} & \text{arrow }\texttt{udlr}\\
\mathrm{\backslash xyarrowdir} & \mathrm{\backslash hole}\\
\mathrm{\backslash xyembed},\mathrm{\backslash xymapsto} & \text{style (tail shaft head) }\texttt{=.:\textasciitilde-->>||(//ox}\\
 & \text{style variant }\texttt{\ensuremath{\mathrm{\backslash sp}}\ensuremath{\mathrm{\backslash sb}}23}\\
\mathrm{\backslash xy\{port,starboard,break\}} & \text{curve,offset }\texttt{\ensuremath{\pm}length}\\
\mathrm{\backslash xynoarrow} & \text{curve direction}\texttt{ul,ur}\\
 & \text{placement }\texttt{-<><<(0.4)!{[r];[d]}*<modifiers>}
\end{array}}}

\global\long\def\xymathcomment#1{\{\}}

\newcommandx\xyarrow[5][usedefault, addprefix=\global, 1=, 2=->, 3=, 4=0ex, 5=0ex]{\ar@#3{#2}@/^{#4}/@<#5>[#1]}

\newcommandx\xyembed[5][usedefault, addprefix=\global, 1=, 2=, 3=, 4=0ex, 5=0ex]{\ar@{^{(}->}@/^{#4}/@<#5>[#1]}

\newcommandx\xymapsto[5][usedefault, addprefix=\global, 1=, 2=, 3=, 4=0ex, 5=0ex]{\ar@{|->}@/^{#4}/@<#5>[#1]}

\newcommandx\xyarrowdir[5][usedefault, addprefix=\global, 1=, 2=->, 3=, 4={ur,dr}, 5=0ex]{\ar@#3{#2}@(#4)@<#5>[#1]}

\newcommandx\xynoarrow[4][usedefault, addprefix=\global, 1=, 2=, 3=-, 4=0ex]{\ar@{}@<#4>[#1]|#3{#2}}

\global\long\def\xymid#1#2#3{\csname@firstofone\endcsname#1|#3{#2}}

\global\long\def\xyport#1#2#3{\csname@firstofone\endcsname#1\sp#3{#2}}

\global\long\def\xystarboard#1#2#3{\csname@firstofone\endcsname#1\sb#3{#2}}

\global\long\def\xyR#1{\csname xydef@\expandafter\endcsname\csname xymatrixrowsep@\endcsname{#1}}

\global\long\def\xyC#1{\csname xydef@\expandafter\endcsname\csname xymatrixcolsep@\endcsname{#1}}

\newcommandx\xyresize[3][usedefault, addprefix=\global, 1=2pc, 2=2pc]{\xyR{#1}\xyC{#2}#3\xyR{2pc}\xyC{2pc}}

\global\long\def\xyadjarrows#1#2#3{\xyport{\xyarrow[#1][][][1pc]}{#2}{}\xynoarrow[#1][{\textstyle \bot}]\xystarboard{\xyarrow[#1][<-][][-1pc]}{#3}{} }

\global\long\def\xyadjcustom#1#2#3#4#5#6{\xyport{\xyarrow[#1][#4][][1pc]}{#2}{}\xynoarrow[#1][{\textstyle #6}]\xystarboard{\xyarrow[#1][#5][][-1pc]}{#3}{} }

\global\long\def\Operatorname#1{\operatorname{#1}}

\global\long\def\rotxc#1{\begin{sideways}#1\end{sideways}}

\global\long\def\invert#1{\hbox{\rotxc{\rotxc{\ensuremath{#1}}}}}

\global\long\def\lyxmathchoice#1#2#3#4{\mathchoice{#1}{#2}{#3}{#4}}

\global\long\def\mathraise#1#2#3#4{\lyxmathchoice{\raisebox{#2}{${\displaystyle {#1}}$}}{\raisebox{#2}{$#1$}}{\raisebox{#3}{${\scriptstyle {#1}}$}}{\raisebox{#4}{${\scriptscriptstyle {#1}}$}}}

\global\long\def\mathraiseord#1#2#3#4{\mathord{\mathraise{#1}{#2}{#3}{#4}}}

\global\long\def\ltxifnextchar{\csname@ifnextchar\endcsname}

\global\long\def\df{\iffontspec\eqdef\else\overset{\raisebox{-0.4ex}{\ensuremath{{\scriptscriptstyle \textup{def}}}}}{=}\fi}

\global\long\def\set#1#2{\perfectbinary{IncreaseHeight}\{|\}{#1\mathrel{}}{\mathrel{}#2}}

\global\long\def\shpos{\mathraiseord{\downarrow}{0.22ex}{0.14ex}{0.11ex}\ltxifnextchar\shneg{\mkern-2.8mu  }{}}

\global\long\def\shneg{\mathraiseord{\uparrow}{0.29ex}{0.21ex}{0.14ex}\ltxifnextchar\shpos{\mkern-2.8mu  }{}}

\global\long\def\Shpos{\mathraiseord{\Downarrow}{0.2ex}{0.15ex}{0.09ex}\ltxifnextchar\Shneg{\iffontspec\mkern-5.3mu  \else\mkern-2.5mu  \fi}{}}

\global\long\def\Shneg{\mathraiseord{\Uparrow}{0.28ex}{0.21ex}{0.14ex}\ltxifnextchar\Shpos{\iffontspec\mkern-5mu  \else\mkern-2.5mu  \fi}{}}

\global\long\def\wpar{\mathbin{\iffontspec\upand\else\invert{\&}\fi}}

\global\long\def\oc{\mathord{!}}

\global\long\def\ocv{\mathord{\textrm{\textup{!`}}}}

\global\long\def\wn{\mathord{?}}

\global\long\def\with{\mathbin{\&}}

\global\long\def\falso{\mathbf{\boldsymbol{0}}}

\global\long\def\unite{\mathbf{\boldsymbol{1}}}

\global\long\def\mixarrow{\rightarrowtriangle}

\global\long\def\system#1{\mathbf{#1}}

\global\long\def\LL{\system{LL}}

\global\long\def\ILL{\system{ILL}}

\global\long\def\LK{\system{LK}}

\global\long\def\LLP{\system{LLP}}

\global\long\def\LC{\system{LC}}

\global\long\def\LJ{\system{LJ}}

\global\long\def\NJ{\system{NJ}}

\global\long\def\MALL{\system{MALL}}

\global\long\def\LU{\system{LU}}

\global\long\def\polsystem#1{\texorpdfstring{{\system{#1}^{\mathrlap{\eta}}}_{\mkern-1.5mu  p}}{#1\eta p}}

\global\long\def\MLJp{\polsystem{MLJ}}

\global\long\def\IMLLp{\polsystem{IMLL}}

\global\long\def\LJp{\polsystem{LJ}}

\global\long\def\IMALLp{\polsystem{IMALL}}

\global\long\def\IMELLp{\polsystem{IMELL}}

\global\long\def\ILLp{\polsystem{ILL}}

\global\long\def\LKp{\polsystem{LK}}

\global\long\def\LLp{\polsystem{LL}}

\global\long\def\subtype{\mathrel{\mathord{<}\mkern-1.5mu  {\vcentcolon}}}

\global\long\def\otimescriptstyle{{\raisebox{0.07ex}{\hbox{\mbox{\ensuremath{{\scriptstyle \otimes}}}}}}}

\global\long\def\otimescriptscriptstyle{{\raisebox{0.05ex}{\hbox{\mbox{\ensuremath{{\scriptscriptstyle \otimes}}}}}}}

\global\long\def\smallotimes{\mkern1mu  {\lyxmathchoice{\otimescriptstyle}{\otimescriptstyle}{\otimescriptscriptstyle}{\otimescriptscriptstyle}}\mkern1mu  }

\global\long\def\keyword#1{\Operatorname{\textsf{#1}}}

\global\long\def\bnf{\iffontspec\Coloneq\else\Coloneqq\fi}

\global\long\def\bmid{\mathrel{\boldsymbol{\nthleft{1}|}}}

\global\long\def\moins{\circleddash}

\global\long\def\plus{+}

\global\long\def\push{\mathord{\cdot}}

\global\long\def\pleft{\pi_{1}}

\global\long\def\pright{\pi_{2}}

\global\long\def\pith{\pi_{i}}

\global\long\def\ppair#1#2{{<}\mspace{1mu}#1\mspace{1mu}\mathord{;}\mspace{3mu}#2\mspace{1mu}{>}}

\global\long\def\ileft{\iota_{1}}

\global\long\def\iright{\iota_{2}}

\global\long\def\iith{\iota_{i}}

\global\long\def\mt{\smash{\tilde{\mu}}\vphantom{\mu}}

\global\long\def\lmmt{\bar{\lambda}\mu\mt}

\global\long\def\mustar{\mu\mathord{\star}}

\global\long\def\cut#1#2{\cutprimitive{#1}{#2}}

\global\long\def\Bra#1{\cutbraprimitive{#1}}

\global\long\def\Ket#1{\cutketprimitive{#1}}

\global\long\def\parens#1{\perfectparens{#1}}

\global\long\def\tphat{{\raisebox{-0.5ex}{\hbox{\ensuremath{\hat{\hphantom{e}}}}}}}

\global\long\def\muhat{\hat{\mu}}

\global\long\def\contor#1#2#3#4{\perfectcase{{#1.}#2\mkern2mu  }{\mkern2mu  {#3.}#4}}

\global\long\def\mui#1#2#3#4{\mt\contor{#1}{#2}{#3}{#4}}

\global\long\def\mupair#1#2#3#4{\mu\ppair{{#1}.#2}{{#3}.#4}}

\global\long\def\term#1{\mu{{<}#1{>}}}

\global\long\def\init#1{\mt[#1]}

\global\long\def\curlybraces#1{\perfectunary{CurrentHeight}\lbrace\rbrace{#1}}

\global\long\def\brackets#1{\perfectbrackets{#1}}

\global\long\def\substslash#1#2{\csname cut@computeBinary@CurrentHeightPlusOne\endcsname{}{#1}{\nthmiddle{\csname cut@n\endcsname}/}{#2}{}}

\global\long\def\bracketsmid#1#2{\csname cut@computeBinary@CurrentHeightPlusOne\endcsname{}{#1}{\mathbin{\nthmiddle{\csname cut@n\endcsname}|}}{#2}{}}

\global\long\def\subst#1#2{\perfectunary{CurrentHeightPlusOne}[]{\substslash{#1}{#2}}}

\global\long\def\expsubst#1#2{\perfectunary{CurrentHeightPlusOne}\{\}{\substslash{#1}{#2}}}

\global\long\def\mkernsp#1#2{\ltxifnextchar#2{\csname mkern\endcsname#1}{}}

\global\long\def\varp#1{#1^{{\mkern-1mu  \mathord{\plus}\mkern-2mu  }}\mkernsp{-3mu}.}

\global\long\def\varn#1{#1^{{\moins}\mkern-1mu  }\mkernsp{-3mu}.}

\global\long\def\termp#1{#1_{\smash{\plus}}}

\global\long\def\termn#1{#1_{\smash{\moins}}}

\global\long\def\eps#1{\varepsilon_{#1}}\global\long\def\epsprime#1{\varepsilon'_{#1}}

\global\long\def\vareps#1#2{#1^{\mkern0.5mu  {\eps{#2}}\mkern-1mu  }\mkernsp{-2mu}.}

\global\long\def\termeps#1#2{#1_{\smash{\eps{#2}}}}

\global\long\def\varepsprime#1#2{#1^{\mkern0.5mu  {\epsprime{#2}}\mkern-1mu  }\mkernsp{-2mu}.}

\global\long\def\termepsprime#1#2{#1_{\smash{\epsprime{#2}}}}

\global\long\def\tbrack#1{\curlybraces{#1}}

\global\long\def\ebrack#1{\curlybraces{#1}}

\global\long\def\mtbrack#1{\mt\{#1\}}

\global\long\def\mubrack#1{\mu\{#1\}}

\global\long\def\letin#1{\keyword{let}#1\keyword{in}}

\global\long\def\matchwith#1{\keyword{match}#1\keyword{with}}

\global\long\def\be{\keyword{be}}

\global\long\def\addsubst#1#2{,\substslash{#1}{#2}}

\global\long\def\variableset#1#2{\Operatorname{\textbf{#1}}{#2}}

\global\long\def\v#1{\variableset v{#1}}

\global\long\def\fv#1{\variableset{fv}{#1}}

\global\long\def\Bracks#1{[\mkern-3mu  [#1]\mkern-3mu  ]}

\global\long\def\dispatch#1#2#3#4#5{\delta(#1,\allowbreak#2.#3,\allowbreak#4.#5)}

\global\long\def\L{\system L}

\global\long\def\argrel#1#2#3{\mathrel{#3_{\cramped{\textup{#1}\ifx\empty#1\empty#2\else_{#2}\fi}}}}

\global\long\def\argred#1#2{\argrel{#1}{#2}{\vartriangleright}}

\global\long\def\argnred#1#2{\iffontspec\argrel{#1}{#2}{\nvartriangleright}\else\argrel{#1}{#2}{\not\vartriangleright}\fi}

\global\long\def\argredinv#1#2{\argrel{#1}{#2}{\vartriangleleft}}

\global\long\def\argRed#1#2{\argrel{#1}{#2}{\rightarrow}}

\global\long\def\argRedinv#1#2{\argrel{#1}{#2}{\leftarrow}}

\global\long\def\argPar#1#2{\argrel{#1}{#2}{\Rightarrow}}

\global\long\def\argParinv#1#2{\argrel{#1}{#2}{\Leftarrow}}

\global\long\def\argreds#1#2{\argrel{#1}{#2}{\vartriangleright^{\ast}}}

\global\long\def\argredinvs#1#2{\argrel{#1}{#2}{\vartriangleleft^{\ast}}}

\global\long\def\argReds#1#2{\argrel{#1}{#2}{\rightarrow^{\ast}}}

\global\long\def\argRedinvs#1#2{\argrel{#1}{#2}{\leftarrow^{\ast}}}

\global\long\def\argPars#1#2{\argrel{#1}{#2}{\Rightarrow^{\ast}}}

\global\long\def\argParinvs#1#2{\argrel{#1}{#2}{\Leftarrow^{\ast}}}

\global\long\def\argeq#1#2{\argrel{#1}{#2}{\simeq}}

\global\long\def\argneq#1#2{\iffontspec\argrel{#1}{#2}{\nsimeq}\else\argrel{#1}{#2}{\not\simeq}\fi}

\global\long\def\machred#1{\succ_{#1}}

\global\long\def\machredinv#1{\prec_{#1}}

\global\long\def\group#1{{#1}}

\global\long\def\visiblelefteqn#1{\lefteqn{#1}}

\global\long\def\mathcomment#1{\{\}}

\global\long\def\proofsep{\mathrel{\vphantom{(\mid c'}{\vdash}}}

\global\long\def\cond#1{\tag{\ensuremath{#1}}}

\global\long\def\relrel#1{\mathord{}\mathrel{#1}\mathord{}}

\DeclarePairedDelimiter\abs{\lvert}{\rvert}
\DeclarePairedDelimiter\norm{\lVert}{\rVert}

\global\long\def\abs#1{\left|#1\right|}

\global\long\def\norm#1{\left\Vert #1\right\Vert }

\global\long\def\dom#1{\Operatorname{dom}#1}

\global\long\def\id{\mathrm{id}}

\global\long\def\Id{\mathrm{Id}}

\global\long\def\idfunctor{1}

\global\long\def\obj#1{\abs{#1}}

\global\long\def\op#1{{#1}^{\textup{op}}}

\global\long\def\inv#1{#1^{-1}}

\global\long\def\restr#1#2{{#1_{\upharpoonright#2}}}

\global\long\def\dfequiv{\overset{{\scriptscriptstyle \textrm{def}}}{\Longleftrightarrow}}

\global\long\def\reflect{\mathbin{\triangleleft}}

\global\long\def\sha#1{#1^{\sharp}}

\global\long\def\fla#1{#1^{\flat}}

\global\long\def\adj#1{#1^{\ast}}

\global\long\def\Set#1{\big\{#1\big\}}

\global\long\def\setmid{\,\big|\,}

\global\long\def\C{\mathscr{C}}

\global\long\def\Ccal{\mathcal{C}}

\global\long\def\scL{\mathscr{L}}

\global\long\def\scC{\mathscr{C}}

\global\long\def\scV{\mathscr{V}}

\global\long\def\bbN{\iffontspec\BbbN\else\mathbb{N}\fi}

\global\long\def\to{\mathbin{\rightarrow}}

\global\long\def\dotarrow{\mathrel{\mkern-2mu  \dot{\mkern2mu  \rightarrow}}}

\global\long\def\nattransf{\dotarrow}

\global\long\def\ddotarrow{\mathrel{\!\ddot{\,\rightarrow}}}

\global\long\def\embed{\hookrightarrow}

\global\long\def\simrightarrow{\iffontspec\similarrightarrow\else\stackrel{\sim}{\rightarrow}\fi}

\global\long\def\farrow#1#2#3{#2\xrightarrow{#1}#3}

\global\long\def\farr#1{\mathrel{{\relbar}\raisebox{0.2ex}{\ensuremath{{\scriptstyle #1}}}\mkern-1mu  {\rightarrow}}}

\global\long\def\stackrightleftarrows#1#2{\mathrel{\underset{#2}{\overset{#1}{\rightleftarrows}}}}

\global\long\def\stackleftrightarrows#1#2{\mathrel{\underset{#2}{\overset{#1}{\leftrightarrows}}}}

\global\long\def\labelrightarrowcompact#1{\mathrel{\xrightarrow{#1}}}

\def\labelrightarrowcompact#1#2{\mbox{$#1\ifx#1\textstyle \clipbox{0em 0em {.4\width} 0em}{$\vphantom{A}{\relbar}$}\mkern0mu\raisebox{0.3ex}{$\scriptstyle#2$}\mkern0mu\clipbox{{.3\width} 0em 0em 0em}{$\rightarrow$}\else \xrightarrow{#2}\fi $}}

\global\long\def\labelrightarrow#1{\mathbin{\mathpalette\labelrightarrowcompact{#1}}}

\global\long\def\morphism#1{\labelrightarrow{#1}}

\global\long\def\natiso{\mathrel{{\textstyle \cong}}}

\global\long\def\spancols#1#2{\csname multicolumn\endcsname{#2}{c}{#1}}

\global\long\def\spanrows#1#2{\multirow{#2}{*}{\ensuremath{#1}}}

\global\long\def\spancolsleft#1#2{\csname multicolumn\endcsname{#2}{l}{#1}}

\global\long\def\spantwo#1{\csname multicolumn\endcsname{2}{c}{#1}}

\global\long\def\linespacing#1{\rule[-#1]{0pt}{0pt}}

\global\long\def\mathvspace{\vphantom{\rule{0pt}{1.75em}}}

\global\long\def\qedmultiline{\tag*{\qedhere}}

\global\long\def\where{\;\quad\text{where}\;\quad}

\global\long\def\when{\;\quad\text{when}\;\quad}

\global\long\def\textor{\;\quad\text{or}\;\quad}

\global\long\def\addstretchamount{2.5}

\global\long\def\addsmallstretchamount{1.3}

\newcommandx\addcustomstretch[3][usedefault, addprefix=\global, 1=\addstretchamount, 2=0.0em]{\begingroup\edef\arraystretch{#1}\addtolength\arraycolsep{#2}#3\endgroup}

\global\long\def\addstretch#1{\addcustomstretch[\addstretchamount][0.0em]{#1}}

\global\long\def\addsmallstretch#1{\addcustomstretch[\addsmallstretchamount][-0.4em]{#1}}

\global\long\def\removestretch#1{\addcustomstretch[1.0][-4pt]{#1}}

\global\long\def\prom#1{\oc{#1}}

\global\long\def\vdashsubscript#1{\mathrel{{\vdash}\!{\ensuremath{_{#1}}}}}

\global\long\def\focvdash{\mathrel{\vdashsubscript f}}

\global\long\def\diam#1{\group{#1}^{\diamond}}

\global\long\def\colon{\mskip1muplus4muminus2mu\mathord{:}\mskip1muplus4muminus2mu}

\global\long\def\ech#1{\perfectbrackets{#1}}

\global\long\def\fresh{\mathrel{\#}}

\global\long\def\varpm#1{#1^{{\pm}}\mkernsp{-3mu}.}

\global\long\def\cs#1#2{#1^{#2\mkern-1mu  }\mkernsp{-3.5mu}.}

\global\long\def\push{\mathord{\cdot}}

\global\long\def\Li{\system L_{\text{int}}}

\global\long\def\yon{\mathbf{y}}

\global\long\def\pcomp{\mathbin{\iffontspec\vysmblkcircle\else\bullet\fi}}

\global\long\def\ncomp{\mathbin{\iffontspec\vysmwhtcircle\else\circ\fi}}

\global\long\def\pair#1#2{\langle#1,#2\rangle}

\global\long\def\dotcirc{\ifx\fisheye\undefined\mathbin{\visiblelefteqn{\mspace{2.4mu}\mathord{\cdotp}}\mathord{\ncomp}}\else\mathbin{\fisheye}\fi}

\global\long\def\polarity{\varpi}

\global\long\def\swash#1{\iffontspec{\text{\fontspec[Style=Swash,Scale=MatchLowercase]{Arno Pro Bold Italic}#1}}\else\boldsymbol{\textrm{\textbf{\textit{#1}}}}\fi}

\def\swasharg#1{\ensuremath{\mathbcal{#1}\mkern-0.5mu}}\def\swash#1{\textbf{\textit{\swasharg #1}}}

\global\long\def\catstyle#1{\swash{#1}}

\global\long\def\Setcat{\catstyle{Set}}

\global\long\def\Cat{\catstyle{Cat}}

\global\long\def\Dupl{\catstyle{Dupl}}

\global\long\def\Adj{\catstyle{Adj}}

\global\long\def\Tf{\catstyle{Tf}}

\global\long\def\profunctor#1#2{\op{#1}\times#2\rightarrow\Setcat}

\global\long\def\extlambda#1#2{\widehat{#2}.#1}

\global\long\def\D{\mathscr{D}}

\global\long\def\N{\mathscr{N}}

\global\long\def\P{\mathscr{P}}

\global\long\def\psh#1{\widehat{#1}}

\global\long\def\enr#1{\underline{#1\mspace{-1mu}}\mspace{1mu}}

\global\long\def\orth#1{{{#1}^{\bot}}}

\global\long\def\biorth#1{{{#1}^{\bot\bot}}}

\global\long\def\interp#1{\big|#1\big|}

\global\long\def\interpv#1{\big\|#1\big\|}

\providecommand{\Vbar}{\mathord{\bot\mkern-10.5mu\bot}}

\global\long\def\Vbar{\mathord{\bot\mkern-10.5mu  \bot}}

\global\long\def\Bot{\mathord{\Vbar}}

\global\long\def\polar{\mathrel{\Bot}}

\global\long\def\nBot{\mathrel{\not\mkern-1.6mu  \polar}}

\global\long\def\padifnempty#1{\ifx\empty#1\empty\else\mkern2mu  #1^{\mkern0.5mu  }\mkern2mu  \fi}

\global\long\def\xyadjnoresize#1#2#3#4#5{\Xymatrixone\xymatrix{#1\ \ar@<0.8ex>[r]|-(.275){#2}\ar@{}@<0.1ex>[r]|-{\scriptscriptstyle #5}  &  \ar@<0.6ex>[l]|-(.275){#3}\ #4}
 }

\global\long\def\xyadjinline#1#2#3#4{\xyresize[][\ifx\empty#2#3\empty1pc\else25pt\fi]{\xyadjnoresize{#1}{\padifnempty{#2}}{\padifnempty{#3}}{#4}{\bot}}}

\global\long\def\xyadjdisplay#1#2#3#4{\xymatrix{#1\ \ar@<1ex>[r]\sp-{#2}\ar@{}[r]|-\bot &  \ar@<1ex>[l]\sp-{#3}\ #4}
 }

\global\long\def\xyadj#1#2#3#4{\mathchoice{\xyadjdisplay{#1}{#2}{#3}{#4}}{\xyadjinline{#1}{#2}{#3}{#4}}{\xyadjinline{#1}{#2}{#3}{#4}}{\xyadjinline{#1}{#2}{#3}{#4}}}

\makeatletter
\def\underline@size#1#2{\setbox0=\hbox{$\m@th#1#2$}\rlap{\resizebox*{\wd0}{\totalheight}{\raisebox{-\dp0}{\hbox{\normalfont\_}}}}\box0}
\def\underline{\mathpalette\underline@size}
\makeatother
\global\def\dashedScoreFiller{\hbox{\raisebox{-0.4ex}[0.1em][0pt]{\normalfont\hspace{0.1em}\textendash\hspace{0.1em}}}}
\global\def\ruleScoreFiller{\hbox{}\hskip0pt\raisebox{-0.45ex}[0.1ex][0pt]{\resizebox{\displace}{\height}{\hbox{\normalfont\textemdash}}}
}
\newcommand{\hlinesmooth}[1]{\raisebox{-0.45ex}[0.1ex][0pt]{\resizebox{#1}{\height}{\hbox{\normalfont\textemdash}}}}
\newcommand{\hlinemed}{\vskip\baselineskip \noindent \hspace*{\fill}\hlinesmooth{0.62\linewidth}
  \hspace*{\fill}}
\newcommand{\hlinefill}{\vskip0.5\baselineskip \noindent \hspace*{\fill}\hlinesmooth{\linewidth}
  \hspace*{\fill}\vskip-1.4\baselineskip }\selectlanguage{british}

\global\long\def\Alg#1{\mathrm{Alg}(#1)}

\global\long\def\setbin#1#2{\perfectbinary{IncreaseHeight}\{|\}{#1\mathrel{}}{\mathrel{}#2}}

\global\long\def\moins{\mathord{-}}

\global\long\def\normalise#1{#1\mathord{\downarrow}}

\global\long\def\size#1{\perfectunary{IncreaseHeight}||{#1}}

\global\long\def\shpos{\mathord{\downarrow}\ltxifnextchar\shneg{\mkern-2.8mu  }{}}\global\long\def\shneg{\mathord{\uparrow}\ltxifnextchar\shpos{\mkern-2.8mu  }{}}

\global\long\def\Shpos{\mathord{\Downarrow}\ltxifnextchar\Shneg{\mkern-2.8mu  }{}}\global\long\def\Shneg{\mathord{\Uparrow}\ltxifnextchar\Shpos{\mkern-2.8mu  }{}}

\global\long\def\mt{\bar{\mu}}\global\long\def\colon{\mathbin{:}}

\global\long\def\pp{\mathord{+\mkern-2mu  +}}

\global\long\def\varpp#1{#1^{{\pp\mkern-2mu  }}\mkernsp{-3mu}.}

\global\long\def\termpp#1{#1_{\smash{\pp}}}

\global\long\def\polsystem#1{\system{#1}_{\mathit{p}}}

\global\long\def\ILLpdiam{\polsystem{ILL}^{\lozenge}}

\global\long\def\focvdash{\mathrel{\vdashsubscript f}}

\global\long\def\Subst#1#2{\Sigma(#1;#2)}

\global\long\def\Substbij#1#2{\Sigma^{\ast}(#1;#2)}

\global\long\def\Substexp#1#2{\Sigma^{\oc}(#1;#2)}

\global\long\def\bnf{\Coloneq}

\global\long\def\tmu{\mt}

\global\long\def\nUparrow{\Shneg}

\global\long\def\nDownarrow{\Shpos}

\global\long\def\op{\mathrm{op}}

\global\long\def\Dduploid{\mathscr{D}}

\global\long\def\Eduploid{\mathscr{E}}

\global\long\def\com{\oc}

\global\long\def\mon{\mkern-0.4mu  \mathord{\medlozenge}\mkern-1.2mu  }

\global\long\def\diam{\mkern-0.5mu  \mathord{\smalllozenge}\mkern-1mu  }\global\long\def\Emon{\mathcal{E}}

\global\long\def\ccomp{\mathbin{\mathrlap{\mathchoice{\mspace{2mu}}{\mspace{2mu}}{\mspace{2mu}}{\mspace{2mu}}\mathord{\cdot}}\mathord{\circ}}}

\global\long\def\adjoint#1#2#3#4{{#1:#2\rightleftarrows#3:#4}}

\global\long\def\graphadjoint#1#2#3#4{{#1:#2\rightleftharpoons#3:#4}}

\global\long\def\objectp#1{#1_{\mathsf{obj}}}

\global\long\def\fstcomponent#1#2{#1^{#2}_{\mathsf{fst}}}

\global\long\def\sndcomponent#1#2{#1^{#2}_{\mathsf{snd}}}

\global\long\def\Acategory{\mathscr{A}}

\global\long\def\Bcategory{\mathscr{B}}

\global\long\def\Ccategory{\mathscr{C}}

\global\long\def\Dcategory{\mathscr{D}}

\global\long\def\Ecategory{\mathscr{E}}

\global\long\def\Mcategory{\mathscr{M}}

\global\long\def\Lcategory{\mathscr{L}}

\global\long\def\Ncategory{\mathscr{N}}

\global\long\def\Pcategory{\mathscr{P}}

\global\long\def\Vcategory{\mathscr{V}}

\global\long\def\Scategory{\mathscr{S}}

\global\long\def\Ncategoryl{\mathscr{N}_{l}}

\global\long\def\Pcategoryt{\mathscr{P}_{t}}

\global\long\def\twocategory{\mathbf{2}}

\global\long\def\kleisli#1#2{\mathbf{Kl}{[#1,#2]}}

\global\long\def\cokleisli#1#2{\mathbf{coKl}{[#1,#2]}}

\global\long\def\bikleisli#1#2#3{\mathbf{biKl}{[#1,#2,#3]}}

\global\long\def\collage#1#2{\mathbf{coll}_{#1,#2}}

\global\long\def\collageofdist#1{\mathbf{coll}_{#1}}

\global\long\def\duploid#1#2{\mathbf{dupl}_{#1,#2}}

\global\long\def\Setcat{\mathcal{S}\mkern-2.5mu  \mathit{et}}

\global\long\def\adjunction#1{\dashv_{#1}}

\global\long\def\coadjunction#1{\mathrel{{_{#1}{\dashv}}}}

\global\long\def\graphto{\rightharpoonup}

\global\long\def\proofsep{\mathbin{\mathord{\vdash}\vphantom{,}}}\global\long\def\pairing#1#2{[#1,#2]}

\global\long\def\Lin{\Operatorname{\mathit{Lin}}}\global\long\def\Mult{\Operatorname{\mathit{Mult}}}

\global\long\def\lstr{\Operatorname{\mathit{lstr}}}\global\long\def\rstr{\Operatorname{\mathit{rstr}}}\global\long\def\str{\Operatorname{\mathit{str}}}

\global\long\def\Oblique{\mathscr{O}}

\global\long\def\extension#1{#1^{*}}

\global\long\def\coextension#1{{^{*}#1}}

\global\long\def\mixarrow{\multimap}

\title{Syntax and semantics of focalisation\\
with relative monads and comonads}
\date{June 12th 2026}
\author{\begin{tabular}{rl}
Éléonore Mangel & {\large\textit{IRIF, Paris}}\tabularnewline[\doublerulesep]
Paul-André Melliès & {\large\textit{IRIF, INRIA, Paris}}\tabularnewline[\doublerulesep]
Guillaume Munch-Maccagnoni & {\large\textit{INRIA, LS2N CNRS, Nantes}}\tabularnewline[\doublerulesep]
\end{tabular}}
\maketitle
\begin{abstract}
The logical principles of focalisation and polarisation can be used
to design well-behaved term syntaxes for sequent calculus, which
play a role as meta-languages for describing effectful computation.
On the semantics side, this corresponds to an axiomatic and polarised
notion of model of computation stated in terms of non-associative
categories as well as adjunctions between ``bare'' functors (reflexive
graph morphisms) over such non-associative categories.

In this paper, we study the special and delicate cases of resource
and effect modalities in a general intuitionistic and linear setting:
an exponential comonad $\com$ (refining the necessity modality $\square$)
and a strong monad (written $\mon$). The starting point of our contribution
is noticing that the completeness for a polarised syntax for $\oc$
and $\lozenge$ with respect to (co)monads in linear call-by-push-value
models can be achieved if we move to \emph{relative} (co)monads \citep{Altenkirch2015,Arkor2024}:
more precisely, comonads relative to $\shpos$ (the positive shift
functor) for $\oc$ and monads relative to $\shneg$ (the negative
shift functor) for $\lozenge$.

These specialisations of the concept of relative (co)monad to call-by-push-value
adjunctions recently appeared in \citet{Jiang2025} and \citet{Mellies2025BLC,Mellies2026TLLA}.
Yet the syntax we present arose from proof-theoretic consideration
in \citet{munchmonolateral,CFM2015}, without the link with relative
(co)monads being noticed at the time. Our first remark and explanation
is thus that (co)monads relative to a call-by-push-value adjunction
have been motivated previously from a proof-theoretic perspective
in the context of focalisation, which also provides a meta-language
for these concepts in an effectful setting.

We carry out the study of these modalities from the axiomatic, non-associative
point of view. We recall the definition of adjunction between bare
functors in this context, and establish correspondence results between
this notion of adjunction and that of relative adjunction. This correspondence
is then extended to linear-non-linear and strong versions of adjunctions
as needed to model $\oc$ and $\lozenge$.
\end{abstract}

\section{From focalisation and call-by-push-value to polarised calculi}

\begin{figure*}[t]
\begin{boiteombree}
\hspace*{\fill}\subfloat[Types]{$\addcustomstretch[1.4][-0.2em]{\begin{array}{crl}
\text{polarities} & \varepsilon & \bnf+\bmid\moins\\
\text{types} & A,B,\termeps A{} & \bnf P\bmid N\\
\text{positive} & P,Q,\termp A & \bnf\varp X\bmid\unite\bmid A\otimes B\bmid A\oplus B\bmid\falso\\
\text{negative} & N,M,\termn A & \bnf\varn X\bmid A\mixarrow B\bmid A\with B\bmid\top
\end{array}}$

}\hspace*{\fill}\subfloat[Judgements]{\begin{minipage}[t]{0.39\columnwidth}\begin{varwidth}{\linewidth}
\setlist{nosep}
\begin{itemize}
\item {\small Judgements are of the form $\Gamma\vdash A$}{\small\par}
\item {\small$\Gamma$ is an ordered list of formulæ $A_{1},\dots,A_{n}$.}{\small\par}
\item {\small$\Gamma,\Gamma'$ denotes the concatenation of $\Gamma$ and
$\Gamma'$.}{\small\par}
\end{itemize}
\end{varwidth}
\end{minipage}\vspace*{\medskipamount}

}\hspace*{\fill}\vspace*{-1ex}

\hspace*{\fill}\subfloat[Identity]{$\addcustomstretch[2.6][0.4em]{\begin{array}{cc}
\axrulesp AA{(ax)}{\vdash} & \binrule{\Gamma}A{\Gamma',A}B{\Gamma,\Gamma'}B{(cut)}\end{array}}$

}\hspace*{\fill}\subfloat[\label{fig:Structure-(exchange)-1-1}Structure]{\begin{raggedright}
\begin{varwidth}{\linewidth}
\begin{flushleft}
$\addcustomstretch[2.6]{\begin{array}{l}
\unrulec{\Gamma,A,B,\Gamma'\vdash C}{\Gamma,B,A,\Gamma'\vdash C}{(ex)}\end{array}}$
\par\end{flushleft}
\end{varwidth}
\par\end{raggedright}
}\hspace*{\fill}\vspace*{-1ex}

\hspace*{\fill}\subfloat[Logic]{$\addcustomstretch[2.6][-0.0em]{\begin{array}{ccc}
\binrule{\Gamma}A{\Gamma',B}C{\Gamma,\Gamma',A\mixarrow B}C{(\ensuremath{{\mixarrow}\vdash})} & \unrulecsep{\Gamma,A}B{\Gamma}{A\mixarrow B}{(\ensuremath{\vdash{\mixarrow}})} & \axrulesp{}{\unite}{(\ensuremath{\vdash\unite})}{\Gamma\vdash}\\
\binrule{\Gamma}A{\Gamma'}B{\Gamma,\Gamma'}{A\otimes B}{(\ensuremath{\vdash\otimes})} & \unrulecsep{\Gamma,A,B}C{\Gamma,A\otimes B}C{(\ensuremath{\otimes\vdash})} & \unrulecsep{\Gamma}C{\Gamma,\unite}C{(\ensuremath{\unite\vdash})}\\
\binrule{\Gamma}A{\Gamma}B{\Gamma}{A\with B}{(\ensuremath{\vdash\with})} & \unrulecsep{\Gamma,A_{i\in\{1,2\}}}C{\Gamma,A_{1}\with A_{2}}C{(\ensuremath{\with_{i}\focvdash})} & \axrulesp{\Gamma}{\top}{(\ensuremath{\vdash\top})}{\Gamma\vdash}\\
\binrule{\Gamma,A}C{\Gamma,B}C{\Gamma,A\oplus B}C{(\ensuremath{\oplus\vdash})} & \unrulecsep{\Gamma}{A_{i\in\{1,2\}}}{\Gamma}{A_{1}\oplus A_{2}}{(\ensuremath{\vdash\oplus_{i}})} & \axrulesp{\Gamma,\falso}A{(\ensuremath{\falso\vdash})}{\Gamma\vdash}
\end{array}}$

}\hspace*{\fill}
\end{boiteombree}
\caption{\label{fig:IMALL}$\protect\system{IMALL}$}
\end{figure*}

In this paper, we are interested in the links between the semantics
of effectful programs featuring linearity, and the proof theory of\emph{
}intuitionistic multiplicative-additive linear sequent calculus~($\system{IMALL}$,
see \prettyref{fig:IMALL}). On the syntactic side we consider a polarised
version of $\system{IMALL}$, more precisely a linear and intuitionistic
version of polarised classical logic described by \citet{girardnew}
and Danos, Joinet and Schellinx \citep{danos95new}, which originated
in a Curry-Howard interpretation of focusing proof search \citep{Andreoli92}
with roots in linear logic \citep{Gir87}. Systems arising from this
perspective are \emph{polarised} in the sense that a difference between
positive and negative formulæ appears and is handled formally. This
distinction between formulæ is similar to the one type segregation
in the call-by-push-value model of programming effects (CBPV: \citealt{Levy99CBPV,Levy2004}).

On the semantic side, we consider a linear version of CBPV, the linear
CBPV models \citep{FioreLCBPV,Mellies2012parametric,CFM2015}. Recall
that linear CBPV models are enriched (or strong) adjunctions between
a (symmetric monoidal and distributive) category of values and a category
of co-values (cartesian closed relatively to $\Vcategory$ in a certain
sense):
\begin{equation}
\xymatrix{\Vcategory\xyport{\xyarrow[rr][][][1.5pc]}{\shneg}{} & \bot & \xyport{\xyarrow[ll][][][1.5pc]}{\shpos}{}\Scategory}
\label{eq:lcbpv-adj}
\end{equation}
The interest in such situations arose after a long series of works
starting from monads used to model side-effects \citep{Moggi89computationallambda-calculus,Moggi1991},
which soon after shifted attention to adjoint situations decomposing
monads as pictured above \citep{Fiore94PhD,PowerRobinson97,Thielecke97Thesis,Levy99CBPV,Levy2004,Levy05adjunctionmodels},
followed by various attempts to ``linearise'' it, including the
aforementioned works and others such as \citet{Egger2009}.\footnote{As a matter of fact the works on polarisation by \citet{girardnew,Gir93}
can themselves be considered historically as a source of inspiration
for the adjoint point of view, via \citet{benton1996linear} and \citet{StreicherReus98ContinuationAbstractMachines}.}

We avoid developing the full details of strength for the adjunction
above. Being enriched (or strong) for the situation \eqref{eq:lcbpv-adj}
means that additional structure is needed to express multiple formulæ
on the left-hand side of sequents, obtained by asking the above adjunction
to underlie a $\widehat{\Vcategory}$-enriched adjunction where $\widehat{\Vcategory}$
is the symmetric monoidal category of presheaves on $\Vcategory$
\citep{CFM2015}, or equivalently by asking that the adjoints are
compatible with the monoidal product of $\Vcategory$ and with the
action of $\Vcategory^{\op}$ on $\Scategory$ in a manner defined
in \citet{Mellies2012parametric}.

As is by now well understood, a similar situation as \eqref{eq:lcbpv-adj}
underpins Girard's polarised logics \citep{girardnew,Gir93}, albeit
in a manner which is both more specific—by pertaining to (constructive)
classical, intuitionistic and linear logics: said differently to non-linear,
linearly-used and fully linear continuations (respectively)—and more
general: for instance, they integrate in the case of \citet{Gir93}
several adjunctions in the picture above as opposed to a single one.
Many people should be cited for establishing the links between polarised
logics and CBPV, mainly via the relationships they have in common
with negative translations, and continuation-passing style.\smartcites[for instance][]{Murthy92LC}{LRS93}{Thielecke97Thesis}{StreicherReus98ContinuationAbstractMachines}{Ogata2000}{CH00Duality}{Sel01Control}{Ogat2002}{Hof02ComplLambdaMu}{Lau02PhD}{Laurent_2003}{LQTdF05}{melliestabareau}

But a strength of the proof-theoretical point of view symbolised by
polarised logics is that they can be a source of inspiration for better
understanding CBPV models. In this paper, we would like indeed to
further understand such situations used in programming language semantics,
particularly when comes the consideration of a comonad on $\Vcategory$
and/or a monad on $\Scategory$ which are again strong in a suitable
sense—these are important situations related to the study of resources
and effects as modalities.  Thus we find it useful to start by recalling
the links between effectful semantics and polarised logics.

\subsection{The deductive system associated to an adjunction}

Proof theory is concerned with the formal study of the structure of
deductive systems. Categorical proof theory has historically approached
deductive systems as categories, given by:
\begin{enumerate}
\item a collection of objects (formulæ),
\item for each pair of objects $A,B$ a set of morphisms from $A$ to $B$
(of proofs $A\vdash B$),
\item a well-typed and associative composition operation ($A\vdash B$ and
$B\vdash C$ give $A\vdash C$), and
\item for each object $A$ an identity morphism from $A$ to $A$ ($A\vdash A$),
neutral for composition.
\end{enumerate}
A useful perspective brought by proof theory is that an adjoint situation
such as \eqref{eq:lcbpv-adj} can indeed be analysed as a deductive
system, however with a composition which is not associative a priori.
Given any (associative) adjoint situation 
\[
\adjoint L{\Acategory}{\Bcategory}R
\]
(where this notation denotes that $R:\Bcategory\rightarrow\Acategory$
is right adjoint to $L:\Acategory\rightarrow\Bcategory$), one can
define a deductive system $\duploid LR$ as follows:
\begin{align*}
\obj{\smash{\duploid LR}} & =\obj{\Acategory}\uplus\obj{\Bcategory} & \duploid LR(A,B) & =\Oblique(\varp A,\varn B)
\end{align*}
where 
\[
\Oblique:\Acategory^{\op}\times\Bcategory\rightarrow\Setcat
\]
is a chosen distributor of oblique morphisms of the adjunction, and
where by definition
\begin{align*}
\varp X & \df\begin{cases}
X & \text{if }X\in\obj{\Acategory}\\
RX & \text{if }X\in\obj{\Bcategory}
\end{cases} & \varn X & \df\begin{cases}
LX & \text{if }X\in\obj{\Acategory}\\
X & \text{if }X\in\obj{\Bcategory}
\end{cases}
\end{align*}
Objects in $\Acategory$ are called \emph{positive} in $\duploid LR$
and those in $\Bcategory$ \emph{negative}. It is an interesting exercise
to check that for all objects $A,B,C$ one can define a composition
\[
\duploid LR(C,B)\times\duploid LR(A,C)\rightarrow\duploid LR(A,B)
\]
and that this composition is associative if and only if the adjunction
$\adjoint L{\Acategory}{\Bcategory}R$ is idempotent, in the sense
of its multiplication or co-multiplication being iso \citep{munchduploids}.

\textbf{Non-associative categories}, such as $\duploid LR$, are simply
categories without the assumption that composition is associative.
We use the notation $\ccomp$ for composition operations that are
not associative a priori.

Additional structure, such as the symmetric monoidal and distributive
structure on $\Vcategory$ and the cartesian closed structure on $\Scategory$
relative to $\Vcategory$ \eqref{eq:lcbpv-adj}, allows us to interpret
logical connectives. In fact:
\begin{prop}[\citealp{CFM2015}]
\label{prop:The-deductive-system}The deductive system $\duploid{\shneg}{\shpos}$
associated to a linear CBPV adjunction \eqref{eq:lcbpv-adj} satisfies
the rules of $\system{IMALL}$ (\prettyref{fig:IMALL}).
\end{prop}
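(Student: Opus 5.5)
Our approach is to interpret sequents $A_{1},\dots,A_{n}\vdash B$ in $\duploid{\shneg}{\shpos}$ as oblique morphisms $\Oblique(\varp{A_{1}}\otimes\dots\otimes\varp{A_{n}},\varn B)$, using the monoidal product of $\Vcategory$ on the positive parts of the hypotheses (the empty context goes to the unit $\unite$). Hypotheses are thus always read positively and conclusions negatively. Since $\Oblique(X,N)\cong\Scategory(\shpos X,N)\cong\Vcategory(X,\shneg N)$, each rule can be checked in whichever of the two categories is most convenient. The connectives are interpreted as follows. The positive ones $\unite,\otimes,\oplus,\falso$ are the corresponding structure of $\Vcategory$ applied to $\varp A,\varp B$, using distributivity. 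The negative ones $\with,\top$ are the cartesian structure of $\Scategory$ applied to $\varn A,\varn B$. Finally, $A\mixarrow B$ is the closed structure of $\Scategory$ relative to $\Vcategory$, which is a power object $\varp A\mixarrow\varn B$ with $\Scategory(\shpos(X\otimes\varp A),N)\cong\Scategory(\shpos X,\varp A\mixarrow N)$ naturally in $X$.

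First we handle the identity group. The axiom $A\vdash A$ is the unit $\varp A\to\shneg\shpos\varp A$ when $A$ is positive, and $\id_{\shneg A}$ (transposed) when $A$ is negative. Cut is the duploid composition $\ccomp$, extended to contexts: we tensor the first proof with the identity on $\Gamma'$ and use strength to transport $\Gamma'$ across the shifts. Exchange follows from the symmetry of $\otimes$.

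Next, each right rule and left rule is obtained from a natural bijection.
\begin{itemize}
\item $(\vdash\mixarrow)$ is exactly the closure isomorphism above.
\item $(\otimes\vdash)$ and $(\unite\vdash)$ are identities on interpretations, since contexts are already tensored.
\item $(\oplus\vdash)$ and $(\falso\vdash)$ use distributivity, $\Gamma\otimes(P\oplus Q)\cong(\Gamma\otimes P)\oplus(\Gamma\otimes Q)$, together with the universal property of the coproduct in $\Vcategory$.
\item $(\vdash\with)$ and $(\vdash\top)$ use products in $\Scategory$.
\item $(\vdash\otimes)$, $(\vdash\unite)$ and $(\vdash\oplus_{i})$ go through $\Vcategory(\Gamma,\shneg\varn A)$. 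We transpose each premiss and tensor (respectively inject), and then postcompose with the Kleisli-type map $\shneg\shpos X\otimes\shneg\shpos Y\to\shneg\shpos(X\otimes Y)$ given by the strength of the monad $\shneg\shpos$.
\item $(\with_{i}\vdash)$ precomposes with a projection in $\Scategory$, made available on the left by the counit $\shpos\shneg N\to N$.
\item $({\mixarrow}\vdash)$ combines the evaluation map of the relative closure with the cut construction just built.
\end{itemize}

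The main obstacle is the multiplicative rules that mix contexts: cut with a context $\Gamma'$, $(\vdash\otimes)$ and $({\mixarrow}\vdash)$. These genuinely need the strength, or equivalently the $\widehat{\Vcategory}$-enrichment. I would work systematically in the presheaf-enriched formulation of \citet{CFM2015}. There, $\Oblique(\Gamma\otimes X,N)$ is natural in $\Gamma$ in an enriched sense, so a map $X\to\shneg\shpos Y$ can be tensored with any $\Gamma$ coherently. Since the statement only asks that the rules be satisfied, not that equations hold, only the existence of these operations needs checking. That reduces the remaining work to writing down the enriched (strong) versions of the unit, counit and evaluation.
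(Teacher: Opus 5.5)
Your proposal is correct. It is essentially the standard interpretation of \citet{CFM2015} that the paper cites without reproducing: hypotheses tensored in $\Vcategory$ through their positive parts, sequents read as oblique maps $\Oblique(\varp{A_{1}}\otimes\cdots\otimes\varp{A_{n}},\varn{B})$, connectives given by the distributive monoidal structure of $\Vcategory$ and the relative cartesian closure of $\Scategory$, and the strength (equivalently the $\widehat{\Vcategory}$-enrichment) handling the rules that mix contexts.

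One correction: you have swapped the two shifts throughout. In \eqref{eq:lcbpv-adj} the left adjoint is $\shneg\colon\Vcategory\to\Scategory$, so $\Oblique(X,N)\cong\Scategory(\shneg X,N)\cong\Vcategory(X,\shpos N)$, the monad is $\shpos\shneg$ with unit $P\to\shpos\shneg P$, and the counit is $\shneg\shpos N\to N$. Relatedly, $(\with_{i}\vdash)$ with $A_{i}$ positive and $\Gamma$ non-empty is not a mere precomposition by the counit: it is a cut against $A_{1}\with A_{2}\vdash A_{i}$, so it also uses the strength.
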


From the point of view of deductive systems as non-associative categories,
just as in categorical proof theory in general, the interpretation
of logics into the models determine a notion of equality between derivations
which we seek to understand and respect (e.g. during cut elimination).
This is the point of view of polarised logics \citep{girardnew},
which can still be described and analysed using the tools of category
theory, motivating a theory of non-associative categories developed
in \citet{munchduploids,MMMM2026} and which we partially recall and
extend in this paper.\textbf{ }

\subsection{The calculus $\protect\IMALLp$: A polarised syntax for linear CBPV}

\begin{figure}[!t]
\begin{boiteombree}
\hspace*{\fill}\subfloat[\label{fig:Expressions-and-values-1-1}Terms]{
\begin{varwidth}{\linewidth}
$\addcustomstretch[1.3][-0.33em]{\begin{array}{lrclcccccccccccccc}
 &  &  &  &  &  &  & {\scriptstyle \unite} &  & {\scriptstyle \otimes} &  & {\scriptstyle \oplus\,(i\in\{1,2\})} &  & {\scriptstyle \mixarrow} &  & {\scriptstyle \with\,(i\in\{1,2\})} &  & {\scriptstyle \top/\falso}\\
\lefteqn{\text{Values:}}\\
 & V,W & \bnf & x & \bmid & \mu\varn{\alpha}.c & \bmid & () & \bmid & V\smallotimes W & \bmid & \iith(V) & \bmid & \mu(x\push\alpha).c & \bmid & \mupair{\alpha}c{\beta}{c'} & \bmid & \term V\\
\lefteqn{\text{Stacks:}}\\
 & S & \bnf & \alpha & \bmid & \mt\varp x.c & \bmid & \mt().c & \bmid & \mt(x\smallotimes y).c & \bmid & \mui xcy{c'} & \bmid & V\push S & \bmid & \pith\push S & \bmid & \init S
\end{array}}$\\
$\addcustomstretch[1.3][-0.25em]{\begin{array}{lrclcccccccccccc}
\lefteqn{\text{Expressions:}} &  &  &  &  &  & \quad\lefteqn{\text{Contexts:}} &  &  &  &  &  & \quad\lefteqn{\text{Commands:}}\\
 & t,u & \bnf & V & \bmid & \mu\varp{\alpha}.c &  & e & \bnf & S & \bmid & \mt\varn x.c &  & c & \bnf & \varn{\cut Ve}\;\bmid\;\varp{\cut tS}
\end{array}}$
\end{varwidth}

}\hspace*{\fill}

\hspace*{\fill}\subfloat[\label{fig:Reduction-rules}Reduction rules]{{\small$\addcustomstretch[1.4][-0.3em]{\begin{array}{rccl}
(R\vareps{\mt}{}): & \vareps{\cut V{\mt\vareps x{}.c}}{} & \argred R{} & c\subst Vx\\
(R\vareps{\mu}{}): & \vareps{\cut{\mu\vareps{\alpha}{}.c}S}{} & \argred R{} & c\subst S{\alpha}\\
(R\unite): & \varp{\cut{()}{\mt().c}} & \argred R{} & c\\
(R\otimes): & \varp{\cut{V\smallotimes W}{\mt(x\smallotimes y).c}} & \argred R{} & c\subst V{x\addsubst Wy}\\
(R\oplus): & \varp{\cut{\iith(V)}{\mui{x_{1}}{c_{1}}{x_{2}}{c_{2}}}} & \argred R{} & c_{i}\subst V{x_{i}}\\
(R{\mixarrow}): & \varn{\cut{\mu(x\push\alpha).c}{V\push S}} & \argred R{} & c\subst V{x\addsubst S{\alpha}}\\
(R\with): & \varn{\cut{\mupair{\alpha_{1}}{c_{1}}{\alpha_{2}}{c_{2}}}{\pith\push S}} & \argred R{} & c_{i}\subst S{\alpha_{i}}\\
 & \text{(no rules \ensuremath{R\top}, \ensuremath{R\falso})}\\
\\\end{array}}$}{\small\par}

}\hspace*{\fill}\hspace*{\fill}\subfloat[\label{fig:Extensionality-rules}Extensionality rules ]{{\small$\addcustomstretch[1.4][-0.3em]{\begin{array}{rccl}
(E\vareps{\mt}{}): & e & \argred E{} & \mt\vareps x{}.\vareps{\cut xe}{}\\
(E\vareps{\mu}{}): & t & \argred E{} & \mu\vareps{\alpha}{}.\vareps{\cut t{\alpha}}{}\\
(E\unite): & S & \argred E{} & \mt().\varp{\cut{()}S}\\
(E\otimes): & S & \argred E{} & \mt(x\smallotimes y).\varp{\cut{x\smallotimes y}S}\\
(E\oplus): & S & \argred E{} & \mui x{\varp{\cut{\ileft(x)}S}}y{\varp{\cut{\iright(y)}S}}\\
(E{\mixarrow}): & V & \argred E{} & \mu(x\push\alpha).\varn{\cut V{x\push\alpha}}\\
(E\with): & V & \argred E{} & \mupair{\alpha}{\varn{\cut V{\pleft\push\alpha}}}{\beta}{\varn{\cut V{\pright\push\beta}}}\\
(E\top): & V & \argred E{} & \term{x_{1}\smallotimes\cdots\smallotimes x_{n}}\\
(E\falso): & S & \argred E{} & \init{x_{1}{\cdots}\mkern2mu x_{n}\push\alpha}
\end{array}}$}{\small\par}

}\hspace*{\fill}
\end{boiteombree}

\caption{\label{fig:ILLpdiam-calculus-1}$\protect\IMALLp$: calculus}
\end{figure}
\begin{figure}[!t]
\begin{boiteombree}
\hspace*{\fill}\subfloat[Judgements]{\begin{minipage}[c]{0.43\textwidth}\setlist{nosep,leftmargin=*}\vspace*{2ex}

\begin{itemize}
\item {\small Judgements are:}\\
{\small$\addcustomstretch[1][0.25em]{\begin{array}{ccc}
\Gamma\vdash t\colon A\mid & \Gamma\mid e\colon A\vdash\Delta & c\colon(\Gamma\vdash\Delta)\end{array}}$}{\small\par}
\item {\small\textbf{$\Gamma$}}{\small{} is a map from a finite set of variables
to types provided with a total order $\leq_{\Gamma}$ on its domain,
notation $\Gamma=(x_{1}\colon A_{1},\dots,x_{n}\colon A_{n})$.}{\small\par}
\item {\small\textbf{$\Delta$}}{\small{} is a pair $\alpha\colon A$ of a
co-variable and a type.}{\small\par}
\item {\small Concatenation $(\Gamma,\Gamma')$ is defined when the domains
of $\Gamma$ and $\Gamma'$ are disjoint.}{\small\par}
\end{itemize}
\end{minipage}

}\hspace*{\fill}\hspace*{\fill}\subfloat[Identity]{
\begin{varwidth}{\linewidth}
$\addcustomstretch[2.6][0.0em]{\begin{array}{cc}
\axrulesp{x\colon A}{x\colon A\mid}{(\ensuremath{\vdash\,}ax)}{\vdash} & \axrulesp{\mid\alpha\colon A}{\alpha\colon A}{(ax\ensuremath{\,\vdash})}{\vdash}\\
\unrulecsep{c\colon(\Gamma,x\colon\termeps A{}}{\Delta)}{\Gamma\mid\mt\vareps x{}.c\colon\termeps A{}}{\Delta}{(\ensuremath{\vareps{\mt}{}\vdash})} & \unrulecsep{c\colon(\Gamma}{\alpha\colon\termeps A{})}{\Gamma}{\mu\vareps{\alpha}{}.c\colon\termeps A{}\mid}{(\ensuremath{\vdash\vareps{\mu}{}})}\\
\spantwo{\binrule{\Gamma}{t\colon\termeps A{}\mid}{\Gamma'\mid e\colon\termeps A{}}{\Delta}{\vareps{\cut te}{}\colon(\Gamma,\Gamma'}{\Delta)}{(cut\ensuremath{\vareps{}{}})}}
\end{array}}$
\end{varwidth}

}\hspace*{\fill}\vspace*{4ex}

\hspace*{\fill}\subfloat[\label{fig:Structure-(exchange)-1-2}Structure]{\begin{raggedright}
\begin{minipage}[t]{0.9\columnwidth}\setlist{nosep,leftmargin=*}
\begin{itemize}
\item {\small\textbf{$\Subst{\Gamma}{\Gamma'}$}}{\small{} is the set of maps
$\sigma:\dom{\Gamma}\rightarrow\dom{\Gamma'}$ satisfying $\Gamma'(\sigma(x))=\Gamma(x)$
for all $x\in\dom{\Gamma}$.}{\small\textbf{ $\Substbij{\Gamma}{\Gamma'}$}}{\small{}
is the subset of $\Subst{\Gamma}{\Gamma'}$ of maps that are bijective.}{\small\par}
\item {\small The structural rules for $\IMALLp$  are obtained by taking
$\sigma\in\Substbij{\Gamma}{\Gamma'}$ below (renaming, exchange).
By taking $\Subst{\Gamma}{\Gamma'}$ for the range of structural maps
(adding weakening and contraction) we obtain $\polsystem{\LJ}$, in
correspondence with call-by-push-value models \citep{CFM2015}.}{\small\par}
\end{itemize}
\[
\addcustomstretch[2.6][1em]{\begin{array}{lcc}
\unrulec{\Gamma\vdash t\colon A\mid}{\Gamma'\vdash t[\sigma]\colon A\mid}{(\ensuremath{\vdash\sigma})} & \unrulec{\Gamma\mid e\colon A\vdash\Delta}{\Gamma'\mid e[\sigma]\colon A\vdash\Delta}{(\ensuremath{\sigma\vdash})} & \unrulec{c\colon(\Gamma\vdash\Delta)}{c[\sigma]\colon(\Gamma'\vdash\Delta)}{(\ensuremath{\sigma})}\end{array}}
\]
\end{minipage}
\par\end{raggedright}
}\hspace*{\fill}\vspace*{-1ex}

\hspace*{\fill}\subfloat[Logic]{\noindent\begin{minipage}[t]{1\columnwidth}\hspace*{\fill}$\addcustomstretch[2.6][1em]{\begin{array}{cc}
\binrule{\Gamma}{V\colon A\mid}{\Gamma'\mid S\colon B}{\Delta}{\Gamma,\Gamma'\mid V\push S\colon A\mixarrow B}{\Delta}{(\ensuremath{{\mixarrow}\focvdash})} & \unrulecsep{c\colon(\Gamma,x\colon A}{\alpha\colon B)}{\Gamma}{\mu(x\push\alpha).c\colon A\mixarrow B\mid}{(\ensuremath{\vdash{\mixarrow}})}\\
\binrule{\Gamma}{V\colon A\mid}{\Gamma'}{W\colon B\mid}{\Gamma,\Gamma'}{V\smallotimes W\colon A\otimes B\mid}{(\ensuremath{\focvdash\otimes})} & \unrulecsep{c\colon(\Gamma,x\colon A,y\colon B}{\Delta)}{\Gamma\mid\mt(x\smallotimes y).c\colon A\otimes B}{\Delta}{(\ensuremath{\otimes\vdash})}\\
\binrule{c\colon(\Gamma}{\alpha\colon A)}{c'\colon(\Gamma}{\beta\colon B)}{\Gamma}{\mupair{\alpha}c{\beta}{c'}\colon A\with B\mid}{(\ensuremath{\vdash\with})} & \unrulecsep{\Gamma\mid S\colon A_{i}}{\Delta}{\Gamma\mid\pith\push S\colon A_{1}\with A_{2}}{\Delta}{(\ensuremath{\with_{i}\focvdash})}\\
\binrule{c\colon(\Gamma,x\colon A}{\Delta)}{c'\colon(\Gamma,y\colon B}{\Delta)}{\Gamma\mid\mui xcy{c'}\colon A\oplus B}{\Delta}{(\ensuremath{\oplus\vdash})} & \unrulecsep{\Gamma}{V\colon A_{i}\mid}{\Gamma}{\iith(V)\colon A_{1}\oplus A_{2}\mid}{(\ensuremath{\focvdash\oplus_{i}})}
\end{array}}$\hspace*{\fill}

\hspace*{\fill}$\addcustomstretch[2.6][-0.00em]{\begin{array}{cccc}
\axrulesp{}{()\colon\unite\mid}{(\ensuremath{\vdash\unite})}{\Gamma\vdash} & \unrulecsep{c\colon(\Gamma}{\Delta)}{\Gamma\mid\mt().c\colon\unite}{\Delta}{(\ensuremath{\unite\vdash})} & \unrulecsep{\Gamma}{V\colon A\mid}{\Gamma}{\term V\colon\top\mid}{(\ensuremath{\focvdash\top})} & \unrulecsep{\Gamma\mid S\colon A}{\Delta}{\Gamma\mid\init S\colon\falso}{\Delta}{(\ensuremath{\falso\focvdash})}\end{array}}$\hspace*{\fill}\end{minipage}

}\hspace*{\fill}
\end{boiteombree}
\caption{\label{fig:ILLpdiam-typing-1}$\protect\IMALLp$: type system}
\end{figure}
\begin{figure}[!t]
\begin{boiteombree}
\vspace{1.5ex}
\hspace*{\fill}
\begin{varwidth}{\linewidth}
$\addcustomstretch[2.3][-0.0em]{\begin{array}{cclll}
\binruledouble{\Gamma}{t\colon A\mid}{\Gamma'\mid e\colon B}{\Delta}{\Gamma,\Gamma'\mid t\push e\colon A\mixarrow B}{\Delta}{(\ensuremath{{\mixarrow}\vdash})} & \quad & \termeps t1\push\termeps e2 & \df & \mt\varn x.\vareps{\cut{\mu\vareps{\alpha}2.\vareps{\cut t{\mt\vareps y1.\varn{\cut x{y\push\alpha}}}}1}e}2\\
\binruledouble{\Gamma}{t\colon A\mid}{\Gamma'}{u\colon B\mid}{\Gamma,\Gamma'}{t\smallotimes u\colon A\otimes B\mid}{(\ensuremath{\vdash\otimes})} &  & \termeps t1\smallotimes\termeps u2 & \df & \mu\varp{\alpha}.\vareps{\cut t{\mt\vareps x1.\vareps{\cut u{\mt\vareps y2.\varp{\cut{x\smallotimes y}{\alpha}}}}2}}1\\
\unrulecsepdouble{\Gamma}{t\colon A_{i}\mid}{\Gamma}{\iith(t)\colon A_{1}\oplus A_{2}\mid}{(\ensuremath{\vdash\oplus_{i}})} &  & \iith(\termeps t{}) & \df & \mu\varp{\alpha}.\vareps{\cut t{\mt\vareps x{}.\varp{\cut{\iith(x)}{\alpha}}}}{}\\
\unrulecsepdouble{\Gamma\mid e\colon A_{i}}{\Delta}{\Gamma\mid\pith\push e\colon A_{1}\with A_{2}}{\Delta}{(\ensuremath{\with_{i}\vdash})} &  & (\pith\push\termeps e{}) & \df & \mt\varn x.\cut{\mu\vareps{\alpha}{}.\varn{\cut x{\pith\push\alpha}}}e\\
\axrulecdouble{\Gamma\mid\init{\dom{(\Gamma,\Delta)}}\colon\falso\vdash\Delta}{(\ensuremath{\falso\vdash})} &  & \mathrlap{\init{\{x_{1},\dots,x_{n},\alpha\}}\quad{\df}\quad\init{x_{1}\push\dots\push x_{n}\push\alpha}}\\
\axrulecdouble{\Gamma\vdash\term{\dom{\Gamma}}\colon\top\mid}{(\ensuremath{\vdash\top})} &  & \mathrlap{\term{\{x_{1},\dots,x_{n}\}}\quad{\df}\quad\term{x_{1}\smallotimes\cdots\smallotimes x_{n}}}
\end{array}}$
\end{varwidth}
\hspace*{\fill}
\end{boiteombree}

\caption{\label{fig:ILLpdiam-derived-1}Some derivable rules in $\protect\IMALLp$.}
\end{figure}

$\IMALLp$ (\textbf{polarised $\system{IMALL}$}) is a term calculus
that can be seen as a linear and direct-style version of Levy's CBPV
\citep{Levy2004}. We recall and adapt its definition from \citealp{CFM2015}
in \prettyref{fig:ILLpdiam-calculus-1} and \ref{fig:ILLpdiam-typing-1}.\looseness=-1

The calculus $\IMALLp$ is defined à la\emph{ }Barendregt, which is
to say that we start with an untyped syntax of pseudo-terms à la Curry
(\prettyref{fig:ILLpdiam-calculus-1}). Out of these pseudo-terms,
the (proper) terms are those that possess a valid typing derivation
in the sense of \prettyref{fig:ILLpdiam-typing-1}. The equational
theory $\argeq{RE}{}$ is obtained as the well-typed restriction of
the compatible equivalence closure of reductions $\argred R{}$ (\prettyref{fig:Reduction-rules},
$\beta$-like) and expansions $\argred E{}$ (\prettyref{fig:Extensionality-rules},
$\eta$-like). The calculus $\IMALLp$ enjoys the good properties
that follow \citep[for more details see][]{CFM2015,Munch-Maccagnoni2017curry}.

From the point of view of typing, $\IMALLp$ expresses the logic $\system{IMALL}$,
which is evident from \prettyref{fig:ILLpdiam-typing-1} and \ref{fig:ILLpdiam-derived-1}.
The calculus $\IMALLp$ and its conversion rules can be interpreted
in any linear CBPV model along the lines of the proof-theoretic interpretation
of \prettyref{prop:The-deductive-system}. The interpretation into
linear CBPV models is sound: any two terms related by $\argeq{RE}{}$
have the same interpretation in all models.

On the proof-theoretic side, the compatible closure $\argRed R{}$
of reduction $\argred R{}$ is confluent. This is due to the fact
that the reduction relation $\argred R{}$ defines an orthogonal higher-order
rewriting system (left-linear, no critical pairs). In fact, $\argRed R{}$
defines a noetherian (confluent, normalising) and semantics-preserving
cut-elimination strategy for $\system{IMALL}$. Furthermore integrating
expansions into the cut-elimination results, the $\argRed R{}$-normal
and $\argRed E{}$-long terms are in correspondence with focused normal
forms, from which follows the completeness of focusing proof search
(see Appendix \ref{sec:Focused-normal-forms}). In this sense, $\IMALLp$
realises a Curry-Howard interpretation of focusing in correspondence
with linear CBPV.

\subsection{An explanation of $\protect\IMALLp$ from its proof-theoretic origins}

While these good properties make of $\IMALLp$ a syntax for linear
CBPV, they say very little about how it has been designed. Indeed,
it has not been designed initially with the goal of being a good syntax
for linear CBPV. To explain the origins of $\IMALLp$, we need to
go back to a different work on polarisation than Girard's, to Danos,
Joinet and Schellinx's work on proof-relevant $\LK$ \citep{danos95new}.
In this work, DJS investigated various ways of endowing $\LK$ with
a noetherian (terminating and confluent) cut-elimination procedure
by means of translations into linear logic. This cut-elimination procedure
is also semantics-preserving by construction, for the semantics determined
by the translation into linear logic. With the system $\polsystem{LK}^{\eta}$,
DJS have found a canonical system, in fact related to Girard's original
polarised classical sequent calculus $\LC$: canonical in the sense
that its translation into linear logic uses the least number of modalities
$\oc$ and $\wn$, and that other known classical systems can be expressed
in it.

While $\polsystem{LK}^{\eta}$ was originally formulated without a
term syntax, the same methodology was applied again using the Curien-Herbelin
syntactic technology ($\bar{\lambda}\mu\tilde{\mu}$) for sequent
calculus \citep{CH00Duality,Herb05} in \citet{munchmonolateral}.
The calculus $\IMALLp$ is therefore essentially a linear and intuitionistic
version of $\polsystem{LK}^{\eta}$ developed with the Curien-Herbelin
syntactic technology. In order to understand $\IMALLp$, it is therefore
crucial to first understand DJS's approach to the proof theory of
$\LK$.

The main idea behind DJS's noetherian cut-elimination is how they
approach critical pairs encountered during the cut elimination of
$\LK$ (such as the so-called ``Lafont'' critical pair involving
a cut between two formulae introduced by weakening). \citealt{danos95new}
resolves these critical pairs in a way which is persistent (in fact
determined by a typing annotation\footnote{The first and most general system introduced in \citealt{danos95new},
$\LK^{tq}$, has explicit and arbitrary annotations ($t$, or $-$)
and ($q$, or $+$) on formulæ. One can see this as being equivalent
to a presentation with some default direction determined for each
connective, together with ``polarity-shifting'' unary connectives
imposing a ($-$) or ($+$) direction.}), in the sense that it is fixed during cut-elimination, as a way
of avoiding looping situations. These directions are and either towards
the \emph{``head''} ($-$) or towards the \emph{``tail''} ($+$),
in a terminology inspired by, and related to, that of ``head reduction''
in the $\lambda$ calculus.

In the term syntax, annotations ($-$) or ($+$) correspond to opposite
ways of determining priority in a cut:
\begin{align*}
\varn{\cut{\mu\varn{\alpha}.c}{\mt\varn x.c'}} & \argred R{}c'\subst{\mu\varn{\alpha}.c}x\\
\varp{\cut{\mu\varp{\alpha}.c}{\mt\varp x.c'}} & \argred R{}c\subst{\mt\varp x.c'}{\alpha}
\end{align*}
Starting from this idea, it is possible to understand $\polsystem{LK}^{\eta}$
as the system whose design follows the principles of \emph{polarisation
}and \emph{focalisation}. To understand these design principles and
where they come from, assume that we take as primitive the notions
of 1) local reduction of principal cuts, and 2) $\eta$-expansion.
Then, we ask that expansions ``play well'' with reductions, and
derive consequences of this choice.

For instance in $\system{IMALL}$, one expects that a principal cut
for $\oplus$ on the left-hand side below reduces into, or at least
has the same reduction behaviour, as the cut on the right-hand side
below.
\[
\addcustomstretch[2][1em]{\begin{array}{cc}
\bussproof{\binaryinfc{\unaryinfc{\axiomc{\Gamma\vdash A}}{\Gamma\vdash A\oplus B}{}}{\binaryinfc{\axiomc{\Gamma',A\vdash\Delta}}{\axiomc{\Gamma',B\vdash\Delta}}{\Gamma',A\oplus B\vdash\Delta}{}}{\Gamma',\Gamma\vdash\Delta}{}} & \bussproof{\binaryinfc{\axiomc{\Gamma\vdash A}}{\axiomc{\Gamma',A\vdash\Delta}}{\Gamma',\Gamma\vdash\Delta}{}}\\
\cut{\ileft(t)}{\mui xcy{c'}} & \cut t{\mt x.c}
\end{array}}
\]
Similarly, and as importantly in the point of view of proof-relevant
systems such as $\polsystem{LK}^{\eta}$ and $\IMALLp$, an arbitrary
derivation for a sequent such that the one on the left-hand side below
must have the same reduction behaviour (or meaning, in some sense),
as its $\eta$-expanded version on the right-hand side below.
\[
\addcustomstretch[2][1em]{\begin{array}{cc}
\Gamma,A\oplus B\vdash\Delta & \bussproof{\binaryinfc{\binaryinfc{\unaryinfc{\unaryinfc{\axiomc{}}{A\vdash A}{}}{A\vdash A\oplus B}{}}{\axiomc{\Gamma,A\oplus B\vdash\Delta}}{\Gamma,A\vdash\Delta}{}}{\binaryinfc{\unaryinfc{\unaryinfc{\axiomc{}}{B\vdash B}{}}{B\vdash A\oplus B}{}}{\axiomc{\Gamma,A\oplus B\vdash\Delta}}{\Gamma,B\vdash\Delta}{}}{\Gamma,A\oplus B\vdash\Delta}{}}\\
e & \mui x{\cut{\ileft(x)}e}y{\cut{\iright(y)}e}
\end{array}}
\]
This assumption has two main consequences on the choices made during
cut-elimination, indeed polarisation and focalisation:
\begin{itemize}
\item \textbf{Polarisation.} We would like that an expansion leaves the
reduction order unchanged. Therefore when an expansion is available
and seems to force a particular reduction, as in the following example
of a cut of type $\oplus$:
\begin{align*}
\vareps{\cut{\mu\vareps{\alpha}{}.c}{\mt\vareps x{}.c'}}{} & \argRed E{}\vareps{\cut{\mu\vareps{\alpha}{}.c}{\mui x{\cut{\ileft(x)}{\mt\vareps x{}.c'}}y{\cut{\iright(y)}{\mt\vareps x{}.c'}}}}{}\\
 & \argred R{}c\subst{\mui x{\cut{\ileft(x)}{\mt\vareps x{}.c'}}y{\cut{\iright(y)}{\mt\vareps x{}.c'}}}{\alpha}\\
 & \argRedinvs E{}c\subst{\mt\vareps x{}.c'}{\alpha}\,,
\end{align*}
then this must be the reduction that was determined all along by the
annotation:
\begin{align*}
\vareps{\cut{\mu\vareps{\alpha}{}.c}{\mt\vareps x{}.c'}}{} & \argred R{}c\subst{\mt\vareps x{}.c'}{\alpha}\quad\text{\emph{i.e. }}\varepsilon=+
\end{align*}
Given that the connectives $\unite,\otimes,\oplus,\falso$ have their
expansions on the right-hand side of contexts as above, and the connectives
$\mixarrow,\with,\top$ have their expansion on the left-hand side
of expressions, the former are ($+$) and the latter are ($-$).
\item \textbf{Focalisation.} An expansion can reveal reductions as in the
following example, again of a cut of type $\oplus$:
\begin{align*}
\varp{\cut{\ileft(\termp t)}{\mt\varp x.c}} & \argReds{RE}{}\varp{\cut{\ileft(\termp t)}{\mui{\vphantom{\cut{}{}}y}{c\subst{\ileft(y)}x}z{c\subst{\iright(z)}x}}}\\
 & \argReds R{}\varp{\cut{\termp t}{\mt\varp y.c\subst{\ileft(y)}x}}
\end{align*}
where the reduction in last line is desired because it corresponds
to a local reduction of a principal cut for $\oplus$. This is an
evidence that $\ileft(\termp t)$ contains a hidden cut, which we
require to reveal itself without the intervention of an expansion:
\begin{align*}
\varp{\cut{\ileft(\termp t)}{\mt\varp x.c}} & \argReds R{}\varp{\cut{\termp t}{\mt\varp y.c\subst{\ileft(y)}x}}
\end{align*}
In $\IMALLp$, this is obtained by restricting the primitive form
of constructors to values and co-values (e.g. $\ileft(V)$), which
forces unrestricted forms (such as $\ileft(\termp t)$) to be derived
as in fig.~\ref{fig:ILLpdiam-derived-1}, using a cut:
\[
\ileft(\termp t)=\mu\varp{\alpha}.\varp{\cut t{\mt\varp x.\varp{\cut{\iith(x)}{\alpha}}}}\,.
\]
In terms of focused (\emph{i.e.} normal) proofs, focalisation is responsible
for the atomicity of synchronous phases.
\end{itemize}
$\IMALLp$ is obtained by starting from the Curien-Herbelin syntactic
toolkit and applying these ideas: in this sense it is a proof-relevant
$\system{IMALL}$ whose theory of noetherian proof reduction and semantics-preserving
proof equivalence derives from the principles of polarisation and
focalisation we just described.

The strength of this Curry-Howard take on Andreoli's focusing will
become manifest in \prettyref{subsec:The-calculus-}, as we will see
that the same design principles still work for the modalities $\oc$
and $\mon$, although the result is different: in particular an interpretation
of $\oc$ and $\mon$ as relative monads and comonads will appear.

\subsection{The syntactic model of linear and thunkable terms}

The syntax in fact defines a syntactic model of linear CBPV, via a
construction for which we need the following definition first.
\begin{defn}
A morphism $f$ in a non-associative category is \textbf{thunkable}
if any composition chain $h\ccomp g\ccomp f$ associates. A morphism
$h$ in a non-associative category is \textbf{linear} if any composition
chain $h\ccomp g\ccomp f$ associates.
\end{defn}

Thunkable and linear morphisms determine (associative) sub-categories
of the non-associative category. We call $\Pcategory_{t}$ the subcategory
of positive objects and thunkable maps of $\IMALLp$ seen as a deductive
system with $\argeq{RE}{}$ as equality of proofs, and $\Ncategory_{l}$
its subcategory of negative objects and linear maps. An advantage
of considering thunkable and linear maps is that we do not need to
define so-called ``complex'' values and stacks as in \citet{Levy2004}.
\begin{prop}
The $\IMALLp$ calculus defines a syntactic linear CBPV model
\[
\xymatrix{\Pcategory_{t}\xyport{\xyarrow[rr][][][1.5pc]}{\unite\mixarrow-}{} & \bot & \xyport{\xyarrow[ll][][][1.5pc]}{\unite\otimes-}{}\Ncategory_{l}}
\]
with the symmetric monoidal and distributive structure on $\Pcategory_{t}$
given by the connectives $\otimes,\unite$ and $\oplus,\falso$, and
the cartesian closure relative to $\Pcategory_{t}$ on $\Ncategory_{l}$
given by the connectives $\with,\top$ and $\mixarrow$.
\end{prop}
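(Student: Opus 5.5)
I will build the syntactic model in four stages. First I make the non-associative category explicit, then I restrict to thunkable and linear maps, then I construct the adjunction, and finally I transport the logical structure.

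\textbf{Stage 1: the non-associative category.} Objects are the types of $\IMALLp$. Morphisms from $A$ to $B$ are $\argeq{RE}{}$-classes of terms in a single hypothesis, which I present uniformly as commands $c\colon(x\colon A\vdash\alpha\colon B)$. Identity is $\vareps{\cut x{\alpha}}{}$. Composition of $c$ and $c'$ is the cut $\vareps{\cut{\mu\vareps{\alpha}{}.c}{\mt\vareps x{}.c'}}{}$ at the polarity of the middle object. These constructions respect $\argeq{RE}{}$ by compatibility. The unit laws follow from the $(E\mu)$, $(E\mt)$ and $(R\mu)$, $(R\mt)$ rules.

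\textbf{Stage 2: thunkable and linear maps.} I will characterise the thunkable maps between positive objects as those equal to $\varp{\cut V{\alpha}}$ for a value $V$ in context $x$. Dually, the linear maps between negative objects are those equal to $\varn{\cut x S}$ for a stack $S$.
\begin{itemize}
\item One direction holds because values substitute under $(R\mt)$, so any chain with a value in the thunkable position associates; stacks behave dually under $(R\mu)$.
\item The converse is the main obstacle. Given a thunkable $c$, I will test associativity against the chain $h\ccomp g\ccomp c$, where $g$ is the positive identity passed through the negative shift and $h$ is a forcing map; that is, I compare $\unite\mixarrow-$ with $\unite\otimes-$ applied around $c$.
\item Associativity of that chain should yield, after $\argeq{RE}{}$ manipulations, that $c\argeq{RE}{}\varp{\cut{V}{\alpha}}$ with $V=\mu\varn\beta.\ldots$ unfolded. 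If this direct route fails, I would fall back on the general duploid theory of \citet{munchduploids}, in which thunkable maps are exactly those factoring through the unit of the shift. That theory also gives that thunkable maps between positive objects and linear maps between negative objects form associative subcategories $\Pcategory_{t}$ and $\Ncategory_{l}$.
\end{itemize}

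\textbf{Stage 3: the adjunction.} I define the functors on objects by $P\mapsto\unite\mixarrow P$ and $N\mapsto\unite\otimes N$. Note that $\unite\mixarrow P$ is negative and $\unite\otimes N$ is positive, so this matches the figure's orientation read as shifts. On morphisms I use the derived constructors of \prettyref{fig:ILLpdiam-derived-1}, applied to values and stacks respectively.

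For the hom-set bijection, both $\Pcategory_t(P,\unite\otimes N)$ and $\Ncategory_l(\unite\mixarrow P,N)$ will be shown isomorphic to the set of oblique commands $c\colon(x\colon P\vdash\alpha\colon N)$. The maps are
\[
c\mapsto\mu(()\push\alpha).\ldots,\qquad c\mapsto\mu\varp{\beta}.\ldots,
\]
and their inverses are obtained by cutting against $()\push\alpha$ and $\mt(()\smallotimes y)$. The round trips reduce to identities using $(R\mixarrow)$, $(R\otimes)$, $(R\unite)$ and the $E$ rules for $\mixarrow,\otimes,\unite$. Naturality is a direct substitution lemma, because values and stacks substitute.

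\textbf{Stage 4: monoidal, distributive and closed structure.}
\begin{itemize}
\item The tensor $\otimes,\unite$ acts on values via $V\smallotimes W$. The associator, unitors and symmetry are given by pattern-matching values, for instance $\mt(x\smallotimes y)$ and $\mt()$, and the coherence equations follow from $(R\otimes)$, $(E\otimes)$ and the exchange rule $(\sigma)$.
\item Distributivity of $\otimes$ over $\oplus,\falso$ holds because $\mui xcy{c'}$ and $\init{}$ may carry an arbitrary context $\Gamma$. Inverse maps are therefore definable by $(\oplus\vdash)$ and $(\falso\vdash)$, and $(E\oplus)$ together with $(E\falso)$ give the inverse laws.
\item The closure of $\Ncategory_l$ relative to $\Pcategory_t$ takes the form of natural isomorphisms
\[
\Ncategory_l(\unite\mixarrow(P\otimes Q),N)\cong\Ncategory_l(\unite\mixarrow P,Q\mixarrow N).
\]
These come from $\mu(x\push\alpha)$ and $V\push S$ with $(R\mixarrow)$ and $(E\mixarrow)$. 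Cartesian products are given by $\with,\top$, using $(R\with)$, $(E\with)$ and $(E\top)$.
\end{itemize}

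Finally, I obtain strength, that is, compatibility with the $\Pcategory_t^{\op}$-action, uniformly from the fact that every derived rule is parametric in the ambient context $\Gamma$. All of Stage 4 is routine once the Stage 2 characterisation is in hand.
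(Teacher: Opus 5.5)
\textbf{The gap: the Stage~2 characterisation is false.} Thunkable maps between positive objects are not all $\argeq{RE}{}$ to some $\varp{\cut{V}{\alpha}}$ with $V$ a value. Likewise, linear maps between negative objects are not all of the form $\varn{\cut{x}{S}}$.

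Take the symmetry $\sigma=\varp{\cut{z}{\mt(x\smallotimes y).\varp{\cut{y\smallotimes x}{\alpha}}}}\colon(z\colon\varp X\otimes\varp Y\vdash\alpha\colon\varp Y\otimes\varp X)$.
\begin{itemize}
\item It is thunkable. In a chain $h\ccomp g\ccomp\sigma$ the only non-trivial case is when the codomain of $g$ is negative. There, $(E\otimes)$ applied to the stack $\mt\varp{z}.(\cdots)$ floats the match on $z$ out of the thunk $\mu\varn{\beta}.(\cdots)$, so both bracketings become the same match on $z$.
\item It is not a value. There is no value $V$ with $z\colon\varp X\otimes\varp Y\vdash V\colon\varp Y\otimes\varp X$: it would have to be $V_{1}\smallotimes V_{2}$ over a split of a one-variable linear context, and there are no closed values of atomic type. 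Since $\argeq{RE}{}$ only relates terms of the same judgement, no such equation can hold.
\end{itemize}
Dually, $\varn{\cut{\mupair{\beta}{\varn{\cut{x}{\pright\push\beta}}}{\gamma}{\varn{\cut{x}{\pleft\push\gamma}}}}{\alpha}}$ is a linear map $\varn X\with\varn Y\to\varn Y\with\varn X$ (by $(E\with)$) that is not $\varn{\cut{x}{S}}$ for any stack $S$. These pattern-matching maps are exactly the ``complex values and stacks'' that the paper says thunkable and linear maps let one avoid.

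Your chain argument does prove something true, but only for shifted targets. A thunkable map into $\unite\otimes N$ equals the value $()\smallotimes\mu\varn{\alpha}.c'$, and dually a linear map out of $\unite\mixarrow P$ equals the stack $()\push\mt\varp{x}.c$. For a general positive target $Q$ it only shows that $c$ followed by the unit $Q\to\unite\otimes(\unite\mixarrow Q)$ is a value. Getting back to $c$ needs the forcing map, which is not thunkable, so nothing can be ``unfolded''; in any case $\mu\varn{\beta}.\cdots$ has negative type. Stage~3 therefore survives, modulo a slip: the element of $\Ncategory_{l}(\unite\mixarrow P,N)$ attached to $c$ is the stack $()\push\mt\varp{x}.c$. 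Your $\mu(()\push\alpha)$-value is instead a map into $\unite\mixarrow N$.

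\textbf{Consequence for Stage~4.} Stage~4 contradicts Stage~2. The following are all matches on the input and never values:
\begin{itemize}
\item the unitors, associator and symmetry;
\item the action $f\otimes g$ on a single hypothesis of type $A\otimes B$;
\item the inverse of $(A\otimes B)\oplus(A\otimes C)\to A\otimes(B\oplus C)$.
\end{itemize}
Under your characterisation none of these would be morphisms of $\Pcategory_{t}$, so ``routine once Stage~2 is in hand'' has nothing to rest on.

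\textbf{How to repair it.} Keep $\Pcategory_{t}$ and $\Ncategory_{l}$ defined by thunkability and linearity, as the paper does, and prove two closure facts instead.
\begin{itemize}
\item[(a)] A match $\varp{\cut{z}{S}}$ on a positive variable with thunkable branches is thunkable, by the argument above using $(E\otimes)$, $(E\unite)$, $(E\oplus)$, $(E\falso)$. Dually, copattern matches on the output with linear branches are linear, using $(E\mixarrow)$, $(E\with)$, $(E\top)$.
\item[(b)] Thunkable maps are central. Take $g$ currying the remaining computation into $Q\mixarrow D$ and $h$ applying it. This gives the interchange law making $\otimes$ a bifunctor on $\Pcategory_{t}$, and the naturality of the structure maps; these are no longer substitution lemmas once morphisms need not be values.
\end{itemize}
Alternatively, check that the syntactic non-associative category is a duploid: every $\mt\varp{x}.c$ is a stack, hence linear; every $\mu\varn{\alpha}.c$ is a value, hence thunkable; the shifts are $\unite\otimes-$ and $\unite\mixarrow-$. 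Then obtain $\Pcategory_{t}\rightleftarrows\Ncategory_{l}$ from the duploid structure theorem, and verify the monoidal, distributive and closed structure using (a) and (b). The paper gives no proof of this statement beyond such references.
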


We therefore have two ways of organising $\IMALLp$ categorically
with cut as composition and $\argeq{RE}{}$ as equality of morphisms:
1) as deductive systems, i.e. non-associative categories, and 2) as
linear CBPV model. The two interpretations turn out to be equivalent,
in the sense that the non-associative category arising from the adjunctions
between positive thunkable and negative linear terms is equivalent
to the original non-associative category \citep{munchduploids}.\footnote{The syntactic result suggests in fact that the structure theorem from
\citet{Fuhrmann2000PhD,munchduploids} extends to linear CBPV, which
is a result that we expect.}

\subsection{\label{subsec:Adjunctions-in-non-associative}Adjunctions in non-associative
categories}

The notion that reductions and expansions have to ``play well''
together is traditionally characterised in categorical semantics with
the presence of adjunctions, in which connectives are left or right
adjoints. One recent realisation is that this characterisation already
makes sense in bare deductive systems: non-associative categories.
Let us recall the notion of adjunction between bare functors over
non-associative categories introduced in \citealt{MMMM2026}.
\begin{defn}
A \textbf{bare functor}\emph{ }$F:\Dduploid\graphto\Eduploid$ between
non-associative categories $\Dduploid$ and $\Eduploid$ is a morphism
of the reflexive graphs $\Dduploid$ and $\Eduploid$, in other words
a bare functor $F:\Dduploid\graphto\Eduploid$ is a functor that does
not necessarily preserve composition. A \textbf{proper}\emph{ }functor
is one that does (notation $\Dduploid\rightarrow\Eduploid$).
\end{defn}

\begin{defn}
\label{def:An-adjunction}An\emph{ }\textbf{adjunction} over non-associative
categories $\Dduploid$ and $\Eduploid$ \[
    \begin{tikzcd}[column sep = 1em]
      \Dduploid  \arrow[rr,"F",harpoon,  bend left = 40]  & \bot & \arrow[ll,"G",harpoon, bend left = 40] \Eduploid
    \end{tikzcd}
\] consists of two bare functors $F:\Dduploid\graphto\Eduploid$ and
$G:\Eduploid\graphto\Dduploid$ and a family of bijections: 
\[
\varphi_{A,B}:\Eduploid(FA,B)\cong\Dduploid(A,GB)
\]
natural separately in $A\in\Dduploid$ and $B\in\Eduploid$, that
is to say, for any composition chain $g\ccomp h\ccomp Ff$ in $\Eduploid$
(not necessarily associative), where $h$ is an oblique morphism,
one has $\varphi(g\ccomp h)=Gg\ccomp\varphi(h)$ and $\varphi(h\ccomp Ff)=\varphi(h)\ccomp f$
for any oblique morphism $f$.
\end{defn}

Note that an adjunction over associative categories is an adjunction
in the usual sense, since the bare functors are necessarily proper
in that case.

\begin{defn}
For any pair of non-associative categories $\Dduploid,\Eduploid$,
their \textbf{direct product} $\Dduploid\times\Eduploid$ is the non-associative
category with objects $\obj{\Dduploid}\times\obj{\Eduploid}$ and
morphisms $(\Dduploid\times\Eduploid)((A,A'),(B,B'))=\Dduploid(A,B)\times\Eduploid(A',B')$
defined in the obvious way. For $\Dduploid$ a non-associative category,
we define the \textbf{diagonal functor} $\Delta:\Dduploid\graphto\Dduploid\times\Dduploid$
with $\Delta A=(A,A)$ and $\Delta f=(f,f)$.
\end{defn}

Note that $\Delta$ preserves both identities and composition; it
is in fact a proper functor.
\begin{prop}
In $\IMALLp$ considered as a non-associative category $\Dduploid$,
the diagonal functor $\Delta$ has a left adjoint $\hat{\oplus}:\Dduploid\times\Dduploid\graphto\Dduploid$
with $A\mathbin{\hat{\oplus}}B=A\oplus B$ on objects, satisfying
a family of isomorphisms:
\[
(A\oplus B\vdash C)\cong(A\vdash C)\times(B\vdash C)
\]
natural in $(A,B)\in(\Dduploid\times\Dduploid)^{\op}$ and in $C\in\Dduploid$
separately.

The functor $\Delta$ also has a right adjoint $\hat{\with}:\Dduploid\times\Dduploid\graphto\Dduploid$
with $A\mathbin{\hat{\with}}B=A\with B$ on objects, satisfying a
family of isomorphisms:
\[
(A\vdash B\with C)\cong(A\vdash B)\times(A\vdash C)
\]
natural in $A\in\Dduploid{}^{\op}$ and in $(B,C)\in\Dduploid\times\Dduploid$
separately.
\end{prop}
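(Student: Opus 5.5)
We construct the bijections explicitly from the $\IMALLp$ syntax, and read off the action of $\hat{\oplus}$ and $\hat{\with}$ on morphisms from Definition~\ref{def:An-adjunction}. In $\Dduploid$, a morphism $A\vdash C$ is a context $e$ with $x\colon A$-style typing. Concretely it is an equivalence class modulo $\argeq{RE}{}$ of a context $\mid e\colon A\vdash\alpha\colon C$, or equivalently of a command $c\colon(x\colon A\vdash\alpha\colon C)$ via $e\mapsto\cut xe$ and $c\mapsto\mt x.c$. We use this command presentation throughout.

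\textbf{Step 1: the bijections.} For $\oplus$ we set
\[
\varphi(c)=\bigl(c\subst{\ileft(y)}x,\;c\subst{\iright(z)}x\bigr)\qquad\text{and}\qquad\varphi^{-1}(c_1,c_2)=\cut x{\mui y{c_1}z{c_2}}.
\]
Strictly, substituting a non-variable value must be read as the cut $\cut{\ileft(y)}{\mt x.c}$, which is the composite of the morphism with the injection $\ileft(y)\colon A\vdash A\oplus B$. Then $\varphi\circ\varphi^{-1}=\id$ follows by $(R\mt)$ and $(R\oplus)$, and $\varphi^{-1}\circ\varphi=\id$ is exactly $(E\oplus)$ applied to $\mt x.c$. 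Dually, for $\with$ we set
\[
\varphi(c)=\bigl(c\subst{\pleft\push\beta}{\alpha},\;c\subst{\pright\push\gamma}{\alpha}\bigr)\qquad\text{and}\qquad\varphi^{-1}(c_1,c_2)=\cut{\mupair{\beta}{c_1}{\gamma}{c_2}}{\alpha}.
\]
Here $(R\mu)$ and $(R\with)$ give one composite and $(E\with)$ the other. The polarity annotations on the cuts must be chosen so that these reductions fire. Because $A\oplus B$ is positive and $B\with C$ is negative, the cut against $\mui{}{}{}{}$ and the cut of $\mupair{}{}{}{}$ against $\alpha$ are both of the right polarity.

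\textbf{Step 2: the bare functors.} We define $\hat{\oplus}$ on morphisms by transporting identities through $\varphi^{-1}$. Given $f\colon A\vdash A'$ and $g\colon B\vdash B'$, let $f\mathbin{\hat\oplus}g=\varphi^{-1}(\ileft\ccomp f,\iright\ccomp g)$, that is, $\mt x.\cut x{\mui y{\cut{f}{\ldots\ileft}}z{\cdots}}$. Similarly $f\mathbin{\hat\with}g=\varphi^{-1}(f\ccomp\pleft,g\ccomp\pright)$. Identities are preserved by $(E\oplus)$ and $(E\with)$. We do not claim preservation of composition, which is consistent with these being bare functors.

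\textbf{Step 3: naturality, separately.} For $\hat\oplus$ we must check two equations, for an oblique $h\colon A\oplus B\vdash C$:
\[
\varphi(k\ccomp h)=(k,k)\ccomp\varphi(h)\qquad\text{and}\qquad\varphi\bigl(h\ccomp(f\mathbin{\hat\oplus}g)\bigr)=\varphi(h)\ccomp(f,g).
\]
The first states that postcomposition with $k\colon C\vdash D$ commutes with precomposition by the injections. Here the key point is that $\ileft(y)$ is a value, hence thunkable, so the chain $k\ccomp h\ccomp\ileft$ associates. One checks this by $(R\mt)$ whatever the polarity of $k$'s cut. The second equation unfolds $f\mathbin{\hat\oplus}g$, reduces the principal $\oplus$ cut via $(R\oplus)$, and lands on $h\ccomp\ileft\ccomp f$ in the required bracketing. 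The $\hat\with$ case is dual, using that $\pith\push\beta$ is a stack, hence linear.

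\textbf{Main obstacle.} The main difficulty is the second naturality equation, where $f$ and $g$ are arbitrary and possibly non-thunkable, so the composite $h\ccomp\ileft\ccomp f$ is not a priori associative. We must verify that the bracketing produced by the reductions matches the one stated in Definition~\ref{def:An-adjunction}. The first attempt will be a case analysis on the polarity of $A$ (for $\hat\oplus$) and of $C$ (for $\hat\with$), writing out the cuts with explicit annotations and tracking which of $(R\mu)$ or $(R\mt)$ fires first. The fact that $\mui{}{}{}{}$ is a stack, hence linear, should let the reduction commute with $f$'s cut in every case.
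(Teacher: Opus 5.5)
Your proposal is correct and is essentially the paper's own proof. The ingredients coincide:
\begin{itemize}
\item the bijection is precomposition with $\ileft(y)$ and $\iright(z)$, with inverse the pairing $\mui{y}{c_1}{z}{c_2}$;
\item the action on morphisms is $f\mathbin{\hat{\oplus}}g=\pairing{\ileft\ccomp f}{\iright\ccomp g}$, which is the paper's $t\oplus u$ built from the derived $\iith(t)$;
\item $(E\oplus)$ is used for the round trip through the pairing.
\end{itemize}
The one real difference is where naturality is checked. You check it on $\varphi$. The paper checks it on the pairing, proving $h\ccomp\pairing{e_1}{e_2}=\pairing{h\ccomp e_1}{h\ccomp e_2}$ and $\pairing{e_1}{e_2}\ccomp(t\oplus u)=\pairing{e_1\ccomp t}{e_2\ccomp u}$. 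The two formulations are equivalent because the maps are mutually inverse. Your choice makes naturality in $C$ trivial: precomposing with the value $\ileft(y)$ is a substitution, so after one $(R\mt)$ on each side the two commands coincide. The paper instead needs an $(E\oplus)$-expansion at that point, because the post-composed map need not be a stack when $C$ is negative.

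The step you single out as the main obstacle is not one, and no case split on polarities is needed. Write $f$ as $x\colon A\vdash t\colon A'$ and $h$ as a context $S$ of type $A'\oplus B'$. Since this type is positive, $S$ is necessarily a stack, so the command representing $h\ccomp(f\mathbin{\hat{\oplus}}g)$ reduces by $(R\mu)$:
\[
\cut{\mu\alpha.\cut{z}{\mui{x}{\cut{t}{\mt x'.\cut{\ileft(x')}{\alpha}}}{y}{\cdots}}}{S}\;\argred{R}{}\;\cut{z}{\mui{x}{\cut{t}{\mt x'.\cut{\ileft(x')}{S}}}{y}{\cdots}}.
\]
Precomposing with $\ileft$ (substituting $\ileft(x)$ for $z$) and firing $(R\oplus)$ leaves exactly $\cut{t}{\mt x'.\cut{\ileft(x')}{S}}$. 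By definition this is the composite $(h\ccomp\ileft)\ccomp f$, the first component of $\varphi(h)\ccomp(f,g)$.

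So the linearity doing the work is that of $h$, coming from the positivity of $A'\oplus B'$. It is not the linearity of the case-analysis stack representing $f\mathbin{\hat{\oplus}}g$. In the paper's pairing formulation it is the pairing that plays the role of the stack, which is why its computation is also a plain reduction. The $\with$ case dualises as you indicate.
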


More details are given in the case of sums in \prettyref{sec:IMALLp-has-sums}
(the case of the product $\with$ is symmetric). The relevant structure
of $\otimes$ and $\mixarrow$ is described in \citealt{MMMM2026};
we omit it in this paper for simplicity.

\section{$\protect\ILL^{\protect\mon}$: Resource and effect modalities}

\begin{figure}
\begin{boiteombree}
\hspace*{\fill}\subfloat[Types]{$\addcustomstretch[1.4][-0.2em]{\begin{array}{lrl}
\text{positive} & P,Q,\termp A & \bnf\dots\bmid\oc A\\
\text{negative} & N,M,\termn A & \bnf\dots\bmid\mon A
\end{array}}$

}\hspace*{\fill}\hspace*{\fill}\subfloat[Judgements]{
\begin{varwidth}{\linewidth}
\vspace*{\bigskipamount}
\setlist{nosep}
\begin{itemize}
\item {\small$\oc\Gamma$ stands for a typing context $\oc A_{1},\dots,\oc A_{n}$.}{\small\par}
\end{itemize}
\end{varwidth}
\vspace*{\medskipamount}

}\hspace*{\fill}\vspace*{-2ex}

\hspace*{\fill}\subfloat[\label{fig:Structure-(exchange)-1-1-1}Structure]{\begin{raggedright}
\begin{varwidth}{\linewidth}
\begin{flushleft}
$\addcustomstretch[2.6]{\begin{array}{cc}
\unrulec{\Gamma\vdash B\vphantom{,}}{\Gamma,\oc A\vdash B}{(w)} & \unrulec{\Gamma,\oc A,\oc A\vdash B}{\Gamma,\oc A\vdash B}{(c)}\end{array}}$
\par\end{flushleft}
\end{varwidth}
\par\end{raggedright}
}\hspace*{\fill}\hspace*{\fill}\subfloat[Logic]{$\addcustomstretch[2.2][-0.0em]{\begin{array}{cc}
\unrulec{\oc\Gamma\vdash A\vphantom{,}}{\oc\Gamma\vdash\oc A}{(\ensuremath{\vdash\oc})} & \unrulecsep{\Gamma,A}B{\Gamma,\oc A}B{(\ensuremath{\oc\vdash})}\\
\unrulecsep{\Gamma}{A\vphantom{,}}{\Gamma}{\mon A}{(\ensuremath{\vdash\mon})} & \unrulec{\oc\Gamma,A\vdash\mon B}{\oc\Gamma,\mon A\vdash\mon B}{(\ensuremath{\mon\vdash})}
\end{array}}$

}\hspace*{\fill}
\end{boiteombree}

\caption{\label{fig:ILLdiam}$\protect\ILL^{\protect\mon}=\protect\system{IMALL}$
+ the rules above}
\end{figure}

In this section, we extend the previous development, starting from
a sequent calculus for multiplicative additive intuitionistic linear
logic $\system{IMALL}$ now with a resource or exponential modality
noted $A\mapsto\oc A$; in other words the full propositional intuitionistic
linear logic $\system{ILL}$, what we write 
\[
\addcustomstretch[0][-0.12em]{\begin{array}{ccccccc}
\system{ILL} &  & = &  & \system{IMALL} & + & \oc\end{array}}
\]
We also consider the combination of $\system{ILL}$ with an effect
modality noted $A\mapsto\mon A$ required to be strong with respect
to the resource modality, in the sense that it comes with the axiom
schema 
\[
\oc A\otimes\mon B\quad{\vdash}\quad\mon\,(\oc A\otimes B)
\]
We obtain in that way the intuitionistic modal logic 
\[
\addcustomstretch[0][-0.12em]{\begin{array}{ccccccccccccccc}
\system{ILL}^{\mon} &  & = &  & \system{ILL} & + & \mon &  & = &  & \system{IMALL} & + & \oc & + & \mon\end{array}}
\]
whose sequent calculus is presented in \prettyref{fig:ILLdiam}. Note
that $\system{ILL}^{\mon}$ may be seen as a linear variant of the
constructive modal logic $\system{CS4}$ \citep{Alechina2001} where
we find convenient (and in the tradition of linear logic) to write
$A\mapsto\,\,!\,A$ for the necessity modality $A\mapsto\square A$.
We replay the developments from the previous section.

\subsection{Linear CBPV models with resource and effect modalities\label{subsec:Linear-CBPV-models}}

As in \prettyref{prop:The-deductive-system} for $\system{IMALL}$,
there is an interpretation of linear CBPV models as deductive systems
that validate the rules of \prettyref{fig:ILLdiam}. Let us first
recall that a linear CBPV model with a resource modality \citep{CFM2015}
is given by a linear CBPV adjunction \eqref{eq:lcbpv-adj} together
with a linear-non-linear adjunction\begin{equation}\label{eq:lcbpv-resource}    \begin{tikzcd}
      \Mcategory \arrow[rr, bend left, "\Lin"] & \bot & \Vcategory \arrow[ll, bend left, "\Mult"] \arrow[rr, bend left, "\shneg"] & \bot & \Scategory \arrow[ll, bend left, "\shpos"]
    \end{tikzcd}\end{equation}that is, a symmetric monoidal adjunction as above where
the monoidal structure on $\Mcategory$ is cartesian, leading to:
\begin{prop}[\citealp{CFM2015}]
\label{prop:The-deductive-system-1}The deductive system $\duploid{\shneg}{\shpos}$
associated to a linear CBPV adjunction with resource modality~\eqref{eq:lcbpv-resource}
satisfies the rules of $\system{ILL}$, that is to say the rules of
\prettyref{fig:ILLdiam} without the modality $\mon$.
\end{prop}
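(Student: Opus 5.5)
Since \prettyref{prop:The-deductive-system} already gives the $\system{IMALL}$ rules in $\duploid{\shneg}{\shpos}$, only the four new rules of \prettyref{fig:ILLdiam} without $\mon$ remain: weakening $(w)$, contraction $(c)$, dereliction $(\oc\vdash)$ and promotion $(\vdash\oc)$. I interpret a formula $\oc A$ as the positive object $\Lin\Mult\varp A$ of $\Vcategory$, so $\oc$ is the comonad $\Lin\Mult$ applied after forcing $A$ to be positive. A sequent $\Gamma\vdash B$ with $\Gamma=A_{1},\dots,A_{n}$ is interpreted, as in the proof of \prettyref{prop:The-deductive-system}, as an oblique morphism $\Oblique(\varp A_{1}\otimes\dots\otimes\varp A_{n},\varn B)$, i.e.\ a map $\varp A_{1}\otimes\cdots\otimes\varp A_{n}\to\shpos\varn B$ in $\Vcategory$ after transposition. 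Each new rule then becomes an operation on such maps, obtained by precomposing in $\Vcategory$ with a structural map on the context or by postcomposing with $\shpos\varn{-}$ data.

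\textbf{Structural rules and dereliction.} The symmetric monoidal adjunction $\Lin\dashv\Mult$ with $\Mcategory$ cartesian makes every $\Lin X$ a commutative comonoid in $\Vcategory$. The counit comes from the terminal map $!:X\to 1$ via the monoidal iso $\Lin 1\cong\unite$, and the comultiplication comes from the diagonal $X\to X\times X$ via $\Lin(X\times X)\cong\Lin X\otimes\Lin X$. For $(w)$ and $(c)$, I precompose the interpreting map of the premise with $\id\otimes e_{\oc A}$, respectively $\id\otimes d_{\oc A}$. For $(\oc\vdash)$, I precompose with $\id\otimes\varepsilon_{\varp A}$, where $\varepsilon:\Lin\Mult\Rightarrow\Id$ is the counit of the adjunction. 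This is well defined because in $\duploid{\shneg}{\shpos}$ precomposition of an oblique map with a morphism of $\Vcategory$ is just composition in $\Vcategory$.

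\textbf{Promotion.} A premise $\oc\Gamma\vdash A$ gives $f:\Lin\Mult\varp C_{1}\otimes\cdots\otimes\Lin\Mult\varp C_{n}\to\shpos\varn A$. The first step rewrites the domain as $\Lin(\Mult\varp C_{1}\times\cdots\times\Mult\varp C_{n})$ using strong monoidality of $\Lin$. The second step uses the unit $\eta$ of $\shpos$ to relate the target to $\varp A$. If $A$ is positive, then $\shpos\varn A=\shpos\shneg A$, and the map is a Kleisli map rather than a map into $A$. The interpretation of $\oc A$ therefore has to absorb this, and I refine it to $\oc A:=\Lin\Mult\shpos\varn A$; for positive $A$ this is $\Lin\Mult\shpos\shneg A$, which is the comonad relative to $\shpos$ announced in the abstract. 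With this choice the promotion is $\Lin(\sha{(\Mult f\circ m)})$: I transpose $f$ along $\Lin\dashv\Mult$ to a map of $\Mcategory$ into $\Mult\shpos\varn A$, then apply $\Lin$ after the comultiplication $\Lin\Rightarrow\Lin\Mult\Lin$. Dereliction is accordingly adjusted to $\varepsilon$ followed by the counit of $\shpos\dashv\shneg$ read as an oblique map $\shpos\varn A\to\varn A$.

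\textbf{Main obstacle.} Each rule is easy to define on its own. The delicate step is checking that these definitions are compatible with the presence of an arbitrary linear context $\Gamma$ in $(\oc\vdash)$, $(w)$ and $(c)$. This needs naturality of $\varepsilon$, $e$ and $d$ with respect to the tensor, and the requirement that the CBPV adjunction be strong for $\Vcategory$. I will discharge it by invoking the $\widehat{\Vcategory}$-enrichment exactly as in \citet{CFM2015}. Since the statement only asks that the rules be satisfied (derivability of the sequents), no equations between derivations need to be checked.
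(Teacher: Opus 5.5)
Your final construction is correct and is essentially the one the paper relies on (it defers to CFM2015 and records the extra shift in \eqref{eq:modal-interp}): after your refinement, $\oc A$ is interpreted as $\Lin\Mult\shpos\varn A$, motivated exactly as in the paper by promotion only reaching $\shneg\Lin\Mult\shpos\shneg P$ rather than $\shneg\Lin\Mult P$; weakening and contraction then come from the comonoid structure of $\Lin X$, promotion from transposition along $\Lin\dashv\Mult$, and dereliction from the counit. Discard the initial choice $\Lin\Mult\varp A$ from the outset, and note that dereliction is simplest as a cut (already available from \prettyref{prop:The-deductive-system}) against $\varepsilon_{\shpos\varn A}:\Lin\Mult\shpos\varn A\to\shpos\varn A$ read as a proof of $\oc A\vdash A$, so the context issue you flag needs no separate appeal to the enrichment.
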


Interpreting all the rules from fig.\ \ref{fig:ILLdiam} in an analogous
of \prettyref{prop:The-deductive-system-1} for of $\system{ILL}^{\mon}$
is possible and requires an additional structure in the form of an
effect modality: something like a monad $T=G\circ F$ on $\Scategory$
which is strong with respect to an action of $\Mcategory^{\op}$ on
$\Scategory$ \citep{Mellies2012parametric} leading to a series of
``strong'' adjunctions\begin{equation}\label{eq:lcbpv-resource-effect}  \begin{tikzcd}
    \Mcategory \arrow[rr, bend left, "\Lin"] & \bot & \Vcategory \arrow[rr, "\shneg", bend left]\arrow[ll, "\Mult", bend left] & \bot & \Scategory \arrow[rr, "F", bend left]\arrow[ll, "\shpos", bend left]  & \bot & \Ccategory \arrow[ll, "G", bend left]
  \end{tikzcd}\end{equation}In order to make it no more complicated than necessary
for our purposes, here with $F\dashv G$ being strong we mean that
$T$ is provided with a natural family of morphisms in $\Vcategory$,
\[
\lstr_{X,P}:\Lin X\otimes\shpos T\shneg P\rightarrow\shpos T\shneg(\Lin X\otimes P)\,,
\]
making four usual diagrams commute—corresponding to the notion of
\emph{strength with respect to $\Lin:\Mcategory\rightarrow\Vcategory$}
\citep{Blute1996a,Hasegawa2002} for the monad $\shpos\circ T\circ\shneg$
on $\Vcategory$ obtained by composing adjunctions. This notion might
seem somewhat restrictive, and it is going to be all the more interesting
that this is enough for our purposes.
\begin{example}
~
\begin{itemize}
\item Any (non-polarised) model of intuitionistic linear logic, that is
to say a situation \eqref{eq:lcbpv-resource} in which the symmetric
monoidal category $\Vcategory$ is closed, $\Scategory=\Vcategory$,
and $\shneg=\shpos=\Id_{\Vcategory}$, has a situation \eqref{eq:lcbpv-resource-effect}
in which $T$ is the exception monad $\Emon=\mathord{-}\oplus E$
(for any $E\in\Vcategory$), which is strong with respect to $\Lin$,
in the sense that the linear tautologies 
\[
\oc\Gamma\otimes(A\oplus E)\vdash(\oc\Gamma\otimes A)\oplus E
\]
 underlie a natural family of morphisms
\[
\Lin X\otimes\Emon A\rightarrow\Emon(\Lin X\otimes A)
\]
satisfying the four coherence laws. Diagrammatically:\[
  \begin{tikzcd}
    \Mcategory \arrow[rr, bend left, "\Lin"] & \bot & \Vcategory \arrow[ll, "\Mult", bend left] \arrow[rr, "F^{\Emon}", bend left] & \bot & \Vcategory^{\Emon} \arrow[ll, "G^{\Emon}", bend left]
  \end{tikzcd}
\]
\item Any model of linear logic, that is to say \eqref{eq:lcbpv-resource}
in which $\Vcategory$ is $*$-autonomous, $\Scategory=\Vcategory$,
and $\shneg=\shpos=\Id$, has a situation \eqref{eq:lcbpv-resource-effect}
in which $T=\wn$ is the dual of $\oc$, $\Ccategory=\Mcategory^{\op}$,
$F=\Mult\circ\mathord{\neg}$ and $G={\neg}\circ\Lin$, and in which
the linear tautologies 
\[
\oc\Gamma\otimes\wn A\vdash\wn(\oc\Gamma\otimes A)
\]
 underlie again a strength of $\wn$ in the previous sense.\[
  \begin{tikzcd}
    \Mcategory \arrow[rr, bend left, "\Lin"] & \bot & \Vcategory \arrow[ll, "\Mult", bend left] \arrow[rr, "\Mult\circ {\neg}", bend left]  & \bot & \Mcategory^\op \arrow[ll, "{\neg}\circ \Lin", bend left]
  \end{tikzcd}
\]
\item By not assuming $\Vcategory$ to be $*$-autonomous but merely a dialogue
category (a symmetric monoidal category having a negation functor
$\neg:\Vcategory\rightarrow\Vcategory^{\op}$, in the sense of having
an exponentiable object), we arrive at a linear CBPV model with a
resource modality \eqref{eq:lcbpv-resource} with $\Scategory=\Vcategory^{\op}$
and with the adjunction of negation with itself. Such models give
again rise to $\ILLpdiam$ models \eqref{eq:lcbpv-resource-effect}
with $\Ccategory=\Mcategory^{\op}$, $F=\Mult$ and $G=\Lin$.\begin{equation}\label{eq:adj-dialogue-resource}  \begin{tikzcd}
    \Mcategory \arrow[rr, bend left, "\Lin"] & \bot & \Vcategory \arrow[rr, "\neg", bend left]\arrow[ll, "\Mult", bend left] & \bot & \Vcategory^\op \arrow[rr, "\Mult", bend left]\arrow[ll, "\neg", bend left]  & \bot & \Mcategory^\op \arrow[ll, "\Lin", bend left]
  \end{tikzcd}\end{equation}This situation has been used to explain translations
of (polarised) classical logic into linear logic and to describe game
models of linear logic in \citet{melliestabareau}.
\end{itemize}
\end{example}

\begin{prop}
The deductive system $\duploid{\shneg}{\shpos}$ associated to a linear
CBPV adjunction with resource and effect modalities \eqref{eq:lcbpv-resource-effect}
satisfies the rules of $\system{ILL}^{\mon}$ of \prettyref{fig:ILLdiam}.
\end{prop}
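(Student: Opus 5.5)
Since the rules of $\system{IMALL}$ and of $\oc$ (weakening, contraction, $({\vdash}\oc)$, $(\oc{\vdash})$) are already validated by \prettyref{prop:The-deductive-system-1}, which only uses the part $\Mcategory\rightleftarrows\Vcategory\rightleftarrows\Scategory$ of \eqref{eq:lcbpv-resource-effect}, the plan is to interpret the new connective $\mon$ and check its two rules $({\vdash}\mon)$ and $(\mon{\vdash})$. I will set, for any object $A$ of $\duploid{\shneg}{\shpos}$, $\mon A\df T\varn A=GF\varn A\in\obj{\Scategory}$, a negative object. A sequent $\Gamma\vdash B$ is interpreted, as in \prettyref{prop:The-deductive-system}, as an oblique morphism $\Oblique(\bigotimes\Gamma^{+},\varn B)\cong\Vcategory(\bigotimes\Gamma^{+},\shpos\varn B)$, with $\bigotimes\Gamma^+$ computed in $\Vcategory$ from the $\varp{(-)}$ of each hypothesis. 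In particular $\Gamma\vdash\mon B$ is a morphism $\bigotimes\Gamma^{+}\to\shpos T\varn B$ in $\Vcategory$, i.e.\ a Kleisli map for the monad $\shpos T\shneg$ once $B$ is positive.

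The rule $({\vdash}\mon)$ is interpreted by post-composition with the unit $\eta:\varn A\to T\varn A$ in $\Scategory$, transported along $\shpos$: from $\bigotimes\Gamma^+\to\shpos\varn A$ we get $\bigotimes\Gamma^+\to\shpos T\varn A$. No strength is needed here. For $(\mon{\vdash})$, given $f:\bigotimes\oc\Gamma^{+}\otimes\varp A\to\shpos T\varn B$, I first rewrite $\bigotimes\oc\Gamma^+$ as $\Lin X$ with $X=\prod\Mult\varp{\Gamma}$, using that $\Lin$ is strong monoidal (symmetric monoidal adjunction with cartesian $\Mcategory$), so $\Lin X\cong\bigotimes\Lin\Mult\varp{\Gamma_i}$. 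The hypothesis $\mon A$ contributes $\varp{(\mon A)}=\shpos T\varn A$, and $\varn A\cong\shneg\varp A$ up to the unit/definition ($\varn A=\shneg A$ when $A$ positive, and for $A$ negative we precompose with the counit $\shneg\shpos A\to A$). Then the interpretation is
\[
\Lin X\otimes\shpos T\shneg\varp A\xrightarrow{\lstr_{X,\varp A}}\shpos T\shneg(\Lin X\otimes\varp A)\xrightarrow{\shpos T(\sharp f)}\shpos TT\varn B\xrightarrow{\shpos\mu}\shpos T\varn B,
\]
where $\sharp f:\shneg(\Lin X\otimes\varp A)\to T\varn B$ is the transpose of $f$ along $\shneg\dashv\shpos$. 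This is the Kleisli extension of $f$ relative to the strength, exactly as in Moggi's interpretation of let.

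The remaining step is bookkeeping: checking that these assignments are well defined for both polarities of $A$ and $B$ (inserting the shifts via units and counits of $\shneg\dashv\shpos$ so that the types match $\varp{(-)}$ and $\varn{(-)}$), and that context permutations and the $\oc$-contexts interact with $\lstr$ consistently, which follows from naturality of $\lstr$ in $X$ and of the symmetric monoidal structure. I expect the main obstacle to be precisely the polarity mismatch in $(\mon{\vdash})$: the strength is only given for the monad $\shpos T\shneg$ on $\Vcategory$ with argument in $\Vcategory$, whereas $A$ may be negative; I would first handle positive $A$ and then reduce the negative case by composing with $\shpos T$ applied to the counit $\shneg\shpos A\to A$, which suffices since the proposition only asks that the rules be satisfied (derivable), not that any equations hold.
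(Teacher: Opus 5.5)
\textbf{The gap: $(\mon{\vdash})$ fails for negative $A$ under $\mon A=T\varn A$.} Your Kleisli-extension argument for $(\mon{\vdash})$ is correct when $A$ is positive. With your choice $\mon A=T\varn A$, however, it fails when $A$ is negative, and the proposed reduction does not repair it.

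For negative $A$ the new hypothesis is $\varp{(\mon A)}=\shpos TA$, but $\lstr$ is only available on objects $\Lin X\otimes\shpos T\shneg P$. The counit $\varepsilon_A:\shneg\shpos A\to A$ gives $\shpos T\varepsilon_A:\shpos T\shneg\shpos A\to\shpos TA$, which is a map \emph{into} $\shpos TA$. You need to map \emph{out of} $\Lin X\otimes\shpos TA$, and reaching the domain of $\lstr_{X,\shpos A}$ would require something like $A\to\shneg\shpos A$, which the model does not provide.

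Appealing to ``derivability only'' does not help, because the obstruction is the existence of a morphism, not an equation. Here is a counterexample.
\begin{itemize}
\item Take $\Mcategory=\Vcategory=\{0<1\}$ with $\otimes=\wedge$, $\oplus=\vee$ and $\Lin=\Mult=\Id$.
\item Take $\Scategory=(\mathcal{P}\{a,b\},\subseteq)$, with $\shneg$ the inclusion of $\{\emptyset,\{a,b\}\}$ and $\shpos$ its right adjoint, so $\shpos S=1$ iff $S=\{a,b\}$.
\item Take $T$ the closure operator with $T\{a\}=\{a,b\}$ fixing all other subsets.
\end{itemize}
Then $\shpos T\shneg=\Id$, so the strength exists trivially. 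The rest of the structure also exists, e.g.\ $\with=\cap$, $0\mixarrow S=\{a,b\}$, $1\mixarrow S=S$, and all diagrams commute since everything is thin. Interpret the negative atoms $A,B$ as $\{a\},\{b\}$ and take $\oc\Gamma$ empty. The premise $A\vdash\mon B$ lives in $\Vcategory(\shpos\{a\},\shpos T\{b\})=\Vcategory(0,0)\neq\emptyset$. Your conclusion $\mon A\vdash\mon B$ lives in $\Vcategory(\shpos T\{a\},\shpos T\{b\})=\Vcategory(1,0)=\emptyset$, so no interpretation of this rule instance exists.

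\textbf{The fix: the extra polarity shift of \eqref{eq:modal-interp}.} Interpret $\mon A$ as $T\shneg(\varp A)$. This coincides with your $T\varn A$ for positive $A$ but is $T\shneg\shpos A$ for negative $A$; in the example above it gives $\emptyset$, and the problem disappears. Then $\varp{(\mon A)}=\shpos T\shneg\varp A$ for both polarities, so $\lstr_{X,\varp A}$ applies directly. Your composite, with $\sharp f:\shneg(\Lin X\otimes\varp A)\to T\shneg\varp B$, then interprets $(\mon{\vdash})$ uniformly with no case split.

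Only $({\vdash}\mon)$ needs adjusting for negative $A$: transpose the premise to $\bigotimes\Gamma^{+}\to\shpos A=\varp A$ in $\Vcategory$, apply $\shneg$, then post-compose with the unit of $T$.

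The same phenomenon appears in your bookkeeping for $\oc$. Writing $X=\prod\Mult\varp{C_i}$ amounts to $\oc C=\Lin\Mult\varp C$, whereas by \eqref{eq:oblique-comonad-shift} promotion needs $\oc C=\Lin\Mult\shpos\varn C$, i.e.\ $X=\prod_i\Mult\shpos\varn{C_i}$. This is harmless for $(\mon{\vdash})$, which only uses that the $\oc$-hypotheses are of the form $\Lin X$.
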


Up until now, we did not detail the interpretation of formulæ of $\system{IMALL}$
as objects of linear CBPV models since it was rather straightforward.
However, the interpretation is no longer straightforward for $\oc$
and $\mon$. Indeed, the valid interpretation of the rules for $\oc$
and $\mon$ require an extra polarity shift:
\begin{equation}
\begin{aligned}\oc A & =\mathord{\Lin}\circ\mathord{\Mult}\circ\shpos(\varn A)\\
\mon A & =T\circ\shneg(\varp A)
\end{aligned}
\label{eq:modal-interp}
\end{equation}
This phenomenon corresponds to the fact that starting from an oblique
morphism
\[
\mathscr{O}(\Lin X_{1}\otimes\cdots\otimes\Lin X_{n},\shneg P)
\]
one cannot in general obtain an oblique morphism 
\[
\mathscr{O}(\Lin X_{1}\otimes\cdots\otimes\Lin X_{n},\shneg\Lin\Mult P)\,,
\]
merely one in 
\begin{equation}
\mathscr{O}(\Lin X_{1}\otimes\cdots\otimes\Lin X_{n},\shneg\Lin\Mult\shpos\shneg P)\,.\label{eq:oblique-comonad-shift}
\end{equation}
This phenomenon was described semantically in \citet{CFM2015} but
was observed syntactically as early as \citet{Andreoli92} with the
fact that the introduction of $\oc$ terminates a synchronous phase
and that the introduction of $\wn$ (or $\oc$ on the left) terminates
an asynchronous phase in focusing proof search. The monad $T$ undergoes
a dual phenomenon. But these additional shifts in \eqref{eq:modal-interp}
have consequences on the completeness of the interpretation, as we
will see.

\subsection{\label{subsec:The-calculus-}The calculus $\protect\ILLpdiam$: A
polarised syntax for $\protect\oc$ and $\protect\mon$ modalities}

\begin{figure*}[!t]
\begin{boiteombree}
\hspace*{\fill}\subfloat[\label{fig:Expressions-and-values-1}Terms]{$\addcustomstretch[1.4][-0.0em]{\begin{array}{lrcccccc}
 &  &  &  &  & {\scriptstyle \oc} &  & {\scriptstyle \mon}\\
\text{Values:} & V,W & \bnf & \dots & \bmid & \mu\prom{\alpha}.c & \bmid & \diam V\\
\text{Stacks:} & S & \bnf & \dots & \bmid & \oc S & \bmid & \mt\diam x.c
\end{array}}$

}\hspace*{\fill}

\hspace*{\fill}\subfloat[Reduction rules]{$\addcustomstretch[1.4][-0.0em]{\begin{array}{rccl}
(R\oc): & \varp{\cut{\mu\prom{\alpha}.c}{\prom S}} & \argred R{} & c\subst S{\alpha}\\
(R\mon): & \varn{\cut{\diam V}{\mt\diam x.c}} & \argred R{} & c\subst Vx
\end{array}}$

}\hspace*{\fill}\subfloat[Extensionality rules]{$\addcustomstretch[1.4][-0.0em]{\begin{array}{rcll}
(E\oc): & V & \argred E{} & \mu\prom{\alpha}.\varp{\cut V{\prom{\alpha}}}\\
(E\mon): & S & \argred E{} & \mt\diam x.\varp{\cut{\diam x}S}
\end{array}}$

}\hspace*{\fill}\vspace*{3ex}

\hspace*{\fill}\subfloat[Judgements]{\begin{minipage}[t]{0.3\columnwidth}\setlist{nosep,leftmargin=*}
\begin{itemize}
\item {\small$\oc\Gamma$ stands for a typing context $x_{1}\colon\oc A_{1},\dots,x_{n}\colon\oc A_{n}$.}{\small\par}
\item {\small$\mon\Delta$ stands for a typing context $\alpha\colon\mon A$.}{\small\par}
\end{itemize}
\end{minipage}

}\hspace*{\fill}\subfloat[Structure]{\begin{minipage}[t]{0.6\textwidth}\setlist{nosep,leftmargin=*}
\begin{itemize}
\item {\small\textbf{$\Substexp{\Gamma}{\Gamma'}$}}{\small{} is the subset
of $\Subst{\Gamma}{\Gamma'}$ of maps that are bijective on variables
not of the form $\oc A$.}{\small\par}
\item {\small The structural rules for $\ILLpdiam$ are given by taking $\Substexp{\Gamma}{\Gamma'}$
for the range of structural maps (renaming and exchange, as well as
weakening and exchange for formulae of the form $\oc A$).}
\end{itemize}
\end{minipage}

}\hspace*{\fill}\vspace*{-1ex}

\hspace*{\fill}\subfloat[Logic]{$\addcustomstretch[2.6][0.5em]{\begin{array}{cc}
\unrulec{c\colon(\oc\Gamma\vdash\alpha\colon A)}{\oc\Gamma\vdash\mu\prom{\alpha}.c\colon\oc A\mid}{(\ensuremath{\vdash\oc})} & \unrulecsep{\Gamma\mid S\colon A}{\Delta}{\Gamma\mid\prom S\colon\oc A}{\Delta}{(\ensuremath{\oc\focvdash})}\\
\unrulecsep{\Gamma}{V\colon A}{\Gamma}{\diam V\colon\mon A}{(\ensuremath{\focvdash\mon})} & \unrulec{c\colon(\oc\Gamma\mid x\colon A\vdash\mon\Delta)}{\oc\Gamma\mid\mt\diam x.c\colon\mon A\vdash\mon\Delta}{(\ensuremath{\mon\vdash})}
\end{array}}$

}\hspace*{\fill}

\hspace*{\fill}\subfloat[\label{fig:ILLpdiam-derived}Additional derived rules in $\protect\ILLpdiam$.]{
\begin{varwidth}{\linewidth}
$\addcustomstretch[2.3][-0.0em]{\begin{array}{ccrll}
\unrulecsepdouble{\Gamma\mid e\colon A}{\Delta}{\Gamma\mid\prom e\colon\oc A}{\Delta}{(\ensuremath{\oc\vdash})} &  & \oc\termeps e{} & \df & \mt\varp x.\cut{\mu\vareps{\alpha}{}.\cut x{\oc\alpha}}e\\
\unrulecsepdouble{\Gamma}{t\colon A}{\Gamma}{\diam t\colon\mon A}{(\ensuremath{\vdash\mon})} &  & \diam\termeps t{} & \df & \mu\varn{\alpha}.\cut t{\mt\vareps x{}.\cut{\diam\alpha}x}
\end{array}}$
\end{varwidth}

}\hspace*{\fill}
\end{boiteombree}

\caption{\label{fig:ILLpdiam-calculus}$\protect\ILLpdiam=\protect\IMALLp$
+ above}
\end{figure*}

The calculus $\ILLpdiam$ extending $\IMALLp$ is defined in \prettyref{fig:ILLpdiam-calculus}.
It is $\IMALLp$ with modalities $\oc$ and $\mon$ where we use a
syntax introduced for linear logic exponentials $\oc,\wn$ in \citet{munchmonolateral}.
One of the goals of this paper is to explain this somewhat unintuitive,
but mathematically justified syntax.

The calculus $\ILLpdiam$ enjoys the same good properties we mentioned
for $\IMALLp$: its type system expresses the logic $\system{ILL}^{\mon}$,
and its derivations can be soundly interpreted in any linear CBPV
model with a resource and an effect modalities in the non-relative
sense of \eqref{eq:lcbpv-resource-effect}. The reduction relation
$\argRed R{}$ defines again a noetherian and semantics-preserving
cut-elimination strategy for $\system{ILL}^{\mon}$. The $\argRed R{}$-normal
and $\argRed E{}$-long terms are again in correspondence with standard
focused normal forms, from which the completeness of focusing proof
search follows (see Appendix \ref{sec:Focused-normal-forms}). 

\subsection{An explanation of $\protect\ILLpdiam$ from a proof-theoretic perspective}

Now we turn to the design constraints related to polarisation and
focalisation that gave rise to this syntax. As those are concerned
with playing well with $\eta$ rules, the first observation is that
the $\eta$-rules for $\oc$ and $\mon$ are very different to other
connectives. For instance, in terms of mere provability, none of their
left- and right-introduction rules are invertible. There is a weaker
property of ``semi-invertibility'' observed in \citet{Lau02PhD}:
provided that the context $\Gamma$ is of the right form, the modality
$\oc$ is invertible on the right in terms of mere provability.

For a notion of $\eta$ rule which is justified semantically (not
merely in terms of provability), we have to adopt a notion of invertibility
(or $\eta$-expansion) for $\oc$ and $\mon$ which is weaker still.
It relies on the following principle: a well-typed value of type $\oc A$
must come from either a promotion or a variable (possibly with some
structural rules applied); intuitively it corresponds to a ``box''
in terms of proof nets for linear logic. In particular the context
of this value is of the form $\oc\Gamma$. The weak invertibility
principle for $\oc$ is then reflected in the syntax with a promotion
constructor in the shape of pattern matching together with a dereliction
constructor in the shape of a pattern:
\[
\addcustomstretch[0][1em]{\begin{array}{cc}
\oc\Gamma\vdash\mu\oc\alpha.c & \Gamma\mid\oc e:\oc A\vdash\Delta\end{array}}
\]
in such a way that the corresponding expansion correctly reflects
a (weak) invertibility of $\oc$ on the right:
\begin{equation}
V\argred E{}\mu\prom{\alpha}.\varp{\cut V{\prom{\alpha}}}\label{eq:oc-expansion}
\end{equation}
and that it is well-typed. Dually, a well-typed co-value of type $\mon A$
must come from either an extension or a covariable, such that its
context is of the form $\oc\Gamma,\mon\Delta$, in such a way that
the expansion
\[
S\argred E{}\mt\diam x.\varp{\cut{\diam x}S}
\]
is well-typed.\footnote{and whose interpretations into linear CBPV models are sound due to
a stronger observation: the interpretation of expressions of type
$\oc A$ that are values are $\oc$-coalgebra morphisms whereas the
interpretation of coexpressions of type $\mon A$ that are covalues
are $\mon$-algebra morphisms, although the initial motivations were
syntactic (in terms of proof net boxes for $\oc$ and $\wn$ in linear
logic, as explained).} Starting from these desired $\eta$-expansions, polarisation and
focalisation again help us design the rest of the calculus.
\begin{itemize}
\item \textbf{Polarisation.} We first notice that the weak invertibility
property is not sufficient to make $\oc$ a negative connective (and
dually to make $\mon$ to be a positive connective) since they apply
conditionally, and thus cannot be used to force a reduction. Other
constraints require $\oc$ to be positive (and dually $\mon$ negative):
namely to ensure that values of type $\oc A$ and covalues of type
$\mon A$ are ``boxes'' – (co)algebra morphisms – i.e. promotions
or variables for $\oc$, extension or covariable for $\mon$. To illustrate
this requirement, consider in a generic cut
\[
\vareps{\cut t{\mt\vareps x{}.c}}{}
\]
where $t$ is an expression of type $\oc A$, and where $x$ might
occur several or zero times in $c'$. We do not know that $t$ is
allowed to be duplicated or erased – i.e. is a box, or a !-coalgebra
morphism. In general this might even be ill-typed since the context
of $t$ is not necessarily of the form $\oc\Gamma$. The solution
is indeed to assign the polarity $\varepsilon=+$ to the modality
$\oc$, enforcing a call-by-value reduction:
\begin{alignat*}{2}
\text{case }t=\mu\varp{\alpha}.c': & \quad & \varp{\cut{\mu\varp{\alpha}.c'}{\mt\varp x.c}} & \argred R{}c'\subst{\mt\varp x.c}{\alpha}\\
\text{case }t=V\neq\mu\varp{\alpha}.c': &  & \varp{\cut V{\mt\varp x.c}} & \argred R{}c\subst V{\alpha}
\end{alignat*}
This has the crucial consequence that duplication and erasure (and
expansion \eqref{eq:oc-expansion}) can only happen for $V$ of the
form $V=x$ or $V=\mu\oc\alpha.c$. Dually, assigning the polarity
$\varepsilon=-$ to the modality $\mon$ enforces a call-by-name reduction
meaning that a covalue $S$ of type $\mon A$ is necessarily some
$S=\mt\diam x.c$ or some $S=\alpha$.
\item \textbf{Focalisation.} The introduction rule for $\oc$ on the right
(respectively $\mon$ on the left) does not admit a restriction to
values (resp. stacks). Indeed, it is not possible to derive this rule
from the same rule restricted to a value (resp. a stack) using cuts,
unlike for the positive connectives $\oplus$ and $\otimes$ on the
right, like $\ileft t$ derives from $\ileft V$ (resp. for the negative
connectives $\mixarrow$ and $\with$ on the left). These rules have
a suspending effect. This one way we can interpret our observation
on oblique morphisms \eqref{eq:oblique-comonad-shift} from the introduction,
forcing extra shifts in the interpretation into the models. However,
weak focalisation properties are available for $\oc$ on the left
(dereliction) and $\mon$ on the right (return), forced by the weak
invertibility on the opposite side, as follows:
\begin{align*}
\varp{\cut V{\oc\termn e}} & \argRed E{}\varp{\cut{\mu\prom{\alpha}.\varp{\cut V{\prom{\alpha}}}}{\oc\termn e}}\argReds R{}\varn{\cut{\mu\varn{\alpha}.\varp{\cut V{\oc\alpha}}}{\termn e}}\\
\varn{\cut{\diam\termp t}S} & \argRed E{}\varn{\cut{\diam\termp t}{\mt\diam x.\varn{\cut{\diam x}S}}}\argReds R{}\varp{\cut{\termp t}{\mt\varp x.\varn{\cut{\diam x}S}}}
\end{align*}
The weak focalisations implementing the derived equivalences above
for dereliction and return are given in fig.\ \ref{fig:ILLpdiam-derived}:
\begin{align*}
\varp{\cut V{\oc\termn e}} & \argRed R{}\varn{\cut{\mu\varn{\alpha}.\varp{\cut V{\oc\alpha}}}{\termn e}}\\
\varn{\cut{\diam\termp t}S} & \argRed R{}\varp{\cut{\termp t}{\mt\varp x.\varn{\cut{\diam x}S}}}
\end{align*}
These focalisation properties are weak in the sense that the following
unrestricted focalisation-like rules are unsound in general:
\begin{align*}
\varp{\cut{\termp t}{\oc\termn e}} & \argRed{(unsound)}{}\varn{\cut{\mu\varn{\alpha}.\varp{\cut{\termp t}{\oc\alpha}}}{\termn e}}\\
\varn{\cut{\diam\termp t}{\termn e}} & \argRed{(unsound)}{}\varp{\cut{\termp t}{\mt\varp x.\varn{\cut{\diam x}{\termn e}}}}\;.
\end{align*}
\end{itemize}
The calculus $\ILLpdiam$ amounts to $\IMALLp$ augmented with a resource
modality $\oc$ and an effect modality $\mon$ following the design
constraints of polarisation and focalisation described above, guided
again from the idea of valid reductions and expansions playing well
together.

The calculus $\ILLpdiam$ and its specific syntax for $\oc$ and $\mon$
are justified by a sound interpretation into linear call-by-push-value
models with resource and effect modalities, and indeed serve to prove
(a proof-relevant version of) the completeness of Andreoli's focusing
proof search adapted to $\ILL^{\mon}$ (Appendix \ref{sec:Focused-normal-forms}).

We will now seek to recover the completeness property missing from
\citealt{CFM2015} by moving to relative (co)monads.

\subsection{Linear call-by-push-value models with relative resource and effect
modalities}

By moving from $\IMALLp$ to $\ILLpdiam$ (and even merely the fragment
$\ILLp$ without $\mon$), we lose completeness: it does not seem
possible to recover a resource modality on the category of thunkables
nor an effect modality on the category of linears. From the syntax,
one can build in fact a syntactic linear CBPV model with a resource
comonad relative to \emph{$\shpos$} and a monad relative to $\shneg$,
strong with respect to the comonad, as defined next. Monads (respectively
comonads) relative to $\shneg$ (resp. $\shpos$) were previously
used for modelling effect (resp. resource) modalities in \citet{Jiang2025}
(resp. \citet{Mellies2025BLC}).
\begin{defn}[\citealp{Altenkirch2015}]
A relative monad $T$ on a functor $L:\Acategory\to\Bcategory$ is
given by
\begin{itemize}
\item a mapping on objects $T:\obj{\Acategory}\rightarrow\obj{\Bcategory}$,
\item for any $A\in\Acategory$, a map $\eta_{A}\in\Bcategory(LA,TA)$,
\item for any $A,B\in\Acategory$, and any $f\in\Bcategory(LA,TB)$, a map
$\extension f\in\Bcategory(TA,TB)$,
\end{itemize}
such that two unital laws and one associativity law are satisfied:
\begin{align*}
f & =\extension f\circ\eta & \extension{\eta_{A}} & =\id_{TA} & \extension{(\extension f\circ g)} & =\extension f\circ\extension g
\end{align*}

\end{defn}

By duality:
\begin{defn}
A relative comonad $S$ on a functor $R:\Bcategory\rightarrow\Acategory$
is given by:
\begin{itemize}
\item a mapping on objects $S:\obj{\Bcategory}\rightarrow\obj{\Acategory}$,
\item for any $B\in\Bcategory$, a map $\varepsilon_{A}\in\Acategory(SA,RA)$,
\item for any $A,B\in\Bcategory$, and any $f\in\Acategory(SA,RB)$, a map
$\coextension f\in\Acategory(SA,SB)$,
\end{itemize}
such that two unital laws and one associativity law are satisfied:
\begin{align*}
f & =\varepsilon\circ\coextension f & \coextension{\varepsilon_{B}} & =\id_{SB} & \coextension{(f\circ\coextension g)} & =\coextension f\circ\coextension g
\end{align*}

\end{defn}

We will in fact use the characterisation of relative monads and comonads
as relative adjunctions and coadjunctions from \citet{Altenkirch2015,Arkor2024}.
\begin{defn}[Relative adjunctions and coadjunctions on categories]
Let $\Acategory$ and $\Bcategory$ be two categories and $L:\Acategory\to\Bcategory$
be a functor. An \textbf{$L$-relative adjunction} $F\adjunction LG$
consists of a category $\Ccategory$, two functors $F:\Acategory\to\Ccategory$
and $G:\Ccategory\to\Bcategory$, and a family of bijections: 
\[
\Phi_{A,C}:\Ccategory(FA,C)\cong\Bcategory(LA,GC).
\]
natural in~$A$ and~$C$. Diagrammatically:\vspace*{-2ex}
\begin{equation}\label{equation/adjunction}    \begin{tikzcd}[row sep = 0.5em, column sep=0.5em]
      & \Ccategory \arrow[rdd, bend left = 20, "G"]& \\
      & \dashv &  \\
     \Acategory \arrow[rr, "L"] \arrow[ruu, bend left = 20, "F"] & & \Bcategory 
    \end{tikzcd} \end{equation}We note $\eta:L\to G\circ F$ the unit of the adjunction.

Dually, for a functor $R:\Bcategory\rightarrow\Acategory$, an \textbf{$R$-relative
coadjunction} $F\coadjunction RG$ consists of a category $\Ccategory$,
two functors $F:\Ccategory\to\Acategory$ and $G:\Bcategory\to\Ccategory$,
and a family of bijections: 
\[
\Psi_{C,B}:\Acategory(FC,RB)\cong\Ccategory(C,GB)
\]
natural in~$B$ and~$C$. Diagrammatically:\vspace*{-2ex}
\begin{equation}\label{equation/coadjunction}     \begin{tikzcd}[row sep = 0.5em, column sep=0.5em]
       & \Ccategory \arrow[ldd, bend right = 20, "F"{swap}]& \\
       & \dashv &  \\
      \Acategory & & \Bcategory  \arrow[ll, "R"{swap}] \arrow[luu, bend right = 20, "G"{swap}] 
    \end{tikzcd}\end{equation}We note $\varepsilon:F\circ G\to R$ the counit of
the coadjunction.
\end{defn}

We can now revisit the interpretation from \prettyref{subsec:Linear-CBPV-models}
in light of the notions of relative adjunction and coadjunction starting
from the following simple fact about relative (co)adjunctions:
\begin{prop}
\label{prop:In--one}In \eqref{eq:modal-interp} one has a $\shneg$-relative
coadjunction $\Lin\coadjunction{\shneg}\Mult\circ\shneg$, and dually,
adjoint resolutions $F\dashv G$ of a comonad $T$ on $\Scategory$
give rise to $\shpos$-relative adjunctions $F\circ\shpos\adjunction{\shpos}G$.
\end{prop}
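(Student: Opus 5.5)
The plan is to derive both halves of \prettyref{prop:In--one} from one elementary observation about ordinary adjunctions, and then apply it to the adjunctions already present in \eqref{eq:lcbpv-resource} and \eqref{eq:lcbpv-resource-effect}.

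\textbf{The observation.} Let $F\dashv G$ be an ordinary adjunction $F:\Acategory\rightarrow\Ccategory$, $G:\Ccategory\rightarrow\Acategory$, and let $L:\Acategory'\rightarrow\Acategory$ be any functor. Then $F\circ L\adjunction{L}G$ is an $L$-relative adjunction. Indeed the bijections
\[
\Ccategory(FLA',C)\cong\Acategory(LA',GC)
\]
are simply the components of the given adjunction at the objects $LA'$. Naturality in $C$ is inherited directly. Naturality in $A'$ follows because the original bijection is natural with respect to every map $Lf$. The unit is $\eta_{LA'}$.

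Dually, let $F\dashv G$ be an adjunction $F:\Ccategory\rightarrow\Acategory$, $G:\Acategory\rightarrow\Ccategory$, and let $R:\Bcategory\rightarrow\Acategory$ be any functor. Then
\[
\Acategory(FC,RB)\cong\Ccategory(C,GRB)
\]
exhibits an $R$-relative coadjunction $F\coadjunction{R}G\circ R$, with counit $\varepsilon_{RB}$. In both cases no additional coherence has to be checked, since the relative notions only ask for natural bijections.

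\textbf{The comonad side.} I would apply the second form of the observation to the composite adjunction $\shneg\circ\Lin\dashv\Mult\circ\shpos$ from $\Mcategory$ to $\Scategory$, obtained by composing the two adjunctions in \eqref{eq:lcbpv-resource}. Take $R=\shneg:\Vcategory\rightarrow\Scategory$. Concretely, the bijection is the composite
\[
\Scategory(\shneg\Lin X,\shneg P)\cong\Vcategory(\Lin X,\shpos\shneg P)\cong\Mcategory(X,\Mult\shpos\shneg P).
\]
Its left-hand side is the set of oblique maps from $\Lin X$ to $\shneg P$. The right-hand side is exactly the object $\Mult\shpos\shneg P$ that appears in \eqref{eq:modal-interp} and \eqref{eq:oblique-comonad-shift}. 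This realises the coadjunction written $\Lin\coadjunction{\shneg}\Mult\circ\shneg$ in the statement, whose functors I read as $\shneg\circ\Lin$ and $\Mult\circ\shpos\circ\shneg$. The induced relative comonad on $\Vcategory$ is then $P\mapsto\Lin\Mult\shpos\shneg P$, in agreement with \eqref{eq:modal-interp}.

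\textbf{The monad side, and the expected obstacle.} Dually, I would apply the first form of the observation to a resolution $F\dashv G$ of the monad $T$ on $\Scategory$, precomposed with the shift from $\Vcategory$ to $\Scategory$. This gives
\[
\Ccategory(F\shneg P,C)\cong\Scategory(\shneg P,GC),
\]
whose right-hand side is again an oblique morphism. The induced relative monad is $T\circ\shneg$, matching $\mon A=T\shneg(\varp A)$ in \eqref{eq:modal-interp}.

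The only real difficulty I anticipate is bookkeeping. The statement writes these functors loosely (as $\Mult\circ\shneg$, $F\circ\shpos$, and ``comonad $T$'' where a monad is meant). So the main step is to fix the typings so that every composite is well-defined and both sides of each bijection are genuinely the oblique hom-sets. I would resolve this by expressing every bijection in terms of the distributor $\Oblique$. With that convention the shift labels become harmless, and the argument reduces to the observation above.
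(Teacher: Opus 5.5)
Your general observation, that restricting an ordinary adjunction along an arbitrary functor yields a relative adjunction or coadjunction, is exactly the ``simple fact'' the paper has in mind; it gives no further proof. Your monad half is correct: $\Ccategory(F\shneg P,C)\cong\Scategory(\shneg P,GC)$ exhibits $F\circ\shneg\adjunction{\shneg}G$ with relative monad $T\circ\shneg$.

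The comonad half, however, repairs the statement's typo in the wrong way, and as a result asserts something false. The typo is a consistent swap of $\shneg$ and $\shpos$ in both halves. The abstract, \prettyref{prop:The-syntactic-linear} and \prettyref{prop:constr-rel-coadj} all make $\oc$ a comonad relative to $\shpos$ and $\mon$ a monad relative to $\shneg$. You undid the swap for $\mon$, but for $\oc$ you kept $R=\shneg$ and inserted extra shifts into $\Lin$ and $\Mult$ instead. The coadjunction you obtain, $\shneg\circ\Lin\coadjunction{\shneg}\Mult\circ\shpos\circ\shneg$, is a legitimate one. But by definition the relative comonad of $F\coadjunction{R}G$ is $F\circ G$, with counit $F\circ G\Rightarrow R$. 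So what yours induces is $\shneg\Lin\Mult\shpos\shneg:\Vcategory\to\Scategory$ relative to $\shneg$, not $P\mapsto\Lin\Mult\shpos\shneg P$ as you claim. Your version has two further defects:
\begin{itemize}
\item It only reaches negative objects of the form $\shneg P$, whereas \eqref{eq:modal-interp} defines $\oc N=\Lin\Mult\shpos N$ for every $N\in\Scategory$.
\item Its left functor $\shneg\circ\Lin$ lands in $\Scategory$, which carries no monoidal structure. So it cannot play the role of the strong monoidal $\Lin$ that the later relative linear-non-linear coadjunctions and $\ILLpdiam$ models require.
\end{itemize}

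The repair is to apply your second observation directly to $\Lin\dashv\Mult$ with $R=\shpos:\Scategory\to\Vcategory$; no composite adjunction is needed. The bijections $\Vcategory(\Lin X,\shpos N)\cong\Mcategory(X,\Mult\shpos N)$, natural in $X\in\Mcategory$ and $N\in\Scategory$, exhibit $\Lin\coadjunction{\shpos}\Mult\circ\shpos$. Its relative comonad is $\Lin\Mult\shpos$, with counit the counit of $\Lin\dashv\Mult$ at $\shpos N$. This is exactly $\oc$ in \eqref{eq:modal-interp}, and $\Lin$ remains strong monoidal into $\Vcategory$. Your bijection $\Scategory(\shneg\Lin X,\shneg P)\cong\Mcategory(X,\Mult\shpos\shneg P)$ is just the instance $N=\shneg P$ of this one, transported along $\shneg\dashv\shpos$. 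That is why it looked right: the content was there, but packaged with the wrong relative functor, so the claimed induced comonad does not follow.
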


We observe that our interpretation of $\system{\ILL}^{\mon}$ into
linear CBPV models \eqref{eq:modal-interp} makes only use of this
relative structure. Indeed, assume we are given on top of a linear
CBPV adjunction \eqref{eq:lcbpv-adj}
\begin{itemize}
\item a $\shneg$-relative adjunction 
\[
\Lin\coadjunction{\shneg}\Mult:\Scategory\rightarrow\Mcategory
\]
where $\Lin:\Mcategory\rightarrow\Vcategory$ is strong monoidal,
defining a relative symmetric monoidal comonad $\oc=\Lin\circ\Mult$,
and
\item a $\shpos$-relative adjunction 
\[
F\adjunction{\shpos}G:\Vcategory\rightarrow\Ccategory
\]
defining a relative monad $\mon=G\circ F$, together with a strength
with respect to $\Lin$, in the form of a natural family of morphisms
\begin{equation}
\str_{X,P}:\Lin X\otimes\shpos\mon P\rightarrow\shpos\mon(\Lin X\otimes P)\label{eq:rel-rel-strength}
\end{equation}
making the four usual diagrams commute. 
\end{itemize}
Diagrammatically:\vspace*{-2ex}
\begin{equation}\label{eq:lcbpv-resource-effect-relative}  \begin{tikzcd}
    \Mcategory \arrow[rr, bend left, "\Lin"] & \bot & \Vcategory \arrow[rr, "\shneg", bend left] \arrow[rrrr, "F", bend left=40] & \bot & \Scategory \arrow[ll, "\shpos", bend left] \arrow[llll, "\Mult", bend left=40]  & \bot & \Ccategory \arrow[ll, "G", bend left]
  \end{tikzcd}\end{equation}
\begin{defn}
We call the data \eqref{eq:lcbpv-resource-effect-relative} given
by an $\IMALLp$ (linear CBPV) model from \citealp{CFM2015} together
with a $\shneg$-relative symmetric monoidal adjunction and a $\shpos$-relative
adjunction strong in the sense of \prettyref{eq:rel-rel-strength}
an \textbf{$\ILLpdiam$ model}.
\end{defn}

Then one has:
\begin{prop}
\label{prop:The-syntactic-linear}The syntactic linear CBPV model
$\Pcategory_{t}\rightleftarrows\Ncategory_{l}$ of thunkable and linear
terms of $\ILLpdiam$ has a relative comonad $\oc:\Ncategory_{l}\rightarrow\Pcategory_{t}$
on $\Shpos$, and a relative monad $\mon:\Pcategory_{t}\rightarrow\Ncategory_{l}$
on $\Shneg$, given by the corresponding rules of $\oc$ and $\mon$.
\end{prop}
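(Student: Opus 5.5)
The plan is to build both relative structures directly from the syntax, using the relative (co)adjunction characterisation of \citet{Altenkirch2015,Arkor2024} rather than checking the relative (co)monad laws one by one. In $\Pcategory_{t}\rightleftarrows\Ncategory_{l}$ we have $\Shpos N=\unite\otimes N$ and $\Shneg P=\unite\mixarrow P$. For the comonad, a map $\Shpos N\to\Shpos M$ in $\Pcategory_{t}$ corresponds to an oblique morphism, namely a term of type $M$ with one hypothesis of type $N$. This holds by the adjunction together with $\unite\otimes N\cong N$ in the non-associative category. On objects we put $\oc N=\oc N$. The counit $\varepsilon_{N}:\oc N\to\Shpos N$ is the value-level dereliction $x\colon\oc N\vdash\mu(\alpha)...$. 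Concretely, it is the thunkable term obtained from the oblique dereliction $\mt\varp x.\cut x{\oc\alpha}$ by transposing along the shift adjunction. For coextension, take $f\colon\oc N\to\Shpos M$ thunkable, view it as a command $c\colon(x\colon\oc N\vdash\alpha\colon M)$, and set $\coextension f\df\mu\prom{\alpha}.c$, which is well typed by $(\vdash\oc)$ because the context is $\oc N$. The monad is dual: $\eta_{P}$ is the transpose of $\diam x$, and for $g\colon\Shneg P\to\mon Q$ linear, read as a command $c\colon(x\colon P\vdash\alpha\colon\mon Q)$, we set $\extension g\df\mt\diam x.c$, which is typable by $(\mon\vdash)$.

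The first check is that these maps lie in the right subcategories. Terms $\mu\prom{\alpha}.c$ are values, and values of positive type are thunkable. Dually, stacks $\mt\diam x.c$ are linear, because the polarity discipline makes a cut against a value or a stack associate. This should follow from the $R$-rules, using a lemma from \citet{munchduploids} that values and stacks are thunkable and linear respectively. I will take that lemma as given for $\IMALLp$ and extend its proof by the new rules $(R\oc)$ and $(R\mon)$, which are again of the value--stack form.

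Next come the three laws, for the comonad (the monad case is symmetric). First, $\varepsilon\circ\coextension f=f$: the composite reduces by $(R\oc)$, $\cut{\mu\prom{\alpha}.c}{\prom\alpha}\argred R{}c$, up to the $(R\mu)$ and $(R\mt)$ steps from the shift transposition. Second, $\coextension{\varepsilon}=\id$: this is exactly the expansion $(E\oc)$, $x\argred E{}\mu\prom{\alpha}.\cut x{\prom\alpha}$. Third, $\coextension{(f\circ\coextension g)}=\coextension f\circ\coextension g$: here I substitute the value $\mu\prom{\beta}.c_{g}$ for the $\oc$-variable of $c_{f}$. Since it is a value, $(R\mt)$ fires, and substitution commutes with $\mu\prom{\alpha}.(-)$ because substitution is compatible. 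The resulting terms are syntactically equal up to $\argeq{RE}{}$.

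The main obstacle is bookkeeping rather than any single law. Composition in $\Pcategory_{t}$ is cut, and the identifications $\Pcategory_{t}(\oc N,\Shpos M)\cong$ (commands with $x\colon\oc N$, $\alpha\colon M$) must be natural. This naturality is what makes $\oc$ a relative comonad on $\Shpos$, rather than just data. I would therefore phrase everything as an $\Shpos$-relative coadjunction $\Psi\colon\Pcategory_{t}(\oc N,\Shpos M)\cong\mathscr{B}(\oc N,M)$. Here $\mathscr{B}$ is the category of boxes: values $\mu\prom\alpha.c$ and variables of type $\oc$, between $\oc$-types. I would prove that $\Psi$ is bijective using $(E\oc)$ for surjectivity and $(R\oc)$ for injectivity. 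Naturality should follow from associativity of cuts involving thunkable and linear maps. Finally, I would invoke the correspondence between relative (co)adjunctions and relative (co)monads to conclude.
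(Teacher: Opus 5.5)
Your proposal is correct and is in substance the paper's own argument. The counit is the dereliction $\oc\alpha$ (your $\mt\varp x.\cut x{\oc\alpha}$ is just its $\eta$-expansion), and the coextension is the promotion $\mu\prom{\alpha}.c$. The two unital laws are exactly the instances $\cut{\mu\prom{\alpha}.c}{\prom{\alpha}}\argred R{}c$ and $x\argred E{}\mu\prom{\alpha}.\cut x{\prom{\alpha}}$. Associativity comes from substituting the value $\mu\prom{\alpha}.c$ via $(R\mt)$ underneath $\mu\prom{\beta}$, and the monad case is dual.

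The closing repackaging as an $\Shpos$-relative coadjunction through a category of boxes is unnecessary. The three relative-comonad laws already suffice without any further naturality check, and the hom-set there should read $\mathscr{B}(\oc N,\oc M)$. You can stop after the direct verification, as the paper does.
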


\noindent The idea is that thunkable terms $\oc N\vdash\Shpos M$
in the syntactic model are in correspondence to generic terms of type
$\oc N\vdash M$, in such a way that the co-unit matches the dereliction
restricted to covalues,
\[
\mid\oc\alpha\colon\oc N\vdash\alpha:N
\]
and the coextension matches promotion, which indeed builds a thunkable
morphism:
\[
\unrulec{c\colon(\oc N\vdash\alpha\colon M)}{\oc N\vdash\mu\prom{\alpha}.c\colon\oc M\mid}{}
\]
For more details consult Appendix~\ref{sec:Relative-monads-and}.
It is another sign and testimony of the fundamental nature of linear-non-linear
relative coadjunctions introduced in the polarised setting of tensorial
logic and functorial game semantics that they also provide a way
to resolve a completeness issue appearing in the syntax and semantics
of linear CBPV with resource modalities \citep{CFM2015}.

We now establish that this new definition of model did not invalidate
soundness.
\begin{prop}
\label{prop:The-deductive-system-2}The deductive system $\duploid{\shneg}{\shpos}$
associated to a linear CBPV adjunction with resource and effect modalities
in the relative sense of \eqref{eq:lcbpv-resource-effect-relative}
satisfies the rules of $\system{ILL}^{\mon}$.
\end{prop}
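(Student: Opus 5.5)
We reduce to \prettyref{prop:The-deductive-system} for the $\system{IMALL}$ fragment, so only the structural rules (w), (c) and the four logical rules for $\oc$ and $\mon$ of \prettyref{fig:ILLdiam} remain. We keep the interpretation \eqref{eq:modal-interp}, with $\oc A$ read as the positive object $\oc A=\Lin\Mult(\varn A)$ and $\mon A=G F(\varp A)$. The relative structure now supplies these objects directly, so no extra shift is needed. We write $\Psi:\Vcategory(\Lin X,\shpos N)\cong\Mcategory(X,\Mult N)$ for the coadjunction bijection and $\Phi:\Ccategory(FP,C)\cong\Scategory(\shneg P,GC)$ for the relative adjunction. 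A sequent $\Gamma\vdash A$ is interpreted, as in the duploid, by an oblique morphism in $\Oblique(\bigotimes\varp\Gamma,\varn A)=\Scategory(\shneg\bigotimes\varp\Gamma,\varn A)$.

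The structural rules and $\oc$ are handled first. Since $\Lin$ is strong monoidal and $\Mcategory$ is cartesian, each $\Lin X$ carries a commutative comonoid structure $\Lin X\to\Lin X\otimes\Lin X$ and $\Lin X\to\unite$ in $\Vcategory$. Precomposing with these (values act on obliques by precomposition) interprets (c) and (w). For $(\oc\vdash)$ (dereliction), the counit $\varepsilon_N:\Lin\Mult N\to\shpos N$ is used: from $h:\shneg(\Gamma\otimes\varp A)\to\varn A'$ we build a map from $\shneg(\Gamma\otimes\Lin\Mult\varn A)$. First precompose with $\Gamma\otimes\varepsilon$ to reach $\shneg(\Gamma\otimes\shpos\varn A)$. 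Then apply the strength of the CBPV adjunction, $\shneg(\Gamma\otimes\shpos N)\to\Gamma\mixarrow\shneg\shpos N\to\Gamma\mixarrow N$, which uses the $\Vcategory$-action available on $\Scategory$, and cut with $h$. For $(\vdash\oc)$ (promotion), given $h:\shneg\bigotimes_i\Lin X_i\to\varn A$, we transpose along $\shneg\dashv\shpos$ to $\Vcategory(\Lin(\prod X_i),\shpos\varn A)$, using strong monoidality of $\Lin$. We then apply $\Psi$ to obtain a map $\prod X_i\to\Mult\varn A$ in $\Mcategory$, apply $\Lin$, and finally include it as the oblique $\shneg$ of a value.

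Next comes $\mon$. For $(\vdash\mon)$ we postcompose with the unit $\eta_P:\shneg P\to GFP$ of the relative adjunction. Here $\varp A$ must be considered, so for negative $A$ one first composes with the unit $\varn A\to\shneg\shpos\varn A$; this matches the extra shift discussed around \eqref{eq:oblique-comonad-shift}. For $(\mon\vdash)$, given $h:\shneg(\bigotimes\Lin X_i\otimes\varp A)\to GF\varp B$, we apply $\Phi^{-1}$ to obtain the extension $h^{*}$ out of $F(\Lin X\otimes\varp A)$. We then precompose with $\shneg\str_{X,\varp A}$ and a counit of $\shneg\dashv\shpos$, giving a map $\shneg(\Lin X\otimes\shpos GF\varp A)\to GF\varp B$, which is exactly an interpretation of $\oc\Gamma,\mon A\vdash\mon B$.

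The main obstacle is this last step: the strength \eqref{eq:rel-rel-strength} only allows contexts of the form $\Lin X$, which is precisely why the rule restricts to $\oc\Gamma$. The composite must also be checked to be a well-defined oblique morphism in the non-associative sense, since the counit of $\shneg\dashv\shpos$ and the extension must associate correctly. We first verify this on $\Scategory$ using naturality of $\Phi$ in $C$ and of $\str$. Since the statement only concerns derivability (soundness, not equations), the coherence diagrams are not needed beyond well-typedness.
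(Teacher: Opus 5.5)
For promotion and $(\mon\vdash)$, the two rules that actually use the relative structure, you follow the paper's argument. Promotion is the bijection \eqref{eq:comon-oblique}: transpose along $\shneg\dashv\shpos$, apply $\Psi$, then $\Lin$, then include via $\shneg$. $(\mon\vdash)$ is the bijection \eqref{eq:mon-oblique} followed by $G$. You go further than the paper, which treats a single antecedent and omits the lax monoidal structure and the strength. Your handling of $\oc\Gamma$ through strong monoidality of $\Lin$ is correctly typed, as is the composite $G\Phi^{-1}(h)\circ\varepsilon\circ\shneg\str_{X,\varp{A}}$, with $\varepsilon$ the counit of $\shneg\dashv\shpos$. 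However, two of the routine steps do not type-check as written.

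\textbf{The rule $(\vdash\mon)$ with $A$ negative.} There is no morphism $\varn{A}\rightarrow\shneg\shpos\varn{A}$ in $\Scategory$. The counit of $\shneg\dashv\shpos$ goes the other way, $\shneg\shpos N\rightarrow N$, and the unit lives in $\Vcategory$. In the dialogue-category example, where $\Scategory=\Vcategory^{\op}$ and both shifts are $\neg$, such a map would be double-negation elimination. The fix is to transpose $h\in\Scategory(\shneg\bigotimes\varp{\Gamma},N)$ into $h^{\flat}\colon\bigotimes\varp{\Gamma}\rightarrow\shpos N=\varp{A}$ in $\Vcategory$, and take $\eta_{\varp{A}}\circ\shneg(h^{\flat})$. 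For $A$ positive, $\eta_{A}\circ h$ suffices. This step is unrelated to \eqref{eq:oblique-comonad-shift}: that shift concerns promotion, and it is exactly what the relative structure removes.

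\textbf{The rule $(\oc\vdash)$.} The arrow $\shneg(\Gamma\otimes\shpos N)\rightarrow\Gamma\mixarrow\shneg\shpos N$ is not available in a general linear model; after uncurrying it would have to dispose of two copies of $\Gamma$. Write $h\colon\shneg(\Gamma\otimes\varp{A})\rightarrow\varn{B}$.

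When $A$ is negative, $\shpos\varn{A}=\varp{A}$. So precomposing $h$ with $\shneg(\Gamma\otimes\varepsilon_{\varn{A}})$, where $\varepsilon_{\varn{A}}\colon\Lin\Mult\varn{A}\rightarrow\shpos\varn{A}$ is the relative counit, is already the whole step.

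When $A$ is positive, you also need the strength $\Gamma\otimes\shpos\shneg A\rightarrow\shpos\shneg(\Gamma\otimes A)$ of the monad $\shpos\shneg$, which comes from the enrichment of the linear CBPV adjunction. Follow it by $\shneg$, then the counit $\shneg\shpos\shneg(\Gamma\otimes A)\rightarrow\shneg(\Gamma\otimes A)$, then $h$. Equivalently, curry $h$ into $\shneg A\rightarrow\Gamma\mixarrow\varn{B}$, precompose with the counit $\shneg\shpos\shneg A\rightarrow\shneg A$, and uncurry.

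With these repairs your proof is complete. The associativity worry in your last paragraph is moot: every composite you form lives in the associative category $\Scategory$ (or $\Vcategory$), so it is automatically an oblique morphism.
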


Indeed, rephrasing in terms of the oblique morphism distributor, we
have a natural family of bijections:
\begin{equation}
\Oblique(\oc\varn A,\varn B)\cong\Mcategory(\Mult\varn A,\Mult\varn B)\label{eq:comon-oblique}
\end{equation}
mapping into $\Oblique(\oc\varn A,\varn{(\oc B)})$, thus corresponding
to the rule $(\vdash\oc)$ (with a single antecedent, omitting the
lax structure at the moment). Symmetrically for $\mon$, we have a
natural family of bijections:
\begin{equation}
\Oblique(\varp A,\mon\varp B)\cong\Ccategory(F\varp A,F\varp B)\label{eq:mon-oblique}
\end{equation}
mapping into $\Oblique(\varp{(\mon A)},\mon\varp B)$, thus corresponding
to the rule $(\mon\vdash)$ (again with a single antecedent, omitting
the strength).

In order to be more precise, we will now make use of the notions of
(co)algebras for relative (co)monads. Just as for their non-relative
counterparts, categories of algebras are terminal in a category of
adjoint resolutions for a given relative (co)monad, in such a way
that for any relative (co)adjunction there exists a comparison functor
factoring the right (resp. left) adjoint \citep[Theorem 2.12]{Altenkirch2015}.
The sets \eqref{eq:comon-oblique} thus map via the comparison functor
into the set of $\oc$-coalgebra morphisms $\Vcategory^{\oc}(F^{\oc}\varn A,F^{\oc}\varn B)$,
in which values of type $\oc A$ are interpreted, where objects $F^{\oc}\varn X$
are identified with the types $\oc X$ of $\ILLpdiam$. Symmetrically,
the sets \eqref{eq:mon-oblique} map via the comparison functor into
the set of $\mon$-algebra morphisms $\Scategory^{\mon}(G^{\mon}\varp A,G^{\mon}\varp B)$
in which covalues of type $\mon A$ are interpreted, where objects
$G^{\mon}\varp X$ are identified with the types $\mon X$ of $\ILLpdiam$.
The rest of the interpretation of $\ILLpdiam$ into linear CBPV models
with resource and effect modalities remains unchanged. Thus:
\begin{prop}
The interpretation of $\ILL^{\mon}$ from \prettyref{prop:The-deductive-system-2}
extends into a sound interpretation of $\ILLpdiam$.
\end{prop}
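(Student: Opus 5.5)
The plan is to extend the interpretation of $\ILL^{\mon}$ from \prettyref{prop:The-deductive-system-2} to the term calculus, term former by term former, and then to check each reduction and expansion of $\ILLpdiam$ on the semantic side. The $\IMALLp$ fragment is already known to be soundly interpreted in the underlying linear CBPV model, and that interpretation is unchanged. So we only need to (i) interpret the new term formers $\mu\prom{\alpha}.c$, $\prom S$, $\diam V$ and $\mt\diam x.c$, and (ii) verify the rules $(R\oc)$, $(R\mon)$, $(E\oc)$, $(E\mon)$.

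We also have to check that the new constructs are compatible with the existing rules, in particular with substitution of values and stacks. This is where the polarised invariant is used: values must be interpreted as thunkable maps (maps in $\Vcategory$) and stacks as linear maps (maps in $\Scategory$).

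\textbf{Interpreting the new term formers.}
\begin{itemize}
\item For promotion $\oc\Gamma\vdash\mu\prom{\alpha}.c\colon\oc A$, the command $c$ gives an oblique morphism from $\bigotimes\oc\varn{A_i}$ to $\varn A$. The lax monoidal structure of $\Lin$ lets us regard the context as $\Lin\Mult$ of something, and the bijection \eqref{eq:comon-oblique} (the relative coadjunction $\Lin\coadjunction{\shneg}\Mult$) then yields a map in $\Mcategory$. Applying $\Lin$ gives a morphism of $\Vcategory$, which is a coalgebra morphism via the comparison functor.
\item Dereliction $\prom S$ is interpreted by precomposing $S$ with the counit $\varepsilon\colon\Lin\Mult\varn A\to\shneg$-oblique.
\item Dually, $\diam V$ is interpreted by postcomposing $V$ with the unit $\eta$ of $F\adjunction{\shpos}G$.
\item Extension $\mt\diam x.c$ uses \eqref{eq:mon-oblique}. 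When the context $\oc\Gamma$ is present, we first use the strength \eqref{eq:rel-rel-strength} to move $\Lin X$ inside $\shpos\mon$, and then apply extension.
\end{itemize}

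\textbf{Verifying the rules.}
\begin{itemize}
\item $(R\oc)$ and $(R\mon)$ are the first unit laws of the relative comonad and monad: $f=\varepsilon\circ\coextension f$ and $f=\extension f\circ\eta$. A substitution lemma is needed to move between $c\subst S{\alpha}$ and composition with the interpretation of $S$.
\item $(E\oc)$ and $(E\mon)$ rely on the fact that well-typed values of type $\oc A$ are either variables or promotions (so they are coalgebra morphisms), and dually for covalues of type $\mon A$. By induction on the typing derivation, we show that every value $\oc\Gamma\vdash V\colon\oc A$ is interpreted as $\coextension{(\varepsilon\circ V)}$. This uses the second unit law $\coextension{\varepsilon}=\id$ for variables, and the associativity law $\coextension{(f\circ\coextension g)}=\coextension f\circ\coextension g$ for promotions and structural renamings.
\end{itemize}

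\textbf{Structural rules.} Weakening and contraction on $\oc$-variables are interpreted using the cartesian structure of $\Mcategory$ transported by the strong monoidal $\Lin$. We must check that promotion commutes with them, i.e.\ that coextension is natural with respect to maps $\Lin X\to\Lin Y$ coming from $\Mcategory$. This follows from the naturality of the relative coadjunction bijection in $\Mcategory$.

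\textbf{Main obstacle.} The hardest part will be the strong/monoidal coherence. This requires three things:
\begin{itemize}
\item Showing that the interpretations of promotion and extension with nonempty contexts are well defined independently of the ordering and splitting of the context.
\item Showing they are compatible with substitution of values into $\oc\Gamma$ and of stacks into $\mon\Delta$. This is where the four strength diagrams and the monoidality of $\Lin$ must be used.
\item Proving the substitution lemma in the presence of $\mt\diam x.c$, where the substituted value lives in the strengthened context.
\end{itemize}
For these steps, I would first try to reduce everything to the single-antecedent case via the tensor $\otimes$ (curried contexts), then invoke the strength laws once.
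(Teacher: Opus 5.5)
Your proposal is correct and follows essentially the paper's route. The paper keeps the $\IMALLp$ interpretation unchanged and interprets promotion and extension through the bijections $\Oblique(\oc\varn A,\varn B)\cong\Mcategory(\Mult\varn A,\Mult\varn B)$ and their dual. It then rests soundness on values of type $\oc A$ (resp.\ covalues of type $\mon A$) being relative coalgebra (resp.\ algebra) morphisms via the comparison functor, which is exactly your claim that such values are interpreted as $\coextension{(\varepsilon\circ V)}$.

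One small correction: rewriting the context $\bigotimes_i\oc\varn{A_i}$ as $\Lin(\prod_i\Mult\varn{A_i})$ before applying the bijection requires the inverse of the lax structure map. So $\Lin$ must be strong monoidal, not merely lax, and the definition of an $\ILLpdiam$ model does supply this.
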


\subsection{Examples of $\protect\ILLpdiam$ models}

By \prettyref{prop:In--one}, any linear CBPV adjunction with (non-relative)
resource and effect modalities~\eqref{eq:lcbpv-resource-effect}
provides an $\ILLpdiam$ model. Two series of works, by \citet{Jiang2025}
and by \citet{Mellies2025BLC,Mellies2026TLLA}, motivate a relaxation
of the notions of effect and resource modalities to monads relative
to $\shneg$ and to comonads relative to $\shpos$.
\begin{example}
\citet{Jiang2025} investigate an interface for user-defined programming
effects in CBPV using Church encodings and coinductive types. Since
universal quantification and coinductive types are typically negative,
this leads to several examples of monads whose type of monadic computations
is negative, including various examples of implementations of exception,
continuation, state and free monads. The situations they study amount
to particular $\ILLpdiam$ models \eqref{eq:lcbpv-resource-effect-relative}
 with $\Vcategory=\Mcategory$ and $\Ccategory=\Alg{\mon}$ the category
of $\mon$-algebras, i.e. non-linear CBPV models with an effect modality
as follows:\begin{equation}\label{eq:lcbpv-effect-relative}  \begin{tikzcd}[column sep = 1.5em]
    \Mcategory \arrow[rr, "\shneg", bend left] \arrow[rrrr, "F^{\mon}", bend left=40] & \bot & \Scategory \arrow[ll, "\shpos", bend left]  & \bot & \Alg{\mon} \arrow[ll, "G^{\mon}", bend left]
  \end{tikzcd}\end{equation}

This is a very natural situation owing to the following observation
extending the classic results by \citet{Beck} on distributive laws
between monads:
\end{example}

\begin{defn}
Let $(S,\eta,-^{\dagger})$ be a monad on a category $\Acategory$
and $L:\Acategory\rightarrow\Bcategory$ a functor. A structure of
$L$-relative monad $(L\circ S,\tilde{\eta},-^{\ddagger})$ on the
functor $L\circ S:\Acategory\rightarrow\Bcategory$ is an \emph{extension
of }$(S,\eta,-^{\dagger})$ if and only if for all $A\in\Acategory$
one has:
\begin{align*}
\tilde{\eta}_{A} & =L\eta_{A}\\
(Lf)^{\ddagger} & =L(f^{\dagger})\,.
\end{align*}
\end{defn}

\begin{thm}
\label{thm:beck-relative}Let $T,S$ be two monads on a category $\Acategory$
and $\adjoint L{\Acategory}{\Bcategory}R$ a resolution of the monad
$T=R\circ L$. There is an one-to-one correspondence between:
\begin{itemize}
\item distributive laws of monads $ST\rightarrow TS$,
\item extensions of the monad $S$ to a $L$-relative monad structure on
$L\circ S$, diagrammatically:   \[\begin{tikzcd}[column sep = 2.5em]
    \Acategory \arrow[loop, "S", in=145, out=215, distance=2.5em] \arrow[rr, "L", bend left] \arrow[rr, "L\circ S", bend left=80] & \bot & \Bcategory \arrow[ll, "R", bend left]
  \end{tikzcd}\]
\end{itemize}
\end{thm}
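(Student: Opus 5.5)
We translate both sides through the adjunction $L\dashv R$, whose hom-bijection is $\Bcategory(LA,LC)\cong\Acategory(A,RLC)=\Acategory(A,TC)$. Under this bijection, an $L$-relative monad structure on $L\circ S$ is a map sending each $g\in\Acategory(A,TSC)$ to $\extension g\in\Bcategory(LSA,LSC)$. Transposing back, this becomes a map $g\mapsto\hat g\in\Acategory(SA,TSC)$. So an extension amounts to a Kleisli-type extension operation from $\Acategory(A,TSC)$ to $\Acategory(SA,TSC)$, i.e. an ``extension'' of $S$-arrows into the Kleisli category of $T$. The classical fact we aim to reproduce is Beck's: distributive laws $ST\to TS$ correspond to liftings of $S$ to the Kleisli category of $T$ extending $S$ on $\Acategory$. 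Our plan is to give explicit mutually inverse constructions at the level of $\Acategory$, avoiding any identification of $\Bcategory$ with the Kleisli category, since $L\dashv R$ is an arbitrary resolution.

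\textbf{From a law to an extension.} Given a distributive law $\lambda:ST\to TS$, we define, for $f\in\Bcategory(LA,LSC)$ with transpose $f^{\flat}:A\to TSC$,
\[
f^{\ddagger}=\epsilon_{LSC}\circ L\bigl(T\mu^{S}_{C}\circ\lambda_{SC}\circ S(f^{\flat})\bigr),
\]
where $\epsilon$ is the counit of $L\dashv R$. The unit is $\tilde\eta_{A}=L\eta^{S}_{A}$.

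\textbf{From an extension to a law.} Conversely, we recover $\lambda_{A}$ as the transpose of $(\id_{LSA}\circ\varepsilon')^{\ddagger}$ applied to the arrow $LTA\to LSA$ built from $\epsilon_{LA}$ and $L\eta^{S}$. Concretely, we take the transpose of $\tilde\eta_{A}$ composed with the counit, namely $k=\epsilon_{LSA}\circ L(\eta^{T}_{SA}\circ\eta^{S}_{A})$, precomposed suitably, and set
\[
\lambda_{A}=\bigl(\text{the transpose of }(L\eta^{S}_{A})\circ\epsilon_{LA}\bigr)^{\ddagger\flat},
\]
i.e. we extend the arrow $LRLA\to LSA$ in the appropriate way, viewed as $L(TA)\to L S A$. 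The extension axioms $(Lf)^{\ddagger}=L(f^{\dagger})$ and $\tilde\eta=L\eta$ are exactly what is needed for this formula to restrict correctly on the pure part.

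\textbf{Checking the axioms.} We then verify the following correspondences.
\begin{itemize}
\item The two unit laws of the relative monad correspond to the law $\lambda\circ S\eta^{T}=\eta^{T}S$ and to the unit compatibility $\lambda\circ\eta^{S}T=T\eta^{S}$.
\item The associativity law $\extension{(\extension f\circ g)}=\extension f\circ\extension g$ splits into the two multiplication compatibilities, with $\mu^{S}$ and with $\mu^{T}$. Here $\mu^{T}=R\epsilon L$ arises whenever two transposes are composed.
\item The extension conditions $(Lf)^{\ddagger}=L(f^{\dagger})$ encode naturality of $\lambda$ with respect to maps of $\Acategory$ together with the $\mu^{S}$ law.
\end{itemize}
Finally, we check that the two constructions are mutually inverse, using the triangle identities and naturality of $\epsilon$.

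\textbf{Main obstacle.} The hardest step is showing that an arbitrary extension is determined by its values on the ``generic'' arrows $LTA\to LSA$, i.e. that $f^{\ddagger}$ is recovered from $\lambda$. We would try the factorisation
\[
f=L(\cdots)\circ\epsilon\text{-type arrow},
\]
writing $f=\epsilon_{LSC}\circ L(f^{\flat})$, where $L(f^{\flat})=L g$ is pure. Associativity, applied to $\extension{(\extension{\epsilon'}\circ L f^{\flat})}$ together with the extension property on $L f^{\flat}$, then forces uniqueness. The subtle point is that $\epsilon_{LSC}$ is not of the form $LX\to LSY$ with $X$ arbitrary in the right way. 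We would therefore write it as $\tilde\eta$-extended: $\epsilon_{LSC}=\extension{(\id)}$ for the identity $LSC\to LSC$ viewed at argument $SC$. The $\mu^{S}$ in the formula then arises from $\extension{\id}$ on $S$ being $L\mu^{S}$ via the extension condition.
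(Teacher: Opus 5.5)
The paper states this theorem without proof; it only remarks that it generalises Beck's Kleisli characterisation. So I judge your plan on its own terms.

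\textbf{What is right.} Both of your constructions are correct. The extension $f^{\ddagger}$ is the transpose of $T\mu^{S}_{C}\circ\lambda_{SC}\circ Sf^{\flat}$. Once the description is untangled (your $k$ is just $L\eta^{S}_{A}$ by a triangle identity), $\lambda_{A}$ is the transpose of $(L\eta^{S}_{A}\circ\epsilon_{LA})^{\ddagger}$.

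\textbf{The gap.} It lies in the step you single out as the main obstacle: recovering an arbitrary $-^{\ddagger}$ from its $\lambda$. Your proposed resolution does not work, for three reasons.
\begin{itemize}
\item Your worry that $\epsilon_{LSC}$ is not a valid input is unfounded. We have $\epsilon_{LSC}\in\Bcategory(L(TSC),LS(C))$, so $(\epsilon_{LSC})^{\ddagger}$ is defined.
\item The identity $\epsilon_{LSC}=(\id)^{\ddagger}$ you then invoke is false, already on types. In fact $(\id_{LSC})^{\ddagger}:LSSC\to LSC$, and the extension condition makes it $L\mu^{S}_{C}$.
\item The composite $(\epsilon')^{\ddagger}\circ Lf^{\flat}$ does not typecheck. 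Nor does the extension condition apply to $Lf^{\flat}$, whose codomain $LTSC$ is not of the form $LS(-)$.
\end{itemize}

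\textbf{The repair.} Insert $L\eta^{S}$ and use only the first unit law, associativity and the extension condition:
\begin{align*}
f^{\ddagger}&=\bigl((\epsilon_{LSC})^{\ddagger}\circ L(\eta^{S}_{TSC}\circ f^{\flat})\bigr)^{\ddagger}=(\epsilon_{LSC})^{\ddagger}\circ LSf^{\flat},\\
(\epsilon_{LSC})^{\ddagger}&=\bigl((\id_{LSC})^{\ddagger}\circ L\eta^{S}_{SC}\circ\epsilon_{LSC}\bigr)^{\ddagger}=L\mu^{S}_{C}\circ(L\eta^{S}_{SC}\circ\epsilon_{LSC})^{\ddagger}=L\mu^{S}_{C}\circ\epsilon_{LSSC}\circ L\lambda_{SC}.
\end{align*}
The justifications are as follows.
\begin{itemize}
\item The first equalities come from the unit law $g=g^{\ddagger}\circ L\eta^{S}$.
\item The second equalities come from associativity, together with $(L(\eta^{S}_{TSC}\circ f^{\flat}))^{\ddagger}=L(Sf^{\flat})$ and $(\id_{LSC})^{\ddagger}=(L\,\id_{SC})^{\ddagger}=L\mu^{S}_{C}$.
\item The last equality is the definition of $\lambda$.
\end{itemize}
Naturality of $\epsilon$ then gives $f^{\ddagger}=\epsilon_{LSC}\circ L(T\mu^{S}_{C}\circ\lambda_{SC}\circ Sf^{\flat})$, which is your first construction. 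This is exactly where $\mu^{S}$ enters.

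\textbf{Axiom bookkeeping.} Your matching of axioms is only approximate.
\begin{itemize}
\item From laws to extensions, $(Lf)^{\ddagger}=L(f^{\dagger})$ follows from $\lambda\circ S\eta^{T}=\eta^{T}S$ and naturality of $\eta^{T}$. It does not come from naturality of $\lambda$ and the $\mu^{S}$-law; those are what the extension condition serves to derive in the converse direction.
\item In the converse direction the laws do not pair off one-to-one. For example, $\lambda\circ S\eta^{T}=\eta^{T}S$ uses both unit laws, associativity and the extension condition.
\end{itemize}

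\textbf{A shorter route.} The identification you deliberately avoid is available for \emph{every} resolution. The comparison functor $\mathbf{Kl}(T)\to\Bcategory$, $X\mapsto LX$, is fully faithful, since $\Bcategory(LX,LY)\cong\Acategory(X,TY)$ compatibly with identities and composition. All the data and laws of an $L$-relative monad on $L\circ S$ live in such hom-sets. So such a monad is literally a Kleisli triple on $\mathbf{Kl}(T)$ with object map $S$. The two extension conditions say that it restricts along $\Acategory\to\mathbf{Kl}(T)$ to $(S,\eta,-^{\dagger})$. The theorem is then Beck's correspondence between distributive laws $ST\to TS$ and extensions of $S$ to $\mathbf{Kl}(T)$, read in Kleisli-triple form. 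This is presumably what the paper's remark has in mind, and it makes the hand computations above unnecessary.
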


This theorem generalises Beck's characterisation of distributive laws
$ST\rightarrow TS$ as extensions of $S$ to the Kleisli category
of $T$. In particular, given $T,S$ two monads on a category $\Acategory$
and a distributive law of monads $ST\rightarrow TS$, this result
can be applied to the category $\Acategory^{T}$ of $T$-algebras
to provide us with a relative adjoint situation as follows   \[\begin{tikzcd}[column sep = 1.5em]
    \Acategory \arrow[rr, "F^T", bend left] \arrow[rrrr, "F^{\mon}", bend left=60] & \bot & \Acategory^T \arrow[ll, "G^T", bend left]  & \bot & \Alg{\mon} \arrow[ll, "G^{\mon}", bend left]
  \end{tikzcd}\]where $\mon=F^{T}\circ S$ is the $F^{T}$-relative monad obtained
by \prettyref{thm:beck-relative}.

\begin{example}
\citet{Mellies2025BLC,Mellies2026TLLA} proposed to extend tensorial
logic and dialogue categories to linear logic exponentials, using
relative rather than non-relative symmetric monoidal comonads as a
way of circumventing obstacles encountered within the programme of
\emph{functorial game semantics} \citep{Mellies12} that are similar
to the difficulties we have described previously regarding notions
of polarities and focusing for exponentials. This setting corresponds
to a variant of the situation \eqref{eq:adj-dialogue-resource} involving
a dialogue category, now with relative comonad and monad:\begin{equation}  \nonumber\begin{tikzcd}[column sep = 1.5em]
    \Mcategory \arrow[rr, bend left, "\Lin"] & \bot & \Vcategory \arrow[rr, "\neg", bend left] \arrow[rrrr, "\Mult", bend left=40] & \bot & \Vcategory^\op \arrow[ll, "\neg", bend left] \arrow[llll, "\Mult", bend left=40]  & \bot & \Mcategory^\op \arrow[ll, "\Lin", bend left]
  \end{tikzcd}\end{equation}
\end{example}

\begin{example}
Another approach to a notion of strength for a relative monad is to
assume a given monoidal structure on the target category $\Scategory$,
which then allows us to define the extension of the relative monad
under a context $\Gamma\in\Scategory$,
\begin{equation}
\Scategory(\Gamma\otimes\shneg P,\mon Q)\rightarrow\Scategory(\Gamma\otimes\mon P,\mon Q)\label{eq:strong-rel-liell}
\end{equation}
This is an assumption made in \citet{LiellCock2026}. By way of comparison,
we assume that $\shneg$ has a right adjoint $\shpos$, which is essential
in making our definition of strength \eqref{eq:rel-rel-strength}
work; one advantage in return is that it does not depend on having
a monoidal structure on $\Scategory$. Such a (strong) monoidal structure
on negative types is indeed unavailable in certain situations that
interest us, such as categories of algebras in general.

\citet{LiellCock2026} introduce the notion of \emph{RMM-LNL model},
given by a linear-non-linear (LNL) model $\adjoint F{\Mcategory}{\Scategory}G$
together with a strong $F$-relative monad $\mon$ in the above sense
\eqref{eq:strong-rel-liell}, and some strong monoidal functor from
some monoidal category into $\Scategory$. Further assuming $\Mcategory$
is distributive, this is again an instance of a non-linear CBPV model
with an effect modality \eqref{eq:lcbpv-effect-relative}. Indeed,
recall that an LNL model \citep{BentonLinearNonLinear} consists in
a symmetric monoidal adjunction $\adjoint F{\Mcategory}{\Scategory}G$
between a cartesian closed category $\Mcategory$ and an symmetric
monoidal closed category $\Scategory$. It is in particular a (non-linear)
CBPV model without sums where $\shneg=F$ and $\shpos=G$; a very
particular class in which the category of stacks $\Scategory$ is
monoidal. In an RMM-LNL model, the natural family of morphisms \eqref{eq:rel-rel-strength}
constituting the strength of the $\shneg$-relative monad,  instantiated
as $\lstr_{\Gamma,P}\in\Mcategory(\Gamma\times G\mon P,G\mon(\Gamma\times P))$,
derives from
\begin{gather*}
\eta_{\Gamma\times P}\in\Scategory(F(\Gamma\times P),\mon(\Gamma\times P))\cong\Scategory(F\Gamma\otimes FP,\mon(\Gamma\times P))\rightarrow\Scategory(F\Gamma\otimes\mon P,\mon(\Gamma\times P))\rightarrow\\
\Scategory(F\Gamma\otimes FG\mon P,\mon(\Gamma\times P))\cong\Scategory(F(\Gamma\times G\mon P),\mon(\Gamma\times P))\cong\Mcategory(\Gamma\times G\mon P,G\mon(\Gamma\times P))\,.
\end{gather*}
\end{example}

\begin{example}
By moving from (co)adjunctions to relative (co)adjunctions for modelling
resource and effect modalities, we indeed recovered the missing completeness
property of the interpretation, giving rise to the syntactic $\ILLpdiam$
model. This previous syntactic example extends to a more general phenomenon
giving rise to a final example of $\ILLpdiam$ models, obtained by
completing another $\ILLpdiam$ model with its thunkable and linear
morphisms. Remember that, in line with the relationships between direct
and indirect models of effectful computation \citep{fuhrmanndirectmodels,Sel01Control},
any linear CBPV model $\adjoint{\shneg}{\Vcategory}{\Scategory}{\shpos}$
\eqref{eq:lcbpv-adj} gives rise to a linear CBPV model
\begin{equation}
\xymatrix{\Pcategory_{t}\xyport{\xyarrow[rr][][][1.5pc]}{\Shneg}{} & \bot & \xyport{\xyarrow[ll][][][1.5pc]}{\Shpos}{}\Ncategory_{l}}
\label{eq:lcbpv-adj-1}
\end{equation}
obtained starting from the non-associative category $\duploid{\shneg}{\shpos}$
and restricting to positive thunkable and negative linear morphisms.
Remember also that this construction is a reflection for the category
of adjunctions $\shpos\dashv\shneg$, and more precisely a completion
(of morphisms of $\Vcategory$ into thunkable morphisms, and of morphisms
of $\Scategory$ into linear morphisms of the adjunction) provided
the starting adjunction has its unit mono and its counit epi \citep{munchduploids}.
Then, resource and effect modalities on $\Vcategory\rightleftarrows\Scategory$,
whether non-relative \eqref{eq:lcbpv-resource-effect} or relative
\eqref{eq:lcbpv-resource-effect-relative}, give rise to relative
resource and effect modalities on $\Pcategory_{t}\rightleftarrows\Ncategory_{l}$.
This is a corollary to the results established in the next section
(\prettyref{cor:Any-ILLp-model}). In the non-relative case, however,
the same obstacle \eqref{eq:oblique-comonad-shift} previously mentioned
prevents us from obtaining (directly and in general) a non-relative
comonad on $\Pcategory_{t}$ and a non-relative monad on $\Ncategory_{l}$.
\end{example}

These results motivate in the rest of the paper an investigation from
the non-associative perspective into broader notions of model based
on resource and effect modalities relative to the linear CBPV adjunction.

\section{Adjunctions over duploids and relative adjunctions}

\subsection{Prelude: distributive closed duploids and distributive dialogue duploids}

We have recalled previously that every adjunction $\shneg\dashv\shpos$
induces a non-associative category $\duploid{\shneg}{\shpos}$; it
is in fact a duploid as defined next.
\begin{defn}
An object $A$ of a non-associative category is \textbf{positive}
if every morphism $f:A\rightarrow B$ is linear. It is \textbf{negative}
if every morphism $f:B\rightarrow A$ is thunkable. A \textbf{duploid}
is a non-associative category in which every object is positive or
negative (or both), and in which the identity functor has both a positive
left adjoint (which we write $\Shpos$) and a negative right adjoint
(written $\Shneg$). Diagrammatically:  \[
    \begin{tikzcd}
 \Dduploid \arrow[rr,harpoon, bend left, "\Shpos"] 
 & \bot & \Dduploid \arrow[ll,harpoon, "\Shneg", bend left]
    \end{tikzcd}
  \]
\end{defn}

\begin{defn}
An adjunction between two non-associative categories $\graphadjoint F{\Dduploid}{\Eduploid}G$
(\prettyref{def:An-adjunction}) is said to be \textbf{positive} when
the image of $F$ on objects is positive. Dually, it is \textbf{negative}
when the image of $G$ on objects is negative.
\end{defn}

The notions of \emph{symmetric monoidal closed duploid} and of \emph{dialogue
duploid}, providing non-associative categorical and axiomatic semantics
to $\system{IMLL}$ (intuitionistic multiplicative linear logic) and
$\system{MLL}$ (classical multiplicative linear logic) respectively,
are defined in \citealt{MMMM2026}. These non-associative categorical
and axiomatic semantics are designed to ensure that their equational
theory and normal forms coincide with the syntax of the multiplicative
fragment of the corresponding $L$-calculi.

We now define here the distributive structure in the continuation
of \prettyref{subsec:Adjunctions-in-non-associative}.
\begin{defn}
Let $\Dduploid$ be a non-associative category. \textbf{Finite sums}
on $\Dduploid$ is the structure given by: 1)~a positive left adjoint
to the diagonal functor $\Delta:\Dduploid\rightarrow\Dduploid\times\Dduploid$,
which we write $\oplus$, 2)~a positive left adjoint to the functor
$\Dduploid\rightarrow\ast$ into the trivial category, in other words
a positive initial object, which we write $\falso$. \textbf{Finite
products} (notation $\with,\top$) is the structure given by finite
sums on $\Dduploid^{\op}$.
\end{defn}

\begin{defn}
A \textbf{distributive duploid} is a symmetric monoidal duploid $\Dduploid$
that has finite sums and such that for all $A,B,C\in\Dduploid$ the
canonical thunkable map 
\[
(A\otimes B)\oplus(A\otimes C)\rightarrow A\otimes(B\oplus C)
\]
has a thunkable inverse.
\end{defn}

\subsection{A correspondence between relative adjunctions and adjunctions in
non-associative direct semantics}

Our goal is to extend this non-associative categorical semantics by
introducing a notion of linear-non-linear adjunction and strong monad
for duploids. One main observation of the present paper is that the
appropriate generalisation of linear-non-linear adjunctions to a polarised
setting is in correspondence with the notion of relative coadjunction
with respect to the polarity shift~$\nDownarrow$ appearing as right
adjoint in the adjunction associated to the duploid   \[
    \begin{tikzcd}
 \Pcategoryt \arrow[rr, bend left, "\nUparrow"] 
 & \bot & \Ncategoryl \arrow[ll, "\nDownarrow", bend left]
    \end{tikzcd}
  \]between the category of positive thunkable maps and the category of
negative linear maps, whereas the strong monads on duploids are in
correspondence with the notion of relative adjunction with respect
to the adjoint polarity shift~$\Shneg$. 

We first explain how, given an adjunction $L\dashv R$, an $R$-relative
coadjunction $F\coadjunction RG$ induces a positive adjunction on
the duploid~$\duploid LR$ (\prettyref{prop:constr-pos-adj}). By
duality, this implies that every $L$-relative adjunction $F\adjunction LG$
induces a negative adjunction on the duploid~$\duploid LR$. Conversely,
we show that every positive adjunction $\graphadjoint F{\Ccategory}{\Dduploid}G$
between an associative category~$\Ccategory$ and a duploid $\Dduploid$
induces a $\nDownarrow$-relative coadjunction (\prettyref{prop:constr-rel-coadj}).
(And thus dually for the construction of a relative adjunction from
negative adjunction.)

Let us recall first a few fundamental and useful properties of adjunctions
over non-associative categories, also developed in more detail in
\citealt{MMMM2026}.
\begin{prop}[R.A.P.L.]
Right adjoints preserve linear maps and their composition.
\end{prop}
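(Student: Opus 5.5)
The plan is to unfold the definitions of linearity and of adjunction (\prettyref{def:An-adjunction}) and to express $G$ on morphisms through the bijection $\varphi$. For $h:B\to B'$ in $\Eduploid$, naturality in $B$ applied to the chain $h\ccomp\id_{B}\ccomp F\id_{GB}$, with oblique morphism $\id_{B}$, gives
\[
Gh=Gh\ccomp\id_{GB}=\varphi(h\ccomp\epsilon_{B}),
\]
where $\epsilon_{B}=\varphi^{-1}(\id_{GB}):FGB\to B$ is the counit. I first have to justify $\varphi(\id_{FA})\ccomp\ldots$-style identities carefully: in a non-associative category identities are still neutral, and the naturality equations only involve two-fold compositions, so this step is routine.

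Now suppose $h$ is linear, and let $g:Y\to X$, $f:X\to GB$ be morphisms of $\Dduploid$. I want $Gh\ccomp(f\ccomp g)=(Gh\ccomp f)\ccomp g$. The key step is to transport both sides across $\varphi^{-1}$. Naturality in $A$ gives $\varphi^{-1}(k\ccomp g)=\varphi^{-1}(k)\ccomp Fg$, and naturality in $B$ gives $\varphi^{-1}(Gh\ccomp k)=h\ccomp\varphi^{-1}(k)$. Then
\[
\varphi^{-1}\bigl(Gh\ccomp(f\ccomp g)\bigr)=h\ccomp\bigl(\varphi^{-1}(f)\ccomp Fg\bigr)
\quad\text{and}\quad
\varphi^{-1}\bigl((Gh\ccomp f)\ccomp g\bigr)=\bigl(h\ccomp\varphi^{-1}(f)\bigr)\ccomp Fg .
\]
These two are equal because $h$ is linear. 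Since $\varphi$ is a bijection, $Gh$ is linear. No associativity is needed anywhere, only the two separate naturality laws, which are exactly what \prettyref{def:An-adjunction} provides.

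For preservation of composition, take linear $h:B\to B'$ and $h':B'\to B''$. Then
\[
\varphi^{-1}\bigl(G(h'\ccomp h)\ccomp k\bigr)=(h'\ccomp h)\ccomp\varphi^{-1}(k)
\quad\text{and}\quad
\varphi^{-1}\bigl(Gh'\ccomp(Gh\ccomp k)\bigr)=h'\ccomp\bigl(h\ccomp\varphi^{-1}(k)\bigr).
\]
These agree by linearity of $h'$. Taking $k=\id_{GB}$ yields $G(h'\ccomp h)=Gh'\ccomp Gh$. Identity preservation comes for free, since $G$ is a reflexive graph morphism.

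The main obstacle I expect is bookkeeping rather than mathematics. The delicate point is to check that every rewriting step uses naturality only in the exact two-morphism shape allowed by the definition, that is, with the oblique morphism in the middle. In particular, the expression $h\ccomp(\varphi^{-1}(f)\ccomp Fg)$ must arise by applying naturality in $A$ first and then naturality in $B$, never by silently reassociating.
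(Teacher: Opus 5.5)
Your proof is correct. Transporting each chain across $\varphi$ using only the two separate naturality laws, and then invoking linearity of $h$ (resp.\ $h'$) in $\Eduploid$, is the standard argument; the paper does not reprove the statement itself but recalls it from \citet{MMMM2026}.

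One slip, which is harmless: in your first paragraph the oblique morphism should be $\epsilon_{B}$, i.e.\ the chain is $h\ccomp\epsilon_{B}\ccomp F\id_{GB}$, not one involving $\id_{B}$. That formula for $Gh$ is never used afterwards anyway.

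Your composition argument also only uses linearity of $h'$. So it in fact yields $G(h'\ccomp h)=Gh'\ccomp Gh$ for arbitrary $h$.
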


\begin{prop}
Left adjoints preserve thunkable maps and their composition.
\end{prop}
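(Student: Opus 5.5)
The plan is to dualise the argument for R.A.P.L.: transport everything through the bijection $\varphi$ and use only the two separate naturality laws of \prettyref{def:An-adjunction} together with injectivity of $\varphi$. Fix an adjunction given by bare functors $F:\Dduploid\graphto\Eduploid$ and $G:\Eduploid\graphto\Dduploid$ with bijections $\varphi_{A,B}:\Eduploid(FA,B)\cong\Dduploid(A,GB)$. Write $\eta_{A}\df\varphi(\id_{FA})\in\Dduploid(A,GFA)$.

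First I will express $F$ on morphisms through $\varphi$. Let $f:A\rightarrow A'$ in $\Dduploid$. Apply naturality in the first variable to the chain $\id_{FA'}\ccomp Ff$, and use that identities are neutral (this holds in any non-associative category). This gives
\[
\varphi(Ff)=\varphi(\id_{FA'}\ccomp Ff)=\varphi(\id_{FA'})\ccomp f=\eta_{A'}\ccomp f,
\]
that is, $Ff=\varphi^{-1}(\eta_{A'}\ccomp f)$. Note that no hypothesis on $f$ is used here.

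Second, I will show that $Ff$ is thunkable whenever $f$ is. Take any chain $k\ccomp h\ccomp Ff$ in $\Eduploid$, with $h:FA'\rightarrow B$ and $k:B\rightarrow B'$. Since $\varphi$ is a bijection, it suffices to show that both bracketings have the same image under $\varphi$.
\begin{itemize}
\item For the first bracketing, naturality in the first variable and then in the second gives
\[
\varphi((k\ccomp h)\ccomp Ff)=\varphi(k\ccomp h)\ccomp f=(Gk\ccomp\varphi(h))\ccomp f.
\]
\item For the second bracketing, the laws are applied in the opposite order, which gives
\[
\varphi(k\ccomp(h\ccomp Ff))=Gk\ccomp\varphi(h\ccomp Ff)=Gk\ccomp(\varphi(h)\ccomp f).
\]
\end{itemize}
These two expressions coincide precisely because $f$ is thunkable, applied to the chain $Gk\ccomp\varphi(h)\ccomp f$.

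Third, I will treat preservation of composition. Let $f:A\rightarrow A'$ be thunkable and $g:A'\rightarrow A''$ be arbitrary. Using the first step twice, I compute
\[
\varphi(Fg\ccomp Ff)=\varphi(Fg)\ccomp f=(\eta_{A''}\ccomp g)\ccomp f,
\qquad
\varphi(F(g\ccomp f))=\eta_{A''}\ccomp(g\ccomp f).
\]
These agree by thunkability of $f$, so $F(g\ccomp f)=Fg\ccomp Ff$. Identities are preserved by any bare functor. Hence $F$ restricts to a proper functor between the subcategories of thunkable maps.

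The only delicate point is bookkeeping: at each step I must check that the naturality laws are applied to chains of exactly the shape allowed in \prettyref{def:An-adjunction}. That is, $Ff$ must appear on the right, $k$ on the left, and the middle morphism must be oblique. I must also check that no associativity is silently used other than the single instance supplied by thunkability of $f$. Everything else is formally dual to R.A.P.L.
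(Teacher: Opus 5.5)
Your proof is correct and is essentially the intended argument. The paper gives no proof here, but the statement is presented as the dual of R.A.P.L., and you transport along $\varphi$ using the two separate naturality laws, with thunkability of $f$ supplying the only associativity instance in $\Dduploid$ and injectivity of $\varphi$ closing each step. Your third step in fact proves the slightly stronger $F(g\ccomp f)=Fg\ccomp Ff$ for $f$ thunkable and $g$ arbitrary.
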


\begin{prop}
The left adjoint $F:\Ccategory\graphto\Dduploid$ of a positive adjunction
where $\Ccategory$ is a category is a functor, and moreover restricts
into a functor 
\[
F\;:\;\Ccategory\longrightarrow\Pcategoryt
\]
 from the category $\Ccategory$ to the category $\Pcategoryt$ of
positive thunkable maps in $\Dduploid$. Dually, the right adjoint
$G:\Ccategory\graphto\Dduploid$ of a negative adjunction where $\Ccategory$
is a category is a functor, and moreover restricts into a functor
\[
G\;:\;\Ccategory\longrightarrow\Ncategoryl
\]
 from the category $\Ccategory$ to the category $\Ncategoryl$ of
negative linear maps in $\Dduploid$. 
\end{prop}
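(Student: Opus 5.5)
The plan is to reduce the statement to the preceding proposition that left adjoints preserve thunkable maps and their composition, using the observation that in an associative category $\Ccategory$ every morphism is thunkable. Indeed, by definition $f$ is thunkable when every chain $h\ccomp g\ccomp f$ associates, and in $\Ccategory$ every chain associates. So for any composable $f,g$ in $\Ccategory$, both maps are thunkable, and the proposition gives that $Ff$ and $Fg$ are thunkable in $\Dduploid$ with $F(g\circ f)=Fg\ccomp Ff$. Identities are preserved because a bare functor is a morphism of reflexive graphs. Hence $F$ is a proper functor $\Ccategory\rightarrow\Dduploid$.

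For the restriction, the adjunction is positive, so every object $FA$ is positive in $\Dduploid$. Together with the fact that every $Ff$ is thunkable, $F$ lands in the subcategory $\Pcategoryt$ of positive objects and thunkable maps. Composition in $\Pcategoryt$ is the (there associative) composition of $\Dduploid$, and we have just shown that $F$ preserves it. So $F$ corestricts to a functor $\Ccategory\rightarrow\Pcategoryt$.

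The one step that deserves care, and which I expect to be the main (mild) obstacle, is to check that the earlier proposition really applies without further hypotheses on the domain. If its proof is not available in the needed form, I would argue directly. Set $\eta_A=\varphi(\id_{FA})\in\Ccategory(A,GFA)$. Naturality of $\varphi$ in $A$ gives $\varphi(Ff)=\varphi(\id\ccomp Ff)=\eta\circ f$, and naturality in the second variable gives $\varphi(g\ccomp Ff)=Gg\circ\eta\circ f$. For a chain $h\ccomp g\ccomp Ff$, both bracketings have the same image under the bijection $\varphi$. This is computed with the two naturalities, using that $G$ need only be applied to $h$ and $g\ccomp Ff$, and that associativity holds in $\Ccategory$. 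Injectivity of $\varphi$ then yields thunkability of $Ff$, and the same computation with $g=Fg'$ yields $F(g'\circ f)=Fg'\ccomp Ff$.

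The dual statement for the right adjoint $G$ of a negative adjunction follows by applying the above to $\Dduploid^{\op}$ and $\Ccategory^{\op}$. Under this duality, thunkable maps become linear ones, positive objects become negative ones, and $\Pcategoryt$ becomes $\Ncategoryl$. Alternatively, one can invoke R.A.P.L. directly, since in $\Ccategory$ every map is also linear.
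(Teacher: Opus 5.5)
Your proposal is correct and takes essentially the approach the paper intends. The paper gives no separate proof: it states the proposition right after the facts that left adjoints preserve thunkable maps and their composition (and R.A.P.L. for right adjoints), so the implicit argument is exactly yours — every map of an associative category is both thunkable and linear — combined with positivity (resp.\ negativity) of the adjunction. Your direct check via $\eta_A=\varphi(\mathrm{id}_{FA})$ and injectivity of $\varphi$ is also sound; it simply unfolds that preceding proposition in this special case.
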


\paragraph{Constructing a positive adjunction from an $R$-relative coadjunction}

To demonstrate the construction of an adjunction on duploids from
an $R$-relative coadjunction, we start by showing that we can construct
a bare functor from the right adjoint of an $R$-relative coadjunction.
\begin{prop}
Let $\adjoint L{\Acategory}{\Bcategory}R$ be an adjunction over associative
categories and $F\coadjunction RG:\Bcategory\to\Ccategory$ be a $R$-relative
coadjunction. Then, there exists a bare functor $G^{-}:\duploid LR\to\Ccategory$.
\end{prop}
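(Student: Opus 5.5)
The plan is to define $G^{-}$ by cases on the polarity of objects, transporting oblique morphisms through the adjunction $L\dashv R$ and then through the bijection $\Psi$ of the relative coadjunction. Since a bare functor is only a morphism of reflexive graphs, it remains to check preservation of identities; no compatibility with the non-associative composition $\ccomp$ is needed.

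\emph{Objects.} Recall that $\obj{\smash{\duploid LR}}=\obj{\Acategory}\uplus\obj{\Bcategory}$. For positive $X\in\obj{\Acategory}$ I set $G^{-}X=GLX$. For negative $X\in\obj{\Bcategory}$ I set $G^{-}X=GX$. In both cases $G^{-}X=G(\varn X)$.

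\emph{Morphisms.} A morphism $f\in\duploid LR(X,Y)=\Oblique(\varp X,\varn Y)$ has two adjuncts, which I write $\fla f\in\Bcategory(L\varp X,\varn Y)$ and $\sha f\in\Acategory(\varp X,R\varn Y)$. I define $G^{-}f$ by cases on $X$.
\begin{itemize}
\item If $X$ is positive, then $\varp X=X$ and $\varn X=LX$. I set $G^{-}f=G(\fla f)\in\Ccategory(GLX,G\varn Y)$, which only uses that $G:\Bcategory\to\Ccategory$ is a functor.
\item If $X$ is negative, then $\varp X=RX$, and there is no map $X\to\varn Y$ in $\Bcategory$ to which $G$ could be applied. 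Instead I use the counit $\varepsilon_{X}\in\Acategory(FGX,RX)$ of the relative coadjunction and set
\[
G^{-}f=\Psi_{GX,\varn Y}\bigl(\sha f\circ\varepsilon_{X}\bigr)\in\Ccategory(GX,G\varn Y).
\]
\end{itemize}
The positive case has the same shape. By naturality of $\Psi$ in $B$, together with $\Psi(\varepsilon)=\id$, one gets $G(\fla f)=\Psi(R\fla f\circ\varepsilon_{LX})$. So both clauses read ``postcompose the counit at $\varn X$ with an $\Acategory$-map into $R\varn Y$, then apply $\Psi$''. I would record this uniform formula for later use, since it is what should make the subsequent adjunction statement tractable.

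\emph{Identities.} If $X$ is positive, $\id_{X}$ in $\duploid LR$ is the oblique map whose adjunct $\fla{\id}$ is $\id_{LX}$, so $G^{-}\id_{X}=G\id_{LX}=\id_{GLX}$. If $X$ is negative, $\id_{X}$ corresponds to $\sha{\id}=\id_{RX}$, so $G^{-}\id_{X}=\Psi(\varepsilon_{X})$. This equals $\id_{GX}$ because the counit is by definition $\varepsilon_{X}=\Psi^{-1}(\id_{GX})$.

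The main point to get right is the object assignment. A naive uniform choice such as $G^{-}X=GL\varp X$ gives $GLRX$ rather than $GX$ on negative objects, which would spoil the intended adjunction. The negative case is then the only place where the relative coadjunction, rather than mere functoriality of $G$, is genuinely needed. I expect the real difficulty to be deferred to the next step: proving naturality of the induced bijection in the sense of \prettyref{def:An-adjunction}. There the uniform formula and the naturality of $\Psi$ in $C$ and $B$ will do the work.
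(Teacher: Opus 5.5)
Your proposal is correct and follows the paper's proof. It uses the same object assignment ($GLX$ on objects of $\Acategory$, $GX$ on objects of $\Bcategory$) and the same two clauses on morphisms: $G(\Phi^{-1}f)$ in the positive case and $\Psi_{GX,Y^{-}}(f\circ\varepsilon_{X})$ in the negative case. Your check that identities are preserved, via $\Psi(\varepsilon_{X})=\id_{GX}$, and your uniform formula $G(f^{\flat})=\Psi(Rf^{\flat}\circ\varepsilon_{LX})$ are both correct, and the paper leaves them implicit.
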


\begin{proof}
On objects, we define $G^{-}$ as $G$ on objects of $\Bcategory$
and as $G\circ L$ on objects of $\Acategory$. Let $f\in\duploid LR(X,Y)$,
that is, a morphism in $\Acategory(X^{+},RY^{-})$. The definition
of $G^{-}f$ depends on whether $X$ is in $\Acategory$ or $\Bcategory$.
If $X$ is in $\Acategory$, we define $G^{-}f$ as: 
\[
G^{-}f\df G\Phi^{-1}_{X,Y^{-}}(f)\in\Ccategory(GLX,GY^{-})
\]
where $\Phi_{A,B}:\Bcategory(LA,B)\stackrel{\cong}{\longrightarrow}\Acategory(A,RB)$.
If $X$ is in $\Bcategory$, we define $G^{-}$ as: 
\[
G^{-}f\df\Psi_{GX,Y^{-}}(f\circ\epsilon_{X})\in\Ccategory(GX,GY^{-})
\]
where $\Psi_{C,B}:\Acategory(FC,RB)\stackrel{\cong}{\longrightarrow}\Ccategory(C,GB)$.
\end{proof}

\begin{prop}
\label{prop:constr-pos-adj}Every $R$-relative coadjunction $F\coadjunction RG$
of the form~\eqref{equation/coadjunction} induces a positive adjunction
on $\duploid LR$:\[\begin{tikzcd}[column sep = 1.5em]
  {\Ccategory}  \arrow[rr,"F", bend left = 40]  & \bot & \arrow[ll,"G^-",harpoon, bend left = 40] \makebox[1ex][l]{$\duploid{L}{R}$}
\end{tikzcd}\]
\end{prop}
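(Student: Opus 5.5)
We first fix the data. The left adjoint $F:\Ccategory\graphto\duploid LR$ is defined as $F:\Ccategory\to\Acategory$ followed by the inclusion of $\Acategory$ as the positive objects of $\duploid LR$. On a morphism $c:C\to C'$ it sends $c$ to the oblique morphism $\eta_{FC'}\circ Fc\in\Acategory(FC,RLFC')=\duploid LR(FC,FC')$. This is the usual embedding of $\Acategory$ into the thunkable positive morphisms, so the adjunction will be positive by construction. The right adjoint is the bare functor $G^{-}$ from the previous proposition. The bijection is taken on hom-sets
\[
\duploid LR(FC,Y)=\Acategory(FC,RY^{-})\cong\Ccategory(C,GY^{-})=\Ccategory(C,G^{-}Y),
\]
and we let $\varphi_{C,Y}\df\Psi_{C,Y^{-}}$. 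Here we use that $(FC)^{+}=FC$ and that $G^{-}Y=GY^{-}$ in both cases, since $G^{-}$ on objects of $\Acategory$ is $GL$ and $(X)^{-}=LX$.

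Naturality in $C$ is then immediate. We must check $\varphi(h\ccomp Fc)=\varphi(h)\circ c$. Composing an oblique map $h:FC'\to Y$ with $Fc=\eta\circ Fc$ in $\duploid LR$ reduces, by the unit law of $L\dashv R$, to the plain $\Acategory$-composite $h\circ Fc$. The claim therefore follows from naturality of $\Psi$ in $C$.

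The substantive step is naturality in $Y$: for $g:Y\to Y'$ in $\duploid LR$, we need $\Psi(g\ccomp h)=G^{-}g\circ\Psi(h)$. I would split into the two cases of the definition of $G^{-}$.
\begin{itemize}
\item If $Y\in\Bcategory$, then $g\in\Acategory(RY,RY'^{-})$. The composite $g\ccomp h$, where $h$ has positive source $FC$, is computed as $g\circ h$ in $\Acategory$. So we must show $\Psi(g\circ h)=\Psi(g\circ\epsilon_{Y})\circ\Psi(h)$. To do this, write $h=\varepsilon_{Y}\circ F\Psi(h)$, using that $\varepsilon$ is the counit of the relative coadjunction. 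Naturality of $\Psi$ in $C$ then gives $\Psi(g\circ\varepsilon_{Y}\circ F\Psi(h))=\Psi(g\circ\varepsilon_{Y})\circ\Psi(h)$.

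Care is needed with the notation. The definition of $G^{-}$ uses $\epsilon_{X}$, which must denote the relative counit $FGX\to RX$; that is what makes the types match. With this reading, the identity is exactly the above.
\item If $Y=X\in\Acategory$, then $h\in\Acategory(FC,RLX)$ and $g\in\Acategory(X,RY'^{-})$. In the duploid, $g\ccomp h=R(\Phi^{-1}g)\circ h$, composing through the counit of $L\dashv R$. Since $G^{-}g=G\Phi^{-1}(g)$, the required equation $\Psi(R(\Phi^{-1}g)\circ h)=G(\Phi^{-1}g)\circ\Psi(h)$ is precisely naturality of $\Psi$ in $B\in\Bcategory$.
\end{itemize}

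The main obstacle I expect is pinning down the composition of $\duploid LR$ in the mixed-polarity cases, namely which of $\eta$ or the counit of $L\dashv R$ mediates. I would recall it from \citet{munchduploids}: when the middle object is negative, composition goes through $R$ and the counit; when it is positive, it goes through the unit. Once that is fixed, every case reduces to naturality of $\Psi$ in one variable together with the defining factorisation through the relative counit.

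Finally, one should check that $F$ preserves identities, which holds since $\eta$ is the identity of $FC$ in $\duploid LR$. Positivity holds because every object of $\Acategory$ is positive in $\duploid LR$.
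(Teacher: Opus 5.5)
Your proof is correct and follows the paper's route: the bijection is $\Psi_{C,Y^{-}}$, obtained by the same case split on whether $Y$ lies in $\Bcategory$ or $\Acategory$. The paper stops at exhibiting this bijection, so your definition of $F$ on morphisms as $\eta\circ Fc$ and your verification of both naturality conditions fill in what the paper leaves implicit (via $h=\varepsilon_{Y}\circ F\Psi(h)$ and naturality in $C$ in the negative case, and naturality of $\Psi$ in $B$ in the positive case); only your informal one-line summary of duploid composition has unit and counit swapped, while the formulas you actually use in each case are right.
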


\begin{proof}
We construct the bijection of the adjunction $\duploid LR(FC,X)\cong\Ccategory(C,G^{-}X)$
by case analysis on $X$. If $X$ is in $\Bcategory$, then: 
\[
\duploid LR(FC,X)=\Acategory(FC,RX)\cong\Ccategory(C,GX)\cong\Ccategory(C,G^{-}X)
\]
If $X$ is in $\Acategory$, then: 
\[
\duploid LR(FC,X)=\Acategory(FC,RLX)\cong\Ccategory(C,GLX)\cong\Ccategory(C,G^{-}X)\qedhere
\]
\end{proof}

\paragraph{Constructing a $\protect\nDownarrow$-relative coadjunction from
a positive adjunction}
\begin{prop}
\label{prop:constr-rel-coadj}Every positive adjunction $\graphadjoint F{\Ccategory}{\Dduploid}G$
where $\Ccategory$ is a category induces a $\nDownarrow$-relative
coadjunction of the form: \[\begin{tikzcd}[row sep = 0.5em, column sep = 0.5em]
       & \Ccategory \arrow[ldd, bend right = 20, "F"{swap}]& \\
       & \dashv &  \\
      \Pcategoryt & & \Ncategoryl  \arrow[ll, "\nDownarrow"] \arrow[luu,bend right = 20, "G"{swap}] 
    \end{tikzcd}\]where we write $G$ for the restriction of the original right adjoint
$G:\Dduploid\graphto\Ccategory$.
\end{prop}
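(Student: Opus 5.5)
We need to build, from a positive adjunction $F:\Ccategory\rightleftharpoons\Dduploid:G$ with $\varphi_{C,X}:\Dduploid(FC,X)\cong\Ccategory(C,GX)$, a $\Shpos$-relative coadjunction, i.e.\ a family of bijections
\[
\Psi_{C,N}:\Pcategoryt(FC,\Shpos N)\cong\Ccategory(C,GN)
\]
natural in $C\in\Ccategory$ and $N\in\Ncategoryl$.

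\emph{The two functors.} By the proposition above on positive adjunctions out of a category, $F$ is a functor and restricts to $F:\Ccategory\to\Pcategoryt$; in particular $FC$ is positive and $F$ sends every morphism to a thunkable one. For $G$, R.A.P.L.\ says that $G$ preserves linear maps and their composition. Since $\Ccategory$ is associative, the restriction of $G$ to $\Ncategoryl$ is therefore an honest functor $G:\Ncategoryl\to\Ccategory$. Identities are preserved because bare functors preserve them.

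\emph{The bijection.} The hom-set $\Pcategoryt(FC,\Shpos N)$ consists of thunkable maps $FC\to\Shpos N$. Every map out of the positive object $FC$ is linear, and every map into the negative object $N$ is thunkable. I will use the adjunction $\Shpos\dashv\mathrm{Id}$ in the duploid, via its counit $\varepsilon_N:\Shpos N\to N$: postcomposition $g\mapsto\varepsilon_N\ccomp g$ gives a bijection between thunkable maps $FC\to\Shpos N$ and all maps $FC\to N$. This is the standard duploid fact that $\Shpos$ represents thunkable maps; if it is not available as stated, I will derive it from the adjunction of $\Shpos$ with the identity and the positivity of $\Shpos N$. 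I then compose with $\varphi$:
\[
\Pcategoryt(FC,\Shpos N)\cong\Dduploid(FC,N)\overset{\varphi}{\cong}\Ccategory(C,GN).
\]

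\emph{Naturality.} This is the main obstacle, since compositions in $\Dduploid$ need not associate.

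In $C$: precomposing with $Fc$ for $c\in\Ccategory$ is handled by naturality of $\varphi$ in the first variable. The bracketing $(\varepsilon\ccomp g)\ccomp Fc$ reassociates because $Fc$ is thunkable.

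In $N$: let $n:N\to N'$ be linear. The functorial action on the left is $g\mapsto\Shpos n\ccomp g$, where $\Shpos n$ is the thunkable map corresponding to $n\ccomp\varepsilon_N$. I need
\[
\varepsilon_{N'}\ccomp(\Shpos n\ccomp g)=n\ccomp(\varepsilon_N\ccomp g).
\]
Both bracketings are settled by associativity: $g$ is thunkable, and $n$ is linear so any chain $n\ccomp-\ccomp-$ associates. Naturality of $\varphi$ in the second variable, $\varphi(n\ccomp h)=Gn\ccomp\varphi(h)$, then completes the argument.

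The care needed is exactly in tracking which chains associate. Every chain here has either a thunkable rightmost factor or a linear leftmost factor, so the definitions of thunkable and linear suffice throughout.
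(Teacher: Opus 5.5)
Your proposal is correct and follows the paper's argument: the paper's proof is just the chain $\Pcategoryt(FC,\Shpos N)\cong\Dduploid(FC,N)\cong\Ccategory(C,GN)$, and your functoriality check for the restricted $G$ and your tracking of which chains associate in the naturality squares fill in what the paper leaves implicit. If you derive the first bijection by hand, the key fact is that the unit $\eta_N\colon N\to\Shpos N$ of $\Shpos\dashv\mathrm{Id}$ is thunkable (this is just naturality of that adjunction in its second argument) and inverse to $\varepsilon_N$; hence $f\mapsto\eta_N\ccomp f$ inverts $g\mapsto\varepsilon_N\ccomp g$, because both $g$ and any $f$ into the negative object $N$ are thunkable.
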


\begin{proof}
Let $C$ be an object of $\Ccategory$ and $N$ an object of $\Ncategoryl$.
We construct the bijection of the $\nDownarrow$-relative adjunction
by using the bijection of the adjunction on duploids:
\[
\Pcategoryt(FC,\nDownarrow N)\cong\Dduploid(FC,N)\cong\Ccategory(C,GN)\qedhere
\]
\end{proof}

\subsection{Relative LNL coadjunctions and direct LNL adjunctions}

We now consider the case where the modality is the exponential $A\mapsto\,\com\,A$.
The linear-non-linear adjunction becomes, when polarized, either a
$R$-relative linear-non-linear coadjunction in an indirect setting
or a positive adjunction with a cartesian category in a direct setting.
We also show how we can construct one from the other.
\begin{defn}
Let $\Acategory$ and $\Bcategory$ be two categories such that $\Acategory$
is symmetric monoidal category and $R:\Bcategory\to\Acategory$ be
a functor. A \textbf{$R$-relative linear-non-linear coadjunction}
is a $R$-relative coadjunction:\begin{equation}\label{equation/relative-LNL}
    \begin{tikzcd}
      \Mcategory \arrow[rr, bend left, "\Lin"]& \bot & \Acategory  & & \Bcategory \arrow[ll, bend left, "R"] \arrow[llll, bend left = 40, "\Mult"]
    \end{tikzcd}\end{equation}such that $\Mcategory$ is cartesian and $\Lin$ is
a strong monoidal functor. 
\end{defn}

This relative coadjunction means that we have a family of bijections
: 
\[
\varphi_{\vec{X_{i}},A}:\Acategory(\Lin X_{1}\otimes\dots\otimes\Lin X_{n},RA)\cong\Mcategory(X_{1}\times\dots\times X_{n},\Mult A)
\]
natural in $X_{1}$, …, $X_{n}$ and $A$.

\begin{defn}
A \textbf{direct linear-non-linear adjunction} is a positive adjunction
  \[
    \begin{tikzcd}
      \Mcategory \arrow[rr, bend left, "\Lin"] & \bot & \Dduploid \arrow[ll, bend left, harpoon, "\Mult"]
    \end{tikzcd}
  \]between a symmetric monoidal duploid $\Dduploid$ and a cartesian
category $\Mcategory$ such that the functor from $\Mcategory$ to
$\Pcategoryt$ induced by $\Lin$ is strong monoidal. 
\end{defn}

\paragraph{Correspondence theorem}
\begin{prop}
Every $R$-relative linear-non-linear coadjunction $\Lin\coadjunction R\Mult$
of the form \eqref{equation/relative-LNL} induces a direct linear-non-linear
adjunction:\begin{equation}\label{equation/indirect-direct-lnl}
\begin{tikzcd}
  \makebox[1ex][r]{$\Mcategory$}  \arrow[rr,"\Lin",  bend left = 40]  & \bot & \arrow[ll,"\Mult^-",harpoon, bend left = 40] \makebox[1ex][l]{$\duploid{L}{R}$}
\end{tikzcd}\end{equation}
\end{prop}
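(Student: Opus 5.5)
The plan is to specialise \prettyref{prop:constr-pos-adj} to the relative linear-non-linear coadjunction $\Lin\coadjunction R\Mult$, and then verify the one extra condition in the definition of a direct linear-non-linear adjunction: strong monoidality of the functor $\Mcategory\to\Pcategoryt$ induced by $\Lin$.

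First, I apply \prettyref{prop:constr-pos-adj} with $\Ccategory=\Mcategory$, $F=\Lin$ and $G=\Mult$. This gives a positive adjunction $\Lin\dashv\Mult^{-}$ between $\Mcategory$ and $\duploid LR$, where $\Mult^{-}$ is the bare functor from the preceding proposition. It is positive because $\Lin$ lands in $\Acategory$, whose objects are positive in $\duploid LR$.

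Second, I check that $\duploid LR$ is a symmetric monoidal duploid. The tensor is induced from the symmetric monoidal structure on $\Acategory$, as in the linear CBPV setting of \prettyref{prop:The-deductive-system}. I take this from the standing hypotheses (the paper's tacit assumption that $L\dashv R$ is a linear CBPV adjunction) together with the construction in \citealt{MMMM2026}.

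Third, I identify the functor $\Mcategory\to\Pcategoryt$ induced by $\Lin$, using the earlier proposition that left adjoints of positive adjunctions out of a category restrict to $\Pcategoryt$. The thunkable maps between positive objects of $\duploid LR$ include the maps of $\Acategory$: a morphism $X\to Y$ in $\Acategory$ is sent to $\eta_Y\circ f\in\Acategory(X,RLY)$, and such maps are thunkable by the usual duploid argument (\citealt{munchduploids}). On morphisms, the left adjoint $\Lin$ of the constructed adjunction is, by construction, $f\mapsto\eta\circ\Lin f$. So the induced functor factors as
\[
\Mcategory\xrightarrow{\Lin}\Acategory\longrightarrow\Pcategoryt .
\]

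The remaining task is to show that the inclusion $\Acategory\to\Pcategoryt$ is strong monoidal for the duploid tensor. Once that holds, strong monoidality of $\Lin$ transfers through the composite. This is where I expect the real work to be. I would show that the tensor of $\duploid LR$ restricted to $\Acategory$-maps (composed with units) agrees with the tensor of $\Acategory$, and that the associator, unitors and symmetry of the duploid are the images of those of $\Acategory$. Both facts follow from the duploid tensor being defined on positive objects via $\otimes$ in $\Acategory$, composed with units, together with the naturality of the unit in thunkable morphisms.

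The coherence isomorphisms of $\Lin$, namely $\Lin X\otimes\Lin Y\cong\Lin(X\times Y)$ and $\unite\cong\Lin 1$, are isomorphisms in $\Acategory$, so their images are thunkable isomorphisms. The coherence diagrams commute because they commute in $\Acategory$ and the inclusion preserves composition. If the identification of the tensors turns out delicate, my fallback is to check each coherence diagram directly, using the oblique bijections $\Phi$.
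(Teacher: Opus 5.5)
Your proposal is correct and follows the route the paper intends. The paper states this proposition without a separate proof, as the instance $\Ccategory=\Mcategory$, $F=\Lin$, $G=\Mult$ of \prettyref{prop:constr-pos-adj}, with the monoidality clause inherited from $\Lin$; your factorisation $\Mcategory\xrightarrow{\Lin}\Acategory\to\Pcategoryt$, $g\mapsto\eta\circ g$, makes that step precise, since the identity $(\eta\circ f)\otimes(\eta\circ g)=\eta\circ(f\otimes g)$ reduces to the unit law of the strength. You are also right to flag the tacit hypothesis that $R\circ L$ is strong, needed for $\duploid LR$ to be a symmetric monoidal duploid, which the paper only makes explicit in the next subsection.
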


\begin{prop}
Every direct linear-non-linear adjunction $\graphadjoint{\Lin}{\Mcategory}{\Dduploid}{\Mult}$
induces a $\nDownarrow$-relative linear-non-linear coadjunction of
the form:   \[
\begin{tikzcd}[row sep = 0.5em, column sep = 0.5em]
       & \Mcategory \arrow[ldd, bend right = 20, "\Lin"{swap}]& \\
       & \dashv &  \\
      \Pcategoryt & & \Ncategoryl  \arrow[ll, "\nDownarrow"{swap}] \arrow[luu,bend right = 20, "\Mult"{swap}] 
    \end{tikzcd}
  \]
\end{prop}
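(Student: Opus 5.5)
The plan is to reduce to \prettyref{prop:constr-rel-coadj} and then check the two additional requirements of a relative linear-non-linear coadjunction: $\Mcategory$ is cartesian, and $\Lin$ is strong monoidal into $\Pcategoryt$. The first is part of the hypothesis. For the second, I first restrict $\Lin$: by the proposition stating that the left adjoint of a positive adjunction out of a category restricts to a functor $\Lin:\Mcategory\to\Pcategoryt$, we obtain an honest functor. The definition of a direct linear-non-linear adjunction says precisely that this induced functor is strong monoidal, for the symmetric monoidal structure that $\Pcategoryt$ inherits from the symmetric monoidal duploid $\Dduploid$. So the monoidal requirement is immediate, once one checks that the monoidal structure of $\Dduploid$ restricts to $\Pcategoryt$. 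I will take this from \citealt{MMMM2026}: $\otimes$ preserves positive objects and thunkable maps, and the structural isomorphisms are thunkable.

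The core step is the relative coadjunction bijection. Take the restriction $\Mult:\Ncategoryl\to\Mcategory$ of the right adjoint. It is a functor by the dual clause of the same proposition, or alternatively by R.A.P.L., since right adjoints preserve linear maps and their composition. Then, for $X\in\Mcategory$ and $N\in\Ncategoryl$, compose the following bijections as in \prettyref{prop:constr-rel-coadj}:
\[
\Pcategoryt(\Lin X,\Shpos N)\cong\Dduploid(\Lin X,N)\cong\Mcategory(X,\Mult N).
\]
The first bijection comes from the adjunction $\Shpos\dashv\mathrm{Id}$ in the duploid. It uses that $\Lin X$ is positive, so that every morphism into $\Shpos N$ is thunkable, or at least that morphisms $\Lin X\to\Shpos N$ correspond exactly to the thunkable ones. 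The second bijection is the adjunction bijection of the direct LNL adjunction.

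To match the displayed form of the definition, with $n$ antecedents, I then use the strong monoidal structure. This gives isomorphisms $\Lin X_{1}\otimes\dots\otimes\Lin X_{n}\cong\Lin(X_{1}\times\dots\times X_{n})$ in $\Pcategoryt$. Precomposing with these thunkable isomorphisms yields
\[
\Pcategoryt(\Lin X_{1}\otimes\dots\otimes\Lin X_{n},\Shpos N)\cong\Mcategory(X_{1}\times\dots\times X_{n},\Mult N).
\]

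The main obstacle is naturality. In $X$ it should be routine: the maps of $\Mcategory$ are sent by $\Lin$ to thunkable maps, and naturality in the first argument of a bare adjunction only involves composites with $\Lin f$, where associativity is available. In $N$ the issue is a linear map $g:N\to N'$ of $\Ncategoryl$. Naturality then requires that $\Shpos g\circ h$ in $\Pcategoryt$ corresponds to $g\ccomp\varepsilon\ccomp h$ in $\Dduploid$, where $\varepsilon$ is the counit of $\Shpos\dashv\mathrm{Id}$. This needs the chains involved to associate. My plan is to argue this from $g$ being linear, $h$ being thunkable, and $\Shpos$ being a left adjoint, so that it preserves thunkable maps. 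After that, the separate naturality clauses of \prettyref{def:An-adjunction} transport the equation through the second bijection.
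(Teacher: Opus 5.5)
Your proposal is correct and follows the paper's route: the statement comes out exactly as in \prettyref{prop:constr-rel-coadj}, through $\Pcategoryt(\Lin X,\Shpos N)\cong\Dduploid(\Lin X,N)\cong\Mcategory(X,\Mult N)$, and the definition of a direct linear-non-linear adjunction supplies directly both that $\Mcategory$ is cartesian and that $\Lin:\Mcategory\to\Pcategoryt$ is strong monoidal (the paper writes out nothing further, so your naturality check and $n$-ary reformulation are extra detail). Two of your justifications need fixing: the first bijection rests on $N$ being negative, so that every map $\Lin X\to N$ is thunkable and corresponds, via the unit and counit of $\Shpos\dashv\mathrm{Id}$, to a thunkable map into $\Shpos N$ — positivity of $\Lin X$ only makes it an object of $\Pcategoryt$, and arbitrary maps $\Lin X\to\Shpos N$ into the positive object $\Shpos N$ need not be thunkable; and $\Mult$ restricts to a functor $\Ncategoryl\to\Mcategory$ by R.A.P.L., not by the dual clause of the restriction proposition, which concerns right adjoints going from a category into the duploid.
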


\subsection{Strength for a relative monad with respect to a comonad}

Finally, we consider the case where we have the resource modality
$A\mapsto\com A$ and an effect modality $A\mapsto\mon A$ strong
with respect to $\com$.
\begin{defn}
Suppose given an adjunction $\adjoint L{\Acategory}{\Bcategory}R$
where the category~$\Acategory$ is equipped with a symmetric monoidal
structure $(\otimes,\unite)$ and where the monad $R\circ L$ is strong.
Suppose also given an $L$-relative adjunction $F\dashv G$ as well
as an $R$-relative linear-non-linear coadjunction $\Lin\dashv\Mult$.
Diagramatically:\begin{equation}\label{equation/relative-strength-situation}
  \begin{tikzcd}
    \Mcategory \arrow[rr, bend left, "\Lin"] & \bot & \Acategory \arrow[rr, "L", bend left] \arrow[rrrr, "F", bend left=37] & \bot & \Bcategory \arrow[ll, "R", bend left] \arrow[llll, "\Mult", bend left=37]  & \bot & \Ccategory \arrow[ll, "G", bend left]
  \end{tikzcd}\end{equation}The relative monad $\mon=G\circ F$ is strong with
respect to the functor $\Lin$ if it is equipped with a natural family
of morphisms in $\Acategory$ 
\[
\str_{X,A}:\Lin X\otimes R\mon A\to R\mon(\Lin X\otimes A)
\]
making the four usual coherence diagrams commute.
\end{defn}

\begin{defn}
Let a direct linear-non-linear adjunction $\graphadjoint{\Lin}{\Mcategory}{\Dduploid}{\Mult}$
and a negative adjunction $\graphadjoint F{\Dduploid}{\Ccategory}G$.
Diagrammatically:\begin{equation}\label{equation/direct-strength-situation}    \begin{tikzcd}[column sep = 3em]
      \Mcategory \arrow[rr, bend left = 30, "\Lin"] & \bot & \Dduploid \arrow[ll,harpoon, bend left = 30, "\Mult"] \arrow[rr,harpoon, bend left = 30, "F"] & \bot & \Ccategory \arrow[ll, bend left = 30, "G"]
    \end{tikzcd}\end{equation}The bare endofunctor $\mon=G\circ F$ is strong with
respect to the bare functor $\Lin$ if it is equipped with a natural
family of morphisms in $\Dduploid$ 
\[
\str_{X,A}:\Lin X\otimes\mon A\to\mon(\Lin X\otimes A)
\]
linear wrt. $\mon A$ \citep[Definition F.2, p.42]{MMMM2026}, making
the four usual coherence diagrams commute.
\end{defn}

\paragraph{Correspondence theorem}
\begin{prop}
Every situation of the form \eqref{equation/relative-strength-situation}
such that the $L$-relative monad $G\circ F$ is strong with respect
to $\Lin$ induces a negative adjunction: \[
\begin{tikzcd}
  \makebox[1ex][r]{$\duploid{L}{R}$} \arrow[rr,"F^+",harpoon,  bend left = 40]  & \bot &  {\Ccategory} \arrow[ll,"G", bend left = 40] 
\end{tikzcd}
\]such that $G\circ F^{+}$ is strong with respect to $\Lin$, the left
adjoint of the induced direct linear-non-linear adjunction from \eqref{equation/indirect-direct-lnl}.
\end{prop}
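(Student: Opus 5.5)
The plan has two parts: first build the negative adjunction, dually to \prettyref{prop:constr-pos-adj}, and then transport the strength. For the first part, apply the dual of \prettyref{prop:constr-pos-adj} to the $L$-relative adjunction $F\adjunction LG$. On objects, $F^{+}$ is $F$ on $\Acategory$ and $F\circ R$ on $\Bcategory$. For morphisms $f\in\duploid LR(X,Y)=\Acategory(X^{+},RY^{-})$ we split into two cases. If $Y\in\Bcategory$, then $F^{+}f\df F\Phi_{X,Y}(\ldots)$ is obtained by transposing along $L\dashv R$ to $\Bcategory(LX^{+},Y)$ and applying the functor $F\circ R$ after precomposing with the unit. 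If $Y\in\Acategory$, we set $F^{+}f\df\Phi^{-1}(\eta^{\mon}\circ f)$, using the relative-adjunction bijection $\Ccategory(FA,C)\cong\Bcategory(LA,GC)$.

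The adjunction bijection $\Ccategory(F^{+}X,C)\cong\duploid LR(X,GC)$ also splits on $X$. For $X\in\Acategory$ it is $\Ccategory(FX,C)\cong\Bcategory(LX,GC)\cong\Acategory(X,RGC)$. For $X\in\Bcategory$ it is the same chain with $X^{+}=RX$. Since $GC\in\Bcategory$, the target $\duploid LR(X,GC)=\Acategory(X^{+},RGC)$ in both cases. Negativity is immediate because $G$ lands in $\Bcategory$, whose objects are negative in $\duploid LR$. Naturality in the sense of \prettyref{def:An-adjunction} must be checked separately in each variable. Naturality in $C$ is plain naturality of $\Phi$. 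Naturality in $X$ needs a case split on the polarity of the source and target of $f$ and reduces to the triangle identities of $L\dashv R$ together with naturality of $\Phi$. This is exactly the dual of the computation behind \prettyref{prop:constr-pos-adj}, so I would only record it.

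For the strength, note that $\mon=G\circ F^{+}$ on a positive object $A$ is the negative object $GFA$. Also, $\Lin X\otimes\mon A$ in the duploid tensor is computed as $\Lin X\otimes R\mon A$ in $\Acategory$, since the tensor of $\duploid LR$ takes positive parts. So the given $\str_{X,A}\in\Acategory(\Lin X\otimes R\mon A,R\mon(\Lin X\otimes A))$ is literally an element of $\duploid LR(\Lin X\otimes\mon A,\mon(\Lin X\otimes A))$. For a general object $A\in\Bcategory$, we precompose and postcompose with the shift isomorphisms $F^{+}A=FRA$. Here I use that $\Lin$ from the induced direct LNL adjunction \eqref{equation/indirect-direct-lnl} agrees with the original $\Lin$ on objects.

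The remaining work is linearity in the $\mon A$ argument and the four coherence diagrams. Coherence should follow from the given diagrams, because the duploid composites involved are all computed in $\Acategory$ once we know which factors are thunkable or linear. Linearity is the step I expect to be hardest. I would first try to show that a map of the form $R(g)\circ h$ with $h$ thunkable has the needed associativity. Failing that, I would unfold the definition of linearity with respect to $\mon A$ in \citet[Definition F.2]{MMMM2026} and reduce it, via naturality of $\str$ in $A$ with respect to Kleisli-type maps $\Acategory(A,R\mon B)$, to the strength law relating $\str$ to the extension $\extension{(-)}$. That law is exactly one of the four coherence diagrams, namely compatibility with multiplication.
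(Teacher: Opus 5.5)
The paper states this proposition without proof. Your first half follows the pattern the paper uses for \prettyref{prop:constr-pos-adj}: it is the exact dual of that proof (construction of $F^{+}$, the bijection $\Ccategory(F^{+}X,C)\cong\duploid LR(X,GC)$, negativity), and it is fine up to notation. For $Y\in\Bcategory$ your recipe is just $F^{+}f=F(f)$. For $Y\in\Acategory$ you need $\Phi^{-1}(\eta_{Y}\circ\bar f)$, where $\bar f\in\Bcategory(LX^{+},LY)$ is the $L\dashv R$-transpose of $f$. Taking $\str_{X,A^{+}}$ as the component at $A$ is also right ($F^{+}A=FRA$ on the nose), and the coherence diagrams transfer as you indicate. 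What is missing is an argument for the two remaining conditions, naturality and linearity, and neither of your routes supplies it.

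Write $u,\epsilon$ for the unit and counit of $L\dashv R$, and $\sigma_{P,A}:P\otimes RLA\to RL(P\otimes A)$ for the given strength of $R\circ L$. The strength $\sigma$ is what defines $\Lin X\otimes\mathord{-}$ on non-thunkable morphisms of $\duploid LR$, and both conditions are statements about $\sigma$.

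\emph{Naturality.} It must hold for \emph{all} morphisms of $\duploid LR$, for instance oblique $f\in\Acategory(A,RLA')$. For these, $\mon f=(\eta_{A'}\circ\bar f)^{*}$, and the duploid morphism $\Lin X\otimes f$ is $\sigma_{\Lin X,A'}\circ(\Lin X\otimes f)$ in $\Acategory$. You do not address this case, and naturality over $\Acategory$ does not cover it.

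\emph{Linearity.} Linearity in $\mon A$ amounts to
\[
\str_{X,A}\circ(\Lin X\otimes R\epsilon_{\mon A})=R(\overline{\str_{X,A}})\circ\sigma_{\Lin X,R\mon A}.
\]
Your first route cannot work. Every morphism into a negative object is trivially of the form $R(\id)\circ h$ with $h$ thunkable, yet not all such morphisms are linear in their negative argument. Your second route uses only the extension law, which never mentions $\sigma$, so it cannot by itself produce an equation involving $\sigma$.

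\emph{The missing ingredient} is the unit law in its strong form, saying that $\eta:L\to\mon$ is a strong transformation:
\[
\str_{X,A}\circ(\Lin X\otimes R\eta_{A})=R\eta_{\Lin X\otimes A}\circ\sigma_{\Lin X,A}.
\]
Linearity then follows in three steps. Write $\epsilon_{\mon A}=(\epsilon_{\mon A})^{*}\circ\eta_{R\mon A}$, apply the extension law, then apply the strong unit law at $R\mon A$. The left-hand side of the linearity equation becomes $R(k^{*})\circ R\eta\circ\sigma_{\Lin X,R\mon A}=R(k)\circ\sigma_{\Lin X,R\mon A}$, where $k$ is the transpose of $\str_{X,A}\circ(\Lin X\otimes R\epsilon_{\mon A}\circ u_{R\mon A})=\str_{X,A}$. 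The same two laws (extension for $(\eta_{A'}\circ\bar f)^{*}$, then strong unit) give naturality for oblique $f$.

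The direct unit diagram, computed in $\duploid LR$, only encodes the weak law $\str_{X,A}\circ(\Lin X\otimes R\eta_{A}\circ u_{A})=R\eta\circ u$. Given the weak law and the extension law, the strong law is \emph{equivalent} to linearity; for the converse direction, factor $\eta_{A}=\epsilon_{\mon A}\circ LR\eta_{A}\circ Lu_{A}$. So the strong unit law is exactly the content that must be read into the ``four usual diagrams'' of the hypothesis and used explicitly. It is not supplied by the multiplication law alone.
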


\begin{prop}
Every situation of the form \eqref{equation/direct-strength-situation}
such that the bare endofunctor $G\circ F$ is strong with respect
to $\Lin$ induces a $\nUparrow$-relative adjunction:\[\begin{tikzcd}[row sep = 0.5em, column sep = 0.5em]
       & \Ccategory \arrow[rdd, bend left = 20, "G"]& \\
       & \dashv &  \\
      \Pcategoryt \arrow[ruu,bend left = 20, "F"] \arrow[rr, "\nUparrow"{swap}] & & \Ncategoryl  
    \end{tikzcd}\]such that the relative monad $G\circ F$ is strong with respect to
$\Lin$.
\end{prop}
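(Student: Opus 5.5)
We begin with the underlying relative adjunction, which is simply the dual of \prettyref{prop:constr-rel-coadj}. Given the negative adjunction $\graphadjoint F{\Dduploid}{\Ccategory}G$, the right adjoint $G$ restricts to a functor $G:\Ccategory\to\Ncategoryl$ by the dual clause of the proposition on restrictions of adjoints. Dually, F.A.P.T. (left adjoints preserve thunkable maps and their composition) makes $F$ a functor $\Pcategoryt\to\Ccategory$ on positive objects and thunkable maps. For $P\in\Pcategoryt$ and $C\in\Ccategory$ we then take
\[
\Ccategory(FP,C)\cong\Dduploid(P,GC)\cong\Ncategoryl(\nUparrow P,GC),
\]
where the first bijection is the adjunction over non-associative categories and the second is the adjunction $\Shneg$ between $\Dduploid$ and its negative part. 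The latter holds because every map into the negative object $GC$ is thunkable, and a linear map out of $\Shneg P$ corresponds to an arbitrary map out of $P$. Naturality in $C$ follows from naturality of $\varphi$ in $B$. Naturality in $P$ only involves precomposition by thunkable $f$, and in that case the relevant composition chains associate. This yields the $\nUparrow$-relative adjunction, with relative monad $\mon=G\circ F:\Pcategoryt\to\Ncategoryl$.

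Next we transport the strength. Here $\Lin:\Mcategory\to\Pcategoryt$ is the strong monoidal functor from the direct LNL adjunction. The required relative strength has type $\Lin X\otimes\nDownarrow\mon A\to\nDownarrow\mon(\Lin X\otimes A)$ in $\Pcategoryt$. Since $\nDownarrow N$ and $N$ are isomorphic in the duploid via the thunkable/linear shift isomorphisms, we define
\[
\str'_{X,A}=\mathrm{wrap}\ccomp\str_{X,A}\ccomp(\Lin X\otimes\mathrm{unwrap}),
\]
where $\mathrm{wrap}$ and $\mathrm{unwrap}$ are the unit/counit components of $\Shpos\dashv\Id$. We must check that $\str'$ is thunkable, so that it lies in $\Pcategoryt$. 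The codomain $\Shpos(-)$ is positive, so it suffices to show that composites into $\Shpos$ obtained through $\varphi$ are thunkable; the positive shift's adjunction turns any map into a thunkable one with shifted codomain.

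Naturality of $\str'$ in $X$ (along maps of $\Mcategory$, whose images under $\Lin$ are thunkable) and in $A$ (along thunkable maps) will reduce to naturality of $\str$. To move between the two we reassociate freely whenever the outer maps are thunkable or linear. The hypothesis that $\str$ is linear in $\mon A$ is exactly what lets us commute it with the relative extension, which is defined via the adjunction and is linear by R.A.P.L.

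The main obstacle will be the four coherence diagrams: unit with $\Lin\unite$, associativity with $\Lin(X\times Y)\cong\Lin X\otimes\Lin Y$, the unit $\eta$, and multiplication, now expressed via extension $f\mapsto\extension f$. Each diagram in the duploid involves composites that are not a priori associative. Our plan is to rewrite each side of the relative diagram as a chain in $\Dduploid$ in which every inner map is thunkable (images of $\Lin$, units, wrap) or linear (unwrap, $G$-images, $\str$ in its $\mon A$ argument). The chain then associates, and the duploid coherence diagrams apply verbatim. For the multiplication law we first express $\extension f$ as $G(\varphi^{-1}f)$ and use naturality of $\varphi$; this is the step where the linearity assumption on $\str$ is essential, and where we expect the bookkeeping to be heaviest.
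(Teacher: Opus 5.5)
The paper states this proposition without proof. The only point of comparison is the proof of the dual \prettyref{prop:constr-rel-coadj}. Your first paragraph is exactly its dual, $\Ccategory(FP,C)\cong\Dduploid(P,GC)\cong\Ncategoryl(\Shneg P,GC)$, and it is correct. The second bijection is $h\mapsto h\ccomp\varepsilon_P$, with inverse $g\mapsto g\ccomp\delta_P$, where $\delta_P,\varepsilon_P$ are the unit and counit of $\Id\dashv\Shneg$. It relies on $h$ being linear (because $P$ is positive) and on $\varepsilon_P$ being linear; the fact that maps into $GC$ are thunkable plays no role. Your $\str'$ is also the right candidate, but the reason it is thunkable is simpler than the one you give. $\str_{X,A}$ has negative codomain, so it is thunkable, and $\mathrm{wrap}$ and $\Lin X\otimes\mathrm{unwrap}$ are thunkable. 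Thunkability of $\Lin X\otimes\mathrm{unwrap}$ is also what makes the triple composite well defined.

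The gap is in the coherence step, which you reduce to reassociating and then applying the direct diagrams verbatim. Two things go wrong with that plan.
\begin{itemize}
\item The reassociation rule is sharper than the one you state: $h\ccomp g\ccomp f$ associates when $h$ is linear or $f$ is thunkable, not whenever every factor is thunkable or linear.
\item More seriously, the bare functor $\Lin X\otimes-$ is only guaranteed to preserve a composite $g\ccomp k$ when $k$ is thunkable.
\end{itemize}
The relative unit is $\eta^{\mathrm{rel}}_P=\eta_P\ccomp\varepsilon_P:\Shneg P\to\mon P$, and $\varepsilon_P$ (forcing) is not thunkable. The same map enters the strength $\sigma_{X,P}:\Lin X\otimes\Shpos\Shneg P\to\Shpos\Shneg(\Lin X\otimes P)$ of the monad $\Shpos\Shneg$ on $\Pcategoryt$. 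The definition of a strong relative monad presupposes $\sigma$, and you never construct it.

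Unfolding the unit diagram $\str'_{X,P}\circ(\Lin X\otimes\Shpos\eta^{\mathrm{rel}}_P)=\Shpos\eta^{\mathrm{rel}}_{\Lin X\otimes P}\circ\sigma_{X,P}$ leaves $\str_{X,P}\ccomp\bigl(\Lin X\otimes(\eta_P\ccomp k)\bigr)$ with $k=\varepsilon_P\ccomp\mathrm{unwrap}$. In general $\Lin X\otimes(\eta_P\ccomp k)\neq(\Lin X\otimes\eta_P)\ccomp(\Lin X\otimes k)$: the left side suspends $k$, while the right side runs it. So the direct unit law cannot be applied verbatim. This is exactly where linearity of $\str$ in its $\mon A$ argument is needed, to rewrite the expression as $(\str_{X,P}\ccomp(\Lin X\otimes\eta_P))\ccomp(\Lin X\otimes k)=\eta_{\Lin X\otimes P}\ccomp(\Lin X\otimes k)$.

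Conversely, the extension law, which you flag as the critical case, does not need linearity. For $f\in\Ncategoryl(\Shneg P,\mon Q)$ one has $\extension f=G\epsilon_{FQ}\ccomp GF(f\ccomp\delta_P)$, with $\epsilon$ the counit of $F\dashv G$. The law then follows from three facts:
\begin{itemize}
\item the direct multiplication law;
\item naturality of $\str$ along $f\ccomp\delta_P$, which is both thunkable and linear;
\item linearity of $G$-images.
\end{itemize}
Until the unit case is handled this way, the strength half of the statement is not established.
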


\begin{cor}
\label{cor:Any-ILLp-model}Any $\ILLpdiam$ model over an adjunction
$\adjoint{\shneg}{\Vcategory}{\Scategory}{\shpos}$ induces an $\ILLpdiam$
model over its thunkable-linear completion $\adjoint{\Shneg}{\Pcategory_{t}}{\Ncategory_{l}}{\Shpos}$.
\end{cor}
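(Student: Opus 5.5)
The plan is to go round trip through the direct, non-associative semantics, composing the two pairs of correspondence results just established. Start with an $\ILLpdiam$ model over $\adjoint{\shneg}{\Vcategory}{\Scategory}{\shpos}$. It consists of three pieces of data: a $\shpos$-relative linear-non-linear coadjunction $\Lin\coadjunction{\shpos}\Mult$, a $\shneg$-relative adjunction $F\adjunction{\shneg}G$, and a strength $\str$ in the sense of \eqref{eq:rel-rel-strength}. These fit the situation \eqref{equation/relative-strength-situation} with $L=\shneg$ and $R=\shpos$.

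First, apply the proposition constructing a direct linear-non-linear adjunction from a relative one. This yields a positive adjunction $\graphadjoint{\Lin}{\Mcategory}{\duploid{\shneg}{\shpos}}{\Mult^{-}}$ on the duploid $\duploid{\shneg}{\shpos}$. Then apply the proposition on strong relative monads. This yields a negative adjunction $\graphadjoint{F^{+}}{\duploid{\shneg}{\shpos}}{\Ccategory}{G}$ such that $G\circ F^{+}$ is strong with respect to $\Lin$. We are now in the direct situation \eqref{equation/direct-strength-situation}.

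Second, pass back to the indirect side, this time over the adjunction $\Pcategoryt\rightleftarrows\Ncategoryl$ of thunkable and linear maps of $\Dduploid=\duploid{\shneg}{\shpos}$.
\begin{itemize}
\item The proposition on direct LNL adjunctions yields a $\Shpos$-relative linear-non-linear coadjunction $\Lin\coadjunction{\Shpos}\Mult$ between $\Mcategory$ and $\Ncategoryl$. This uses that $\Lin$ factors through $\Pcategoryt$ strong monoidally.
\item The last proposition yields a $\Shneg$-relative adjunction $F\adjunction{\Shneg}G$ from $\Pcategoryt$ to $\Ccategory$, whose relative monad is strong with respect to $\Lin$.
\end{itemize}
Together with the fact recalled earlier that $\Pcategoryt\rightleftarrows\Ncategoryl$ is again a linear CBPV model, with symmetric monoidal distributive structure on $\Pcategoryt$ and relative closure on $\Ncategoryl$ inherited from the duploid, this assembles exactly the data \eqref{eq:lcbpv-resource-effect-relative} over the completion.

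The main obstacle is compatibility of structure across the two passages, rather than existence of the adjunctions themselves. Two checks are needed:
\begin{itemize}
\item The monoidal structure on $\Pcategoryt$ used for the strength must be the one induced from $\Vcategory$. Hence $\Lin:\Mcategory\to\Pcategoryt$ must be strong monoidal for the completed tensor. Here I would use that $\Vcategory\to\Pcategoryt$ (thunkable completion) is a strong monoidal functor, composed with the original $\Lin$.
\item The strength produced on the duploid, which is linear in $\mon A$, must transport to a family $\Lin X\otimes\Shpos\mon P\to\Shpos\mon(\Lin X\otimes P)$ in $\Pcategoryt$ satisfying the four coherence diagrams.
\end{itemize}
For the second check, I would verify that each coherence diagram involves only composites in which every non-thunkable factor sits in a position where the linearity of $\str$ in its second argument makes the chain associate. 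The diagrams in $\Dduploid$ then become genuine commuting diagrams in the associative category $\Pcategoryt$. I expect this associativity bookkeeping to carry most of the remaining work. The rest follows from the stated correspondence propositions by composition.
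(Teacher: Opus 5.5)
Your proposal is correct and follows the paper's intended route. The corollary comes from composing the correspondence propositions: you pass from the relative data over $\shneg\dashv\shpos$ to a direct linear-non-linear adjunction and a strong negative adjunction on $\duploid{\shneg}{\shpos}$, then back to relative data over $\Shneg\dashv\Shpos$, and the compatibility checks you flag (strong monoidality of $\Lin$ into $\Pcategoryt$, transport of the strength) are already part of the definitions and conclusions of those propositions, so they need not be redone.
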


\todo[inline]{Thm de composition?}

\section*{Acknowledgements}

This work has received funding from the European Research Council
under the European Union’s Horizon 2020 research and innovation programme
(Synergy Project Malinca, ERC Grant Agreement No 670624).

\printshorthands{}

\printbibliography[heading=bibintoc]

\newpage{}

\appendix
\section{Appendix}

\subsection{Componentwise bare and proper functors}
A componentwise bare functor
\begin{center}
\begin{tikzcd}
F \quad : \quad \Acategory,\Bcategory \arrow[rr] && \Ccategory
\end{tikzcd}
\end{center}
between non associative categories~$\Acategory$, $\Bcategory$ and~$\Ccategory$,
is a pair consisting of a function 
\begin{center}
\begin{tikzcd}
\objectp{F} \quad : \quad 
\objectp{\Acategory}\times\objectp{\Bcategory} \arrow[rr] && \objectp{\Ccategory}
\end{tikzcd}
\end{center}
which associates an object $F(A,B)$ of $\Ccategory$
to every pair~$(A,B)$ consisting of an object~$A$ of $\Acategory$
and of an object~$B$ of $\Bcategory$ ; together with a family of functions
\begin{center}
\begin{tikzcd}
\fstcomponent{F}{B}(A_1,A_2) \quad : \quad  \Acategory(A_1,A_2) 
\arrow[rr] && \Ccategory((A_1,B),(A_2,B))
\end{tikzcd}
\end{center}
\begin{center}
\begin{tikzcd}
\sndcomponent{F}{A}(B_1,B_2) \quad : \quad \Bcategory(B_1,B_2)
\arrow[rr] && \Ccategory((A,B_1),(A,B_2))
\end{tikzcd}
\end{center}
such that the series of equalities holds:
\begin{center}
$\fstcomponent{F}{B}(\id{A}) = \id_{\objectp{F}(A,B)}$
\quad\quad 
$\sndcomponent{F}{A}(\id{B}) = \id_{\objectp{F}(A,B)}$
\end{center}
A componentwise proper functor is a componentwise bare functor
such that the equality holds:
\begin{center}
\begin{tabular}{c}
$\fstcomponent{F}{B}(g\circ f) = \fstcomponent{F}{B}(g)\circ \fstcomponent{F}{B}(f)$ 
\\
\\
$\sndcomponent{F}{A}(g\circ f) = \sndcomponent{F}{A}(g)\circ \sndcomponent{F}{A}(f)$
\end{tabular}
\end{center}

\subsection{Tensor product}
The tensor product $\Acategory\boxtimes\Bcategory$
of two non associative categories~$\Acategory$ and~$\Bcategory$
is the non associative category
whose objects are pairs $(A,B)$ 

$$
\begin{tikzcd}
(A,g) \quad : \quad (A,B_1) \arrow[rr] && (A,B_2)
\end{tikzcd}
$$
$$
\begin{tikzcd}
(f,B) \quad : \quad (A_1,B) \arrow[rr] && (A_2,B)
\end{tikzcd}
$$
such that
$$
(A,g_2\circ g_1) = (A,g_2)\circ (A,g_1)
\quad\quad
(A,\id{B}) = \id{(A,B)}
$$
$$
(f_2\circ f_1,B) = (f_2,B)\circ (f_1,B)
\quad\quad
(\id{A},B) = \id{(A,B)}
$$
There is a componentwise proper functor
$$
\begin{tikzcd}
\Acategory,\Bcategory \arrow[rr] && \Acategory\boxtimes\Bcategory
\end{tikzcd}
$$
which induces a bijection between componentwise proper functors
$$
\begin{tikzcd}
\Acategory,\Bcategory \arrow[rr] && \Ccategory
\end{tikzcd}
$$
and proper functors
$$
\begin{tikzcd}
\Acategory\boxtimes\Bcategory \arrow[rr] && \Ccategory
\end{tikzcd}
$$
Moreover, it induces a bijection between componentwise bare functors 
$$
\begin{tikzcd}
\Acategory,\Bcategory \arrow[rr] && \Ccategory
\end{tikzcd}
$$
and bare functors
$$
\begin{tikzcd}
\Acategory\boxtimes\Bcategory \arrow[rr] && \Ccategory
\end{tikzcd}
$$

\section{\label{sec:IMALLp-has-sums}$\protect\IMALLp$ has sums}
\begin{defn}
A non-associative category $\Dduploid$ has (binary) sums if $\Delta$
has a positive left adjoint, in other words a bare functor $\oplus:\Dduploid\times\Dduploid\graphto\Dduploid$
such that $A\oplus B$ is positive and such that there exists a family
of isomorphisms:
\[
\Dduploid(A\oplus B,C)\cong\Dduploid(A,C)\times\Dduploid(B,C)
\]
which is natural in $(\Dduploid\times\Dduploid)^{\op}\boxtimes\Dduploid\graphto\Setcat$.
\end{defn}

Let us spell out the naturality condition more explicitly: for any
$f:A\rightarrow C$ and $g:B\rightarrow C$ there exists a pairing
$\pairing fg:A\oplus B\rightarrow C$ such that:
\begin{align}
\forall h:C\rightarrow D, &  & h\ccomp\pairing fg & =\pairing{h\ccomp f}{h\ccomp g}\label{eq:sum-nat-1}\\
\forall h_{1}:A'\rightarrow A,\forall h_{2}:B'\rightarrow B, &  & \;\pairing fg\ccomp(h_{1}\oplus h_{2}) & =\pairing{f\ccomp h_{1}}{g\ccomp h_{2}}\label{eq:sum-nat-2}
\end{align}

\begin{prop}
The syntactic duploid has sums given by the type $\oplus$ and its
constructors:
\begin{align*}
\oplus_{A_{x},B_{y},C,D,z} & :\Dduploid(A_{x},C)\times\Dduploid(B_{y},D)\rightarrow\Dduploid((A\oplus B)_{z},C\oplus D)\\
t\oplus_{x,y,z}u & \df\mu\alpha.\cut z{\mui x{\cut{\ileft(t)}{\alpha}}y{\cut{\iright(u)}{\alpha}}}\\
 & \text{where }\iith(s)\df\mu\alpha.\cut t{\mt x.\cut{\iith(x)}{\alpha}}\\
\varphi_{A,B,C} & :\begin{cases}
\Dduploid(A\oplus B,C_{\alpha})\rightarrow\Dduploid(A,C_{\alpha})\times\Dduploid(B,C_{\alpha})\\
e\mapsto(\mt x.\cut{\ileft(x)}e,\mt y.\cut{\iright(y)}e)
\end{cases}\\
\psi_{A,B,C} & :\begin{cases}
\Dduploid(A,C_{\alpha})\times\Dduploid(B,C_{\alpha})\rightarrow\Dduploid(A\oplus B,C_{\alpha})\\
(e_{1},e_{2})\mapsto\pairing{e_{1}}{e_{2}}\df\mui x{\cut x{e_{1}}}y{\cut y{e_{2}}}
\end{cases}
\end{align*}
\end{prop}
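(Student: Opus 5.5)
We must show that the syntactic duploid of $\IMALLp$ has sums: $A\oplus B$ is positive, $\varphi$ and $\psi$ are mutually inverse up to $\argeq{RE}{}$, and the naturality equations \eqref{eq:sum-nat-1} and \eqref{eq:sum-nat-2} hold. Throughout we work with terms modulo $\argeq{RE}{}$. We use that composition in the syntactic duploid is the cut $g\ccomp f\df\mt x.\cut{f}{g}$, for $f$ a context with free variable $x$ and $g$ a context; polarity annotations are omitted where they are forced by typing.

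\textbf{Positivity of $A\oplus B$.} The type $A\oplus B$ is positive. Any morphism out of it is a context $e$ with a free variable $z$ of positive type. By $(E\varp{\mt})$ it can be written as $\mt\varp z.c$. Post-composition with such a morphism is then substitution of a value, which commutes with all other cuts. Hence every morphism out of $A\oplus B$ is linear, as in the general fact that positive types yield positive objects in the syntactic duploid \citep{munchduploids}.

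\textbf{Bijection.} We first compute $\varphi\circ\psi$. The first component of $\varphi(\pairing{e_1}{e_2})$ is $\mt x.\cut{\ileft(x)}{\mui{x}{\cut x{e_1}}{y}{\cut y{e_2}}}$. Since $\ileft(x)$ is a value, $(R\oplus)$ reduces this to $\mt x.\cut x{e_1}$, which is $e_1$ by $(E\mt)$; the second component is symmetric. Conversely, $\psi\circ\varphi(e)$ is
\[
\mui{x}{\cut x{\mt x'.\cut{\ileft(x')}e}}{y}{\cut y{\mt y'.\cut{\iright(y')}e}}.
\]
Since $x$ and $y$ are values, $(R\varp{\mt})$ reduces this to $\mui{x}{\cut{\ileft(x)}e}{y}{\cut{\iright(y)}e}$. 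This last term is exactly the $(E\oplus)$ expansion of $e$, provided $e$ is a stack.

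When $e$ is not a stack, it is $\mt\varn x.c$ for a negative $C$. Here I would first show that $\mathord{\mt x.\cut{z}{e}}$, with $z$ of positive type $A\oplus B$, can be replaced by the stack $\mt\varp z.\cut{z}{e}$. This holds by $(E\mt)$ applied in the positive polarity, and then $(E\oplus)$ applies.

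\textbf{Naturality.} For \eqref{eq:sum-nat-1}, post-composing $\pairing fg$ with $h$ gives $\mt w.\cut{\mu\alpha.\cut{w}{\pairing fg}}{h}$, read appropriately. The pairing is a stack, and $h$ is pushed into each branch. In each branch, $\cut{x}{f}$ with $x$ a variable, hence a value, commutes with the outer cut. This uses that variables of positive type are thunkable, so both associations agree, and the result is $\pairing{h\ccomp f}{h\ccomp g}$.

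For \eqref{eq:sum-nat-2}, we unfold $h_1\oplus h_2$ using the definition of $\oplus_{x,y,z}$ with $\ileft(h_1)$ and $\iright(h_2)$. The cut against the stack $\pairing fg$ then reduces by $(R\varp{\mu})$. In each branch we are left with $\cut{\ileft(h_1)}{\pairing fg}$. Unfolding $\ileft(h_1)=\mu\alpha.\cut{h_1}{\mt x.\cut{\ileft(x)}\alpha}$ and reducing gives $\cut{h_1}{\mt x.\cut{x}{f}}$, which is $f\ccomp h_1$.

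The main obstacle is this second equation, because $h_1$ and $h_2$ are arbitrary and may be non-thunkable expressions $\mu\varp\alpha.c$. One must check that the $\mu$-binder placement in the definition of $t\oplus u$ yields exactly the association $f\ccomp h_1$ prescribed by the naturality statement. A second point to check is the polarity of $C$: one must make sure the reduction order chosen by $\varepsilon$ in each cut matches. I would do this by case analysis on the polarities of $A$, $B$ and $C$, reducing each case with $(R\varp{\mu})$ and $(R\oplus)$ only.
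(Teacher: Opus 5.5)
Your plan follows the paper's structure: positivity, the two inverse laws, then the two naturality equations. Your computations of $\varphi\circ\psi$, $\psi\circ\varphi$ and \eqref{eq:sum-nat-2} are correct.

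\textbf{The gap is in \eqref{eq:sum-nat-1}.} The step ``$h$ is pushed into each branch'' is a reduction only when $h$ is a stack, i.e.\ linear. If $C$ is negative, $h$ can be $\mt\varn{w}.c'$ with $w$ occurring zero or several times in $c'$. Then $\varn{\cut{\mu\varn{\alpha}.\cut x{\pairing fg}}{h}}$ reduces by $(R\varn{\mt})$ to $c'\subst{\mu\varn{\alpha}.\cut x{\pairing fg}}{w}$. This erases or duplicates the whole case analysis on $x\colon A\oplus B$ instead of distributing $h$ into its branches. That $\pairing fg$ is a stack is irrelevant at this point. Thunkability of variables only gives associations such as $h\ccomp(f\ccomp y)=(h\ccomp f)\ccomp y$ inside a branch. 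It does not give the commuting conversion that moves $h$ across the case analysis.

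This is exactly the case the paper singles out (``when $e_3$ is not necessarily linear''). The missing idea is to apply $(E\oplus)$ to the composite itself. Set $c_1\df\cut{\mu\alpha.\cut x{\pairing fg}}h$ and expand $\mt x.c_1\argred E{}\mui y{\cut{\ileft(y)}{\mt x.c_1}}z{\cut{\iright(z)}{\mt x.c_1}}$. Substituting $\ileft(y)$ for $x$ creates an $(R\oplus)$ redex against $\pairing fg$. So the branches reduce to $\cut{\mu\alpha.\cut yf}h$ and $\cut{\mu\alpha.\cut zg}h$, which is $\pairing{h\ccomp f}{h\ccomp g}$ up to $(R\mt)$. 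Abstractly, this is your own law $\psi\circ\varphi=\id$ applied to $h\ccomp\pairing fg$, combined with thunkability of the values $\ileft(y)$ and $\iright(z)$, not of variables.

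\textbf{Two smaller points.} First, your diagnosis of where the difficulty lies is inverted. \eqref{eq:sum-nat-2} needs no case analysis on polarities, because $\pairing fg$ is a stack of the positive type $A\oplus B$. Hence $(R\varp{\mu})$ substitutes it into $h_1\oplus h_2$ and into $\ileft(h_1)$ whatever $h_1,h_2$ are, exactly as your own computation shows. Second, the extra case in the bijection is vacuous. Any $e\in\Dduploid(A\oplus B,C_\alpha)$ is a context of the positive type $A\oplus B$, whereas $\mt\varn x.c$ only has negative types. So $e$ is always a stack and $(E\oplus)$ applies directly, independently of the polarity of $C$.
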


\begin{proof}
~
\begin{itemize}
\item Positivity: (by definition, and in fact) from the observation that
expansion at type $A\oplus B$ implies the linearity of all $e\in\Dduploid(A\oplus B,C)$.
\item $(\psi\circ\varphi)_{A,B,C}=\id_{{\Dduploid(A\oplus B,C)}}$ amounts
to the equality between terms:
\begin{align*}
 & \mui x{\cut x{\mt x.\cut{\ileft(x)}e}}y{\cut y{\mt y.\cut{\iright(y)}e}}\\
 & \argeq R{}\mui x{\cut{\ileft(x)}e}y{\cut{\iright(y)}e}\\
 & \argeq E{}e
\end{align*}
\item $(\varphi\circ\psi)_{A,B,C}=\id_{{\Dduploid(A,C)\times\Dduploid(B,C)}}$
amounts to the equality between terms:
\begin{align*}
 & \mt y.\cut{\iith(y)}{\mui{x_{1}}{\cut{x_{1}}{e_{1}}}{x_{2}}{\cut{x_{2}}{e_{2}}}}\\
 & \argeq R{}\mt y.\cut y{e_{i}}\\
 & \argeq e{}e_{i}
\end{align*}
\item The naturality condition \eqref{eq:sum-nat-1} amounts, for all $e_{1}\in\Dduploid(A,C_{\alpha})$,
$e_{2}\in\Dduploid(B,C_{\alpha})$ and $e_{3}\in\Dduploid(C,D)$,
to the equality between terms:
\begin{align*}
c_{1} & \df\cut{\mu\alpha.\cut x{\mui y{\cut y{e_{1}}}z{\cut z{e_{2}}}}}{e_{3}}\\
c_{2} & \df\cut x{\mui y{\cut{\mu\alpha.\cut y{e_{1}}}{e_{3}}}z{\cut{\mu\alpha.\cut z{e_{2}}}{e_{3}}}}\\
c_{1} & \argeq{RE}{}c_{2}
\end{align*}
when $e_{3}$ is not necessarily linear. It holds thanks to $\argred E{}$:
\begin{align*}
\mt x.c_{1} & \argred E{}\mui y{\cut{\ileft(y)}{\mt x.c_{1}}}z{\cut{\iright(z)}{\mt x.c_{1}}}\\
 & \argReds R{}\mui y{\cut{\mu\alpha.\cut y{e_{1}}}{e_{3}}}z{\cut{\mu\alpha.\cut z{e_{2}}}{e_{3}}}\\
 & \argredinv r{}\mt x.\cut x{\mui y{\cut{\mu\alpha.\cut y{e_{1}}}{e_{3}}}z{\cut{\mu\alpha.\cut z{e_{2}}}{e_{3}}}}\\
 & =\mt x.c_{2}
\end{align*}
\item The naturality condition \eqref{eq:sum-nat-2} amounts, for all $e_{1}\in\Dduploid(A,C_{\alpha})$,
$e_{2}\in\Dduploid(B,C_{\alpha})$, $t\in\Dduploid(A'_{x},A)$, $u\in\Dduploid(B'_{y},B)$,
to the equality between terms:
\begin{align*}
c_{1} & \df\cut{t\oplus_{x,y,z}u}{\pairing{e_{1}}{e_{2}}}\\
c_{2} & \df\cut z{\mui x{\cut t{e_{1}}}y{\cut u{e_{2}}}}\\
c_{1} & \argeq{RE}{}c_{2}
\end{align*}
It holds indeed:
\begin{align*}
\cut{t\oplus_{x,y,z}u}{\pairing{e_{1}}{e_{2}}} & =\cut{\mu\alpha.\cut z{\mui x{\cut{\ileft(t)}{\alpha}}y{\cut{\iright(u)}{\alpha}}}}{\pairing{e_{1}}{e_{2}}}\\
 & \argred r{}\cut z{\mui x{\cut{\ileft(t)}{\pairing{e_{1}}{e_{2}}}}y{\cut{\iright(u)}{\pairing{e_{1}}{e_{2}}}}}\\
 & \argReds{}{}\cut z{\mui x{\cut t{e_{1}}}y{\cut u{e_{2}}}}\\
 & =c_{2}\qedmultiline
\end{align*}
\end{itemize}
\end{proof}

\section{\label{sec:Relative-monads-and}Relative monads and comonads in $\protect\ILLpdiam$}

We prove that the syntactic linear CBPV model $\Vcategory\rightleftarrows\Scategory$
of thunkable and linear terms of $\ILLpdiam$ has a relative comonad
$\oc$ on $\shpos:\Scategory\rightarrow\Vcategory$. A thunkable map
$\oc N\vdash\shpos M$ of the syntactic model is the same thing as
a term with judgement $\oc N\vdash M$ in $\ILLpdiam$. The co-unit
$\varepsilon_{N}$ is thus given by the following covalue: 
\[
\mid\oc\alpha\colon\oc N\vdash\alpha:N
\]
Given any thunkable map $\oc N\vdash\shpos M$, in the form of a command
\[
c\colon(x\colon\oc N\vdash\alpha\colon M)\,,
\]
its coextension $\coextension c$ is the value:
\[
x\colon\oc N\vdash\underbrace{\mu\oc\alpha.c}_{\coextension c}\colon\oc M
\]
which is indeed thunkable by definition. The two unital laws
\begin{align*}
\varepsilon\circ\coextension f & =f & \id_{SB} & =\coextension{\varepsilon_{B}}
\end{align*}
coincide respectively with the following instances of reduction and
expansion:
\begin{align*}
\varp{\cut{\mu\oc\alpha.c}{\oc\alpha}} & \argred R{}c & x & \argred E{}\mu\oc\alpha.\cut x{\oc\alpha}
\end{align*}
The associativity law:
\[
\coextension{(f\circ\coextension g)}=\coextension f\circ\coextension g
\]
instantiated for $c\colon(x\colon\oc N_{1}\vdash\alpha\colon N_{2})$
and $c'\colon(y\colon\oc N_{2}\vdash\beta\colon N_{3})$ corresponds
to 
\begin{align*}
\mu\oc\beta.\varp{\cut{\mu\oc\alpha.c}{\mt\varp y.c'}} & \argeq{RE}{}\mu\varp{\gamma}.\varp{\cut{\mu\oc\alpha.c}{\mt\varp y.\varp{\cut{\mu\oc\beta.c'}{\gamma}}}}
\end{align*}
which indeed holds given that both sides are convertible to $\mu\oc\beta.c'\subst{\mu\oc\alpha.c}y$.

The case of the relative monad $\mon:\Vcategory\rightarrow\Scategory$
on $\shneg:\Vcategory\rightarrow\Scategory$ is symmetric. This establishes
\prettyref{prop:The-syntactic-linear}.
\begin{prop}
A term $x\colon\oc A\vdash t\colon\oc B$ is a $\oc$-coalgebra morphism
in the syntactic model if and only if the following commutation holds
for any $c$:
\[
\varp{\cut t{\mt\varp y.\varp{\cut{\mu\oc\beta.c}{\alpha}}}}\argeq{RE}{}\varp{\cut{\mu\oc\beta.\varp{\cut t{\mt\varp y.c}}}{\alpha}}
\]
Dually, a term $e\colon\mon A\vdash\alpha\colon\mon B$ is a $\mon$-algebra
morphism in the syntactic model if and only if the following commutation
holds for any $c$:
\[
\varn{\cut{\mu\varn{\beta}.\varn{\cut x{\mt\diam y.c}}}e}\argeq{RE}{}\varn{\cut x{\mt\diam y.\varn{\cut{\mu\varn{\beta}.c}e}}}\,.
\]
\end{prop}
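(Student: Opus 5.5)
The plan is to unfold what a coalgebra morphism is for the relative comonad $\oc$ on $\shpos$ built in this appendix, and to translate each side of the defining square into syntax using the dictionary fixed above. Thunkable maps $\oc N\vdash\shpos M$ are commands $c\colon(x\colon\oc N\vdash\alpha\colon M)$, the coextension is $\coextension c=\mu\oc\alpha.c$, and composition is cut.

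For relative comonads, a coalgebra morphism $t$ between the cofree-like coalgebras $\oc A$ and $\oc B$ is characterised by commutation with coextension. For every map $f\colon\oc B\to\shpos M$ we ask
\[
\coextension{f}\ccomp t=\coextension{(f\ccomp t)} .
\]
Equivalently, $t$ is in the image of the comparison functor into coalgebras (Altenkirch et al., Theorem 2.12, recalled before the statement). I would first record this equivalence at the categorical level. The coalgebra structure of $\oc B$ is $\coextension{(-)}$ itself, and the standard argument shows that preserving this structure is the same as commuting with all coextensions.

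Next I instantiate $f$ as a generic command $c\colon(y\colon\oc B\vdash\beta\colon M)$. The right-hand side $\coextension{(f\ccomp t)}$ becomes $\mu\oc\beta.\varp{\cut t{\mt\varp y.c}}$. Postcomposing with the output covariable $\alpha$ gives the command $\varp{\cut{\mu\oc\beta.\varp{\cut t{\mt\varp y.c}}}{\alpha}}$. The left-hand side $\coextension f\ccomp t$ is the cut of $t$ against $\mt\varp y.(\ldots)$ into the value $\mu\oc\beta.c$, which gives $\varp{\cut t{\mt\varp y.\varp{\cut{\mu\oc\beta.c}{\alpha}}}}$.

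Equality of these two morphisms $\oc A\to\oc M$ in $\Pcategoryt$ is equality of the terms up to $\argeq{RE}{}$. By $(E\mu)$ and $(E\mt)$, equality of the expressions is equivalent to equality of the commands obtained by cutting against a fresh $\alpha$. This gives exactly the displayed commutation. Quantifying over all $c$ recovers quantification over all $f$, since every thunkable $\oc B\vdash\shpos M$ arises this way.

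The main obstacle is the first step: making sure the syntactic composition faithfully represents composition in $\Pcategoryt$, in particular when $t$ is not itself a value. Cuts involving a non-value $t$ of positive type $\oc B$ must be read with the $+$ polarity, so that $t$ is evaluated before $y$ is substituted. I would check this by using that $\mu\oc\beta.c$ is a value, hence thunkable, so that the relevant composition chain associates. Here I would rely on the polarity and thunkability conventions of the duploid $\duploid{\Shneg}{\Shpos}$.

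The dual statement for $\mon$-algebra morphisms follows by the symmetric argument. Exchange values and stacks, replace $\mu\oc\beta.c$ with $\mt\diam y.c$, and use $\mon$-extension in place of coextension, with the negative-polarity cuts.
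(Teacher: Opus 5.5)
Your translation is correct. With $f$ the command $c\colon(y\colon\oc B\vdash\beta\colon M)$, the two commands in the statement are exactly $\coextension f\ccomp t$ and $\coextension{(f\ccomp t)}$ cut against $\alpha$, and $(E\mu)$ reduces equality of the expressions to equality of these commands.

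The gap is in what you take for granted about $t$. A $\oc$-coalgebra morphism is in particular a morphism of $\Pcategoryt$, that is, a \emph{thunkable} term. The statement, however, is about an arbitrary term $x\colon\oc A\vdash t\colon\oc B$. When you speak of ``equality of these two morphisms $\oc A\to\oc M$ in $\Pcategoryt$'', you are already assuming that $t$ is thunkable. So you prove the \emph{only if} direction, but the \emph{if} direction only for thunkable $t$: nothing in your argument shows that the commutation forces thunkability. The obstacle you single out instead is not one. Composition in the syntactic model is by definition the cut $\mu\varp{\alpha}.\varp{\cut t{\mt\varp y.\cdots}}$, only binary composites occur on either side, and the thunkability of $\mu\oc\beta.c$ is never used.

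The missing step is short. Take $c:=\varp{\cut y{\oc\beta}}$, so $M:=B$ and $f$ is the counit. Using $(E\oc)$ on $y$ and then $(E\mt)$, the left-hand side is $\argeq{RE}{}\varp{\cut t{\alpha}}$. Using $(E\mt)$, the right-hand side is $\argeq{RE}{}\varp{\cut{\mu\oc\beta.\varp{\cut t{\oc\beta}}}{\alpha}}$. Hence, by $(E\mu)$, $t\argeq{RE}{}\mu\oc\beta.\varp{\cut t{\oc\beta}}$. This term is well typed because the context is $x\colon\oc A$, and it is a value, hence thunkable, since composing with a value is substitution by $(R\mt)$. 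This is the syntactic form of your remark that $t=\coextension{(\varepsilon\ccomp t)}$ lies in the image of the comparison functor.

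It also gives a shorter proof of the whole statement. If $t\argeq{RE}{}\mu\oc\beta'.c'$, both sides reduce by $(R\mt)$ to $\varp{\cut{\mu\oc\beta.c\subst{\mu\oc\beta'.c'}{y}}{\alpha}}$. So the commutation holds for all $c$ if and only if $t$ is equivalent to a promotion. This argument does not depend on the output type of $c$, so it also covers commands with positive output type, which are not of the form $f\colon\oc B\to\Shpos M$ with $M$ in $\Ncategoryl$. Dually, $c:=\varn{\cut{\diam y}{\beta}}$ gives $e\argeq{RE}{}\mt\diam y.\varn{\cut{\diam y}{e}}$, which is a stack and hence linear.
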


These characterisations extend beyond free (co)algebras if we add
other types of (co)algebras to the type system making these commutations
well typed.

\section{\label{sec:Focused-normal-forms}Focused normal forms for $\protect\ILLpdiam$}

\subsection{\label{sec:Strong-normalisation}Strong normalisation}

\global\long\def\TT{\mathbb{T}}\global\long\def\EE{\mathbb{E}}\global\long\def\Bot{\mathord{\Vbar}}\global\long\def\polar{\mathrel{\Bot}}\global\long\def\polarp{\mathrel{\Bot^{+}}}\global\long\def\polarn{\mathrel{\Bot^{\moins}}}\global\long\def\orth#1{{{#1}^{\bot}}}\global\long\def\biorth#1{{{#1}^{\bot\bot}}}\global\long\def\porth#1{{{#1}^{\termp{\bot}}}}\global\long\def\pbiorth#1{{{#1}^{\termp{\bot}\termp{\bot}}}}\global\long\def\north#1{{{#1}^{\termn{\bot}}}}\global\long\def\nbiorth#1{{{#1}^{\termn{\bot}\termn{\bot}}}}\global\long\def\interp#1{\big|#1\big|}\global\long\def\interpv#1{\mathopen{\Vert}#1\mathclose{\Vert}}\global\long\def\powerset#1{\mathcal{P}(#1)}\global\long\def\valset#1{#1_{\mathbb{V}}}\global\long\def\stackset#1{#1_{\mathbb{S}}}

This section is taken from \citep{Munch-Maccagnoni2017curry} with
minor adaptations, which was itself inspired by Krivine's adaptation
\citep{Kri93,Kri04Realizability} of Girard's reducibility candidates
\citep{Gir72,Gir87}.

Judgements refer to typability in $\ILLpdiam$ (or any extension with
additional structural rules, since the restriction on structural rules
is not accounted for by the model).
\begin{prop}[Weak standardization]
\label{prop:weak-std}If $c\mathrel{(\argRed R{}\setminus\argred R{})}^{*}c'\argred R{}c''$
then there exists $c'''$ such that $c\argred R{}c'''\argReds R{}c''$.
\end{prop}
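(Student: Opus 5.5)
We prove the statement by induction on the length $n$ of the sequence of non-head steps $c=c_{0}\argRed R{}c_{1}\argRed R{}\cdots\argRed R{}c_{n}=c'$. Here $\argred R{}$ is read as the \emph{head} reduction on commands: the rule is applied at the root of the command. Each step of $\argRed R{}\setminus\argred R{}$ then rewrites a redex strictly inside one of the two components $t$, $e$ of a cut $\vareps{\cut te}{}$. The case $n=0$ is trivial. For the inductive step it suffices to prove a one-step commutation lemma:
\begin{quote}
if $c\mathrel{(\argRed R{}\setminus\argred R{})}c'\argred R{}c''$, then there is $c'''$ with $c\argred R{}c'''\argReds R{}c''$.
\end{quote}
Granting the lemma, suppose $c\to c_{1}\to^{n-1}c'\argred R{}c''$. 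The induction hypothesis applied to $c_{1}$ gives $c_{1}\argred R{}d\argReds R{}c''$. The lemma applied to $c\to c_{1}\argred R{}d$ gives $c\argred R{}c'''\argReds R{}d\argReds R{}c''$.

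To prove the lemma, we first note that head redexes are determined by the outermost shape of the cut. By orthogonality of the rewriting system (left-linear, no critical pairs, as recalled for $\IMALLp$ and extended by $(R\oc)$ and $(R\mon)$), the internal step from $c$ to $c'$ cannot create the root redex. The only apparent exception is when it turns a non-value into a value, or a non-stack into a stack.

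We rule this out by inspecting the grammar. Values and stacks are closed under reduction and anti-reduction at internal positions: the internal redexes of a value sit inside commands bound by $\mu$ or $\mt$ binders, or inside subvalues and substacks. For example, if $\iith(V)$ reduces internally, it becomes $\iith(V')$ with $V'$ still a value. Non-values $\mu\varp{\alpha}.c$ and non-stacks $\mt\varn x.c$ keep their head form under internal reduction. Therefore $c$ already has a head redex of the same rule, with the same polarity $\varepsilon$, and we take $c'''$ to be its contractum.

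It remains to show $c'''\argReds R{}c''$. This is the standard residual argument: the internal step $c\to c'$ acts on a subterm of the head redex. After contraction this subterm is either erased, or copied into the substituted positions $c\subst Vx$, $c\subst S{\alpha}$, or kept in place. Reducing all of its residuals, of which there are finitely many, in parallel gives $c'''\argReds R{}c''$. This uses the substitution lemma: $t\argRed R{}t'$ implies $c\subst tx\argReds R{}c\subst{t'}x$, proved by induction on $c$. The substitution lemma relies on values being stable under reduction, so that substituting a value for a variable keeps the term well-formed.

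The main obstacle is the case analysis on rules. We go rule by rule through Figure \ref{fig:Reduction-rules} and the rules $(R\oc)$ and $(R\mon)$, checking that the internal step cannot sit at the root of the redex pattern. The delicate points are the $\mu/\mt$ rules, whose redex condition depends on polarity and on one side being a value or stack, and the rule $(R\oc)$, where the stack $\prom S$ and the value $\mu\prom{\alpha}.c$ each contain an internal command. For the $\mu/\mt$ rules, the polarity annotation of a cut is fixed by typing, so it is preserved by internal steps. This makes the argument an instance of the general standardisation property for orthogonal systems.
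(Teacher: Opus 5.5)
Your proof is correct and follows the paper's argument. In both, non-head steps only rewrite strict subcommands, so $c$ already contains a head redex for the same rule; contracting it and then reducing the substituted components catches up with $c''$. The paper treats the whole sequence of internal steps at once, with multi-step reductions of each component ($V\argReds R{}V'$, $W\argReds R{}W'$, $c_0\argReds R{}c_0'$), instead of your induction on a one-step commutation lemma, which is an inessential difference. One small correction: internal steps cannot create the head redex because they preserve the outer constructors and polarity annotations that the depth-one redex patterns depend on, as your grammar inspection shows. This does not follow from orthogonality as such, since in orthogonal systems internal steps can create root redexes in general.
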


\begin{proof}
Suppose one has $c\mathrel{(\argRed R{}\setminus\argred R{})}^{*}c'\argred R{}c''$.
We consider the rule involved in the reduction $c'\argred R{}c''$:
if it is $(R\otimes)$ then $c'$ is of the form $\cut{V'\smallotimes W'}{\mt(x\smallotimes y).c_{0}'}$
and $c''=c_{0}'\subst{V'}{x\addsubst{W'}y}$. Now $\argRed R{}\setminus\argred R{}$
can only create strict subcommands, therefore $c$ is of the form
$\cut{V\smallotimes W}{\mt(x\smallotimes y).c_{0}}$ where $V\argReds R{}V'$,
$W\argReds R{}W'$, and $c_{0}\argReds R{}c_{0}'$. Therefore by taking
$c'''=c_{0}\subst V{x\addsubst Wy}$ one has $c\argred R{}c'''\argReds R{}c'_{0}$.
Similarly for all other reduction rules: no redex for $\argred R{}$
can be created by $\argRed R{}\setminus\argred R{}$ and one can apply
the same reasoning.
\end{proof}

\begin{defn}
A command $c$ is strongly normalizing ($c\in\Bot$) if any $\argRed R{}$-sequence
starting from $c$ is finite. An expression $t$ (respectively a context
$e$) is strongly normalizing ($t\in\TT$, resp. $e\in\EE$) if any
$\vareps{\cut t{\alpha}}{}$ (resp. any $\vareps{\cut xe}{}$) is
in $\Bot$.
\end{defn}

\begin{lem}
\label{lem:comm-sm}Let $c$ be a command. If all strict sub-commands
of $c$ are in $\Bot$, and if either $c\argnred R{}$ or $c\argred R{}c'\in\Bot$
for some command $c'$, then $c\in\Bot$.
\end{lem}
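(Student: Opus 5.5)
Suppose, towards a contradiction, that there is an infinite $\argRed R{}$-sequence $c=c_{0}\argRed R{}c_{1}\argRed R{}c_{2}\argRed R{}\cdots$. The argument is a case analysis on whether this sequence ever performs a head step, i.e.\ a step of $\argred R{}$ at the root of the command, and it relies on the weak standardization property (\prettyref{prop:weak-std}).

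\emph{Case 1: no step is a head step.} Then every step lies in $\argRed R{}\setminus\argred R{}$, so it rewrites inside a strict sub-command or inside a value, stack, expression or context of the command. Since these steps never create a redex at the root, the outer shape of the command (a cut together with its constructor skeleton) is preserved along the whole sequence. The infinite sequence must therefore project to an infinite reduction in one of the finitely many immediate components of $c$.

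We then turn this into an infinite reduction of some strict sub-command of $c$.
\begin{itemize}
\item If the component is itself a sub-command, we use it directly.
\item If the component is an expression $t$ or a context $e$, we observe that reductions inside $t$ are exactly reductions inside the strict sub-command $\vareps{\cut t{\alpha}}{}$, and dually for $e$ with $\vareps{\cut xe}{}$. When $t$ is a value built from sub-values, such as $V\smallotimes W$, we descend further until we reach a binder $\mu$ or $\mt$ whose body is a strict sub-command.
\end{itemize}
This contradicts the hypothesis that all strict sub-commands of $c$ are in $\Bot$. The point to check carefully is that the notion of ``strict sub-command'' is meant to cover exactly these bodies of binders. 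For the non-binding constructors (pairs, injections, $\oc S$, $\diam V$, $\init S$, $\term V$) reductions occur only inside nested binders, so the projection argument goes through by induction on the term structure.

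\emph{Case 2: some step $c_{n}\argred R{}c_{n+1}$ is a head step, with $n$ chosen minimal.} By \prettyref{prop:weak-std}, there is a $c'''$ with $c\argred R{}c'''\argReds R{}c_{n+1}$. In particular $c$ has a head redex, so the first alternative of the hypothesis ($c\argnred R{}$) is excluded and we are in the situation $c\argred R{}c'$ with $c'\in\Bot$. Head reduction $\argred R{}$ is deterministic: the rules have no critical pairs, and the shape of the cut determines which rule applies, with the polarity annotation settling the $\mu/\mt$ case. Hence $c'''=c'$, so $c'\argReds R{}c_{n+1}$. Appending the infinite tail $c_{n+1}\argRed R{}c_{n+2}\argRed R{}\cdots$ yields an infinite reduction sequence starting from $c'$, contradicting $c'\in\Bot$.

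The main obstacle is Case 1: making precise that internal reductions of $c$ are tracked by reductions of its strict sub-commands. This is where the definitions of $\TT$ and $\EE$ via cuts with a fresh (co)variable, and the fact that substituting a variable creates no redex, must be used. The determinism of $\argred R{}$ in Case 2 should be routine.
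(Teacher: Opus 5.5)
Your proof is correct and essentially the paper's: if the infinite sequence never reduces at the root, the finitely many immediate sub-commands give an infinite reduction in one of them, and otherwise weak standardization (\prettyref{prop:weak-std}) together with determinism of $\argred R{}$ transfers the tail of the sequence to $c'$. One correction: $\vareps{\cut t{\alpha}}{}$ is not a strict sub-command of $c$, and Case~1 needs no appeal to the sets of strongly normalising expressions and contexts, which would be circular since \prettyref{lem:term-sn} is proved from this very lemma; the purely syntactic descent to the bodies of binders, which you also describe, is all that is needed.
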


\begin{proof}
Assume that $c$ admits an infinite $\argRed R{}$-sequence. If this
sequence is an infinite ($\argRed R{}\setminus\argred R{}$)-sequence,
then at least one immediate sub-command of $c$ (there are finitely
many so) admits an infinite $\argRed R{}$-sequence. Otherwise, let
us write $c'\argred R{}c''$ the first occurrence of $\argred R{}$
in the sequence. By \prettyref{prop:weak-std} there exists $c'''$
with $c\argred R{}c'''\argReds R{}c''$ and therefore $c'''$ admits
an infinite $\argRed R{}$-sequence. In this case, since $\argred R{}$
is deterministic, this means that all $\argred R{}$-reducts admit
an infinite $\argRed R{}$-sequence.
\end{proof}

\begin{lem}
\label{lem:term-sn}One has $t\in\TT$ (respectively $e\in\EE$) if
and only if all sub-commands of $t$ (resp. $e$) are in $\Bot$.
\end{lem}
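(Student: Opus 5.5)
We prove both directions at once by relating $\argRed R{}$-sequences of the test command $\vareps{\cut t{\alpha}}{}$ (resp.\ $\vareps{\cut xe}{}$) to $\argRed R{}$-sequences inside the sub-commands of $t$ (resp.\ $e$). Throughout, $\alpha$ and $x$ are fresh. Recall that by definition $t\in\TT$ means every such $\vareps{\cut t{\alpha}}{}$ is in $\Bot$, so it suffices to show that this single command is in $\Bot$ if and only if all sub-commands of $t$ are. We treat expressions; contexts are symmetric.

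\emph{Only if.} We argue by contraposition. Suppose some sub-command $c_0$ of $t$ has an infinite $\argRed R{}$-sequence. By compatibility of $\argRed R{}$, this sequence lifts to an infinite sequence starting from $\vareps{\cut t{\alpha}}{}$, obtained by reducing inside the occurrence of $c_0$ in context. Hence $\vareps{\cut t{\alpha}}{}\notin\Bot$. The only care needed is that $\argRed R{}$ is the full compatible closure, so reduction under binders such as $\mu$, $\mt$ and pattern-matching constructors is allowed. By the definition of the calculus, it is.

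\emph{If.} Assume all sub-commands of $t$ are in $\Bot$. We apply \prettyref{lem:comm-sm} to $c\df\vareps{\cut t{\alpha}}{}$, whose strict sub-commands are exactly the sub-commands of $t$, all in $\Bot$ by hypothesis. We then split on the shape of $t$.
\begin{itemize}
\item If $t$ is a value other than $\mu\varn{\beta}.c_1$, then no rule of \prettyref{fig:Reduction-rules} or \prettyref{fig:ILLpdiam-calculus} fires at the head against a bare covariable $\alpha$. Every rule requires either a $\mt$-binder or a constructor stack on the right, or a $\mu$-binder on the left. So $c\argnred R{}$ and the lemma gives $c\in\Bot$.
\item If $t=\mu\vareps{\beta}{}.c_1$ with matching polarity, then $c\argred R{}c_1\subst{\alpha}{\beta}$. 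This is a renaming of a sub-command of $t$. Since renaming one covariable by another preserves reduction sequences, this renaming is in $\Bot$, and the lemma applies again.
\end{itemize}

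The main point requiring care is exactly this case analysis on the head of $c$. One must check rule by rule (including $R\oc$ and $R\mon$) that a bare covariable, respectively a bare variable, never forms a redex except with a $\mu$ (resp.\ $\mt$) binder. One must also check that $\beta\mapsto\alpha$ renaming commutes with $\argRed R{}$, i.e.\ that reduction is stable under renaming.
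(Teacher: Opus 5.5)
Your proposal is correct and follows the paper's proof: you apply \prettyref{lem:comm-sm} to $\vareps{\cut t{\alpha}}{}$, observe that the only possible head step is $R\vareps{\mu}{}$ when $t=\mu\vareps{\beta}{}.c_1$, and conclude because the reduct is a renaming of a sub-command of $t$ (the ``only if'' direction is immediate by compatibility, as the paper says). The paper does not assume $\alpha$ fresh; it invokes left-linearity instead, i.e.\ a variable-for-variable substitution cannot create redexes, so reductions of $c_1\subst{\alpha}{\beta}$ reflect back to $c_1$, and this reflection is the form of ``stability under renaming'' you need if $\alpha$ may occur free in $t$. Separately, your case split omits $\varp{\cut{\mu\varn{\beta}.c_1}{\alpha}}$, but that command is head-normal, so your first bullet's argument covers it.
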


\begin{proof}
($\Rightarrow$) Immediate. ($\Leftarrow$) Assume that all sub-commands
of $t$ are in $\Bot$. We apply \prettyref{lem:comm-sm} on the command
$\vareps{\cut t{\alpha}}{}$. If there exists $c$ with $\vareps{\cut t{\alpha}}{}\argred R{}c$,
then one has $t=\mu\vareps{\beta}{}.c'$ with $c=c'\subst{\alpha}{\beta}$.
Now $c'$, being a sub-command of $t$, is in $\Bot$, and since the
rewriting rules are left-linear, variable substitution cannot unlock
new reductions. Therefore $c\in\Bot$ as required. The proof is similar
for $e$.
\end{proof}

Notice that the direction ($\Leftarrow$) relies on eliminations (e.g.
function application $tu$) being expressed with a $\mu$ binder.
In calculi where $tu$ is given as a primitive, it is not possible
to state that $\cut t{u\push\alpha}$ is a sub-command of $tu$!
\begin{prop}[Saturation]
\label{prop:saturation}One has $\cut te\in\Bot$ if and only if
$t\in\TT$, $e\in\EE$, and either $\cut te\argnred R{}$ or $\cut te\argred R{}c\in\Bot$.
\end{prop}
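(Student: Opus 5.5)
The plan is to prove the two directions separately. The forward direction is a matter of unfolding definitions; the backward direction reduces directly to \prettyref{lem:comm-sm} applied to the command $\cut te$ itself.

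\emph{Forward direction.} Assume $\cut te\in\Bot$.
\begin{itemize}
\item Every sub-command of $t$ or of $e$ is a sub-command of $\cut te$. Any infinite $\argRed R{}$-sequence from such a sub-command lifts, by compatibility of $\argRed R{}$, to an infinite sequence from $\cut te$. So all sub-commands of $t$ and $e$ lie in $\Bot$, and \prettyref{lem:term-sn} gives $t\in\TT$ and $e\in\EE$.
\item For the last conjunct, either $\cut te\argnred R{}$, or $\cut te\argred R{}c$ for some $c$. In the second case $c$ is a one-step $\argRed R{}$-reduct of a strongly normalising command, hence $c\in\Bot$.
\end{itemize}

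\emph{Backward direction.} Assume $t\in\TT$, $e\in\EE$, and either $\cut te\argnred R{}$ or $\cut te\argred R{}c\in\Bot$. I apply \prettyref{lem:comm-sm} to $\cut te$. Its second hypothesis is exactly our assumption, so only its first hypothesis needs checking: every strict sub-command of $\cut te$ is in $\Bot$.
\begin{itemize}
\item A strict sub-command of $\cut te$ is either a sub-command of $t$ or a sub-command of $e$, since the cut itself introduces no binder.
\item By \prettyref{lem:term-sn} in the ($\Rightarrow$) direction, $t\in\TT$ and $e\in\EE$ imply that all sub-commands of $t$ and of $e$ are in $\Bot$.
\end{itemize}
Hence \prettyref{lem:comm-sm} yields $\cut te\in\Bot$.

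The only point needing care is the bookkeeping on the notion of ``sub-command''. It must be the same in \prettyref{lem:comm-sm} and \prettyref{lem:term-sn}, namely commands occurring under binders of $t$ or $e$, with free variables treated as such. It must also be true that the immediate sub-commands of $\cut te$ are exactly those of $t$ and $e$.

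I expect no real obstacle beyond this. Determinism of $\argred R{}$ at the head is already built into \prettyref{lem:comm-sm}, and the new rules $(R\oc)$ and $(R\mon)$ are still head rules of the same shape. So the lemma applies to $\ILLpdiam$ unchanged.
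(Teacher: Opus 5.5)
Your proposal is correct and follows the paper's proof, which is just ``by \prettyref{lem:comm-sm} and \prettyref{lem:term-sn}''. You spell out the intended argument in full: the backward direction applies \prettyref{lem:comm-sm} to $\cut te$ and discharges its sub-command hypothesis with \prettyref{lem:term-sn}~($\Rightarrow$), and the forward direction follows from compatibility of $\argRed R{}$ together with \prettyref{lem:term-sn}~($\Leftarrow$).
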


\begin{proof}
By \prettyref{lem:comm-sm}~and~\prettyref{lem:term-sn}.
\end{proof}

As usual, we let $\Bot$ define an antitone Galois correspondence
$\orth{\cdot}\dashv\orth{\cdot}:\powerset{\TT}\rightarrow{\powerset{\EE}^{\op}}$.
In fact, we define two such Galois correspondences:
\begin{align*}
\parens{\porth{\cdot}} & \dashv\parens{\porth{\cdot}}:\powerset{\TT}\rightarrow{\powerset{\mathbb{S}}^{\op}}\\
\parens{\north{\cdot}} & \dashv\parens{\north{\cdot}}:\powerset{\mathbb{V}}\rightarrow{\powerset{\EE}^{\op}}
\end{align*}
by the following two poles:
\begin{align*}
t\polarp S & \iff\varp{\cut tS}\in\Bot\\
V\polarn e & \iff\varn{\cut Ve}\in\Bot
\end{align*}

\global\long\def\VV{\mathbb{V}}\global\long\def\SS{\mathbb{S}}

\begin{defn}
~
\begin{itemize}
\item $\VV\subseteq\TT$ is the set of strongly normalising values.
\item $\SS\subseteq\mathbb{E}$ is the set of strongly normalising stacks.
\item We write $\valset{\mathcal{X}}\df\mathcal{X}\cap\VV$ and $\stackset{\mathcal{X}}\df\mathcal{X}\cap\SS$.
\item We consider the following maps:
\begin{align*}
\mathcal{V}\otimes\mathcal{W} & \df\setbin{V\smallotimes W}{V\in\mathcal{V}\text{ and }W\in\mathcal{W}}\\
\mathcal{V}_{1}\oplus\mathcal{V}_{2} & \df\setbin{\iota_{i}(V)}{V\in\mathcal{V}_{i}}\\
\oc\mathcal{S} & \df\setbin{\oc S}{S\in\mathcal{S}}\\
\mathcal{V}\bullet\mathcal{S} & \df\setbin{V\push S}{V\in\mathcal{V}\text{ and }S\in\mathcal{S}}\\
\mathcal{S}_{1}\with\mathcal{S}_{2} & \df\setbin{\pi_{i}\push S}{S\in\mathcal{S}_{i}}\\
\mon\mathcal{V} & \df\setbin{\diam V}{V\in\mathcal{V}}
\end{align*}
\end{itemize}
\end{defn}

\begin{lem}
These maps preserve strong normalisation.
\end{lem}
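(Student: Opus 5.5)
The plan is to use the characterisation of \prettyref{lem:term-sn}: a value (or stack) is strongly normalising exactly when all of its sub-commands are in $\Bot$. Each map builds a new value or stack from given ones using a constructor that does not bind any variable and does not add any command. So the sub-commands of the result should be exactly the sub-commands of the arguments, up to the one caveat below, and \prettyref{lem:term-sn} then gives the claim at once. I will treat the six maps uniformly and single out where this argument needs care.

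Concretely, take $V\in\mathcal{V}\subseteq\VV$ and $W\in\mathcal{W}\subseteq\VV$. Every sub-command of $V\smallotimes W$ lies inside $V$ or inside $W$, since $\smallotimes$ is not a binder. Each such sub-command is in $\Bot$ by the ($\Rightarrow$) direction of \prettyref{lem:term-sn} applied to $V$ and to $W$. The ($\Leftarrow$) direction then gives $V\smallotimes W\in\TT$, and it is syntactically a value, so $V\smallotimes W\in\VV$. The same reasoning covers $\iota_{i}(V)$ and $\diam V$ among values. Among stacks it covers $\oc S$, $V\push S$ and $\pi_{i}\push S$: these are stacks by the grammar, and they lie in $\EE$ by the stack version of \prettyref{lem:term-sn}.

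The one point to check is that \prettyref{lem:term-sn} is stated in terms of sub-commands of the term itself, and not in terms of commands such as $\cut{V\smallotimes W}{\alpha}$. Because of this, no reduction at the head of a cut has to be analysed: the criterion only looks inside the term. I also need that \prettyref{lem:term-sn}, whose proof goes through \prettyref{lem:comm-sm}, still holds once the new constructors $\mu\prom{\alpha}.c$, $\oc S$, $\diam V$ and $\mt\diam x.c$ are added. That in turn relies on \prettyref{prop:weak-std} covering the rules $(R\oc)$ and $(R\mon)$. Both rules are left-linear and have no critical pairs with the others, and internal reductions cannot create a head redex for them, so the case analysis in \prettyref{prop:weak-std} extends verbatim. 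I expect this verification to be the only real obstacle, and it is routine.
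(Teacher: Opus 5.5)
Your proof is correct and follows the paper's argument. The paper's proof is just ``this follows immediately from \prettyref{lem:term-sn}'', because the constructors $\smallotimes$, $\iota_i$, $\oc$, $\push$, $\pi_i\push$ and $\diam$ introduce no new command, so the sub-commands of the result are exactly those of the arguments. Your extra check that \prettyref{prop:weak-std} covers $(R\oc)$ and $(R\mon)$ is harmless but unnecessary, since the paper already states \prettyref{prop:weak-std} and \prettyref{lem:term-sn} for the full calculus $\ILLpdiam$.
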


\begin{proof}
This follows immediately from \prettyref{lem:term-sn}.
\end{proof}

\begin{defn}
We define by mutual induction interpretations of types:
\begin{align*}
\interpv P & \subseteq\VV & \interpv N & \subseteq\SS\\
\TT(A) & \subseteq\TT & \EE(A) & \subseteq\EE
\end{align*}
with:
\begin{align*}
\TT(P) & \df\pbiorth{\interpv P} & \TT(N) & \df\north{\interpv N}\\
\EE(P) & \df\porth{\interpv P} & \EE(N) & \df\nbiorth{\interpv N}\\
\interpv{\varp X} & \df\VV & \interpv{\varn X} & \df\SS\\
\interpv{\unite} & \df\{()\} & \interpv{A\mixarrow B} & \df\valset{\TT(A)}\bullet\stackset{\EE(B)}\\
\interpv{A\otimes B} & \df\valset{\TT(A)}\otimes\valset{\TT(B)} & \interpv{A\with B} & \df\stackset{\EE(A)}\with\stackset{\EE(B)}\\
\interpv{\oc A} & \df\valset{\porth{(\prom{\stackset{\EE(A)}})}} & \interpv{\mon A} & \df\stackset{\north{(\mon\valset{\TT(A)})}}\\
\interpv{A\oplus B} & \df\valset{\TT(A)}\oplus\valset{\TT(B)}\\
\interpv{\falso} & \df\emptyset & \interpv{\top} & \df\emptyset
\end{align*}
\end{defn}

From now on it will be clear from the polarity of the formula whether
$\orth{\cdot}$ means $\porth{\cdot}$ or $\north{\cdot}$.

\begin{lem}
\label{lem:gen}For all $A$ one has:
\begin{align*}
\TT(A) & =\orth{\stackset{\EE(A)}} & \EE(A) & =\orth{\valset{\TT(A)}}\,.
\end{align*}
\end{lem}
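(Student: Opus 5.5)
We argue by case analysis on the polarity of $A$, directly from the definitions of $\TT(A)$ and $\EE(A)$ and the standard laws of the antitone Galois correspondences $\porth{\cdot}$ and $\north{\cdot}$ (namely $X\subseteq\biorth X$ and $\orth{X}=\orth{(\biorth X)}$). We also use the observation that a set of the form $\orth{\mathcal{Y}}$ is determined by the values (resp.\ stacks) it is tested against.

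Take first $A=P$ positive. By definition, $\EE(P)=\porth{\interpv P}$ and $\TT(P)=\pbiorth{\interpv P}$.

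For the identity $\EE(P)=\orth{\valset{\TT(P)}}$:
\begin{itemize}
\item Since $\interpv P\subseteq\VV$ and $\interpv P\subseteq\pbiorth{\interpv P}$, we have $\interpv P\subseteq\valset{\TT(P)}\subseteq\TT(P)$.
\item Applying the antitone map $\porth{\cdot}$ gives $\porth{\TT(P)}\subseteq\porth{(\valset{\TT(P)})}\subseteq\porth{\interpv P}$.
\item The two outer sets are equal, because $\porth{(\pbiorth X)}=\porth X$. So all three coincide.
\end{itemize}

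For the identity $\TT(P)=\orth{\stackset{\EE(P)}}$, observe that $\porth{\cdot}$ already produces sets of stacks. Hence $\EE(P)=\porth{\interpv P}\subseteq\SS$ by the definition of the pole $\polarp$, so $\stackset{\EE(P)}=\EE(P)$, and $\orth{\stackset{\EE(P)}}=\pbiorth{\interpv P}=\TT(P)$ immediately. Strictly, $\porth{X}$ is taken inside $\powerset{\SS}$, so this identity holds on the nose.

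Take now $A=N$ negative; this case is dual. Here $\TT(N)=\north{\interpv N}$ lives in $\powerset{\VV}$, so $\valset{\TT(N)}=\TT(N)$, and then $\orth{\valset{\TT(N)}}=\nbiorth{\interpv N}=\EE(N)$. For the remaining identity:
\begin{itemize}
\item From $\interpv N\subseteq\SS$ and $\interpv N\subseteq\nbiorth{\interpv N}$, we get $\interpv N\subseteq\stackset{\EE(N)}\subseteq\EE(N)$.
\item Applying $\north{\cdot}$ and the triple-orthogonal law gives $\orth{\stackset{\EE(N)}}=\north{\interpv N}=\TT(N)$.
\end{itemize}

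The main obstacle is bookkeeping rather than mathematics: we must check that the typing of the poles ($\polarp$ relates expressions to stacks, and $\polarn$ relates values to contexts) makes the orthogonals land in $\SS$ and $\VV$ respectively. This is what makes the restrictions $\stackset{\cdot}$ and $\valset{\cdot}$ harmless on one side and sandwiched on the other.

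We also need the inclusions $\interpv P\subseteq\VV$ and $\interpv N\subseteq\SS$ for every type. This follows from the definition of the interpretation, together with the preceding lemma that the constructors $\otimes,\oplus,\bullet,\with,\oc,\mon$ on sets preserve strong normalisation. No induction on types is needed beyond this, since the argument is uniform in $\interpv A$.
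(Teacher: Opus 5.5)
Your proof is correct and follows the paper's argument. In each polarity, one identity holds by definition because the relevant orthogonal already lands in stacks (resp.\ values). The other identity comes from the sandwich $\interpv{N}\subseteq\stackset{\EE(N)}\subseteq\EE(N)=\nbiorth{\interpv{N}}$ (or its positive dual) together with the triple-orthogonal law. The only difference is that you write out both non-trivial cases, whereas the paper does the negative one and calls the other symmetric.
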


\begin{proof}
If $A$ is positive, then $\stackset{\EE(\termp A)}=\EE(\termp A)$
and $\TT(A)=\orth{\EE(A)}$ by definition; if it is negative then
we prove $\TT(\termn A)=\orth{\stackset{\EE(\termn A)}}$ by inclusion.
($\supseteq$) One has $\interpv{\termn A}\subseteq\stackset{\EE(\termn A)}$,
hence $\TT(\termn A)=\orth{\interpv{\termn A}}\supseteq\orth{\stackset{\EE(\termn A)}}$.
($\subseteq$) One has $\stackset{\EE(\termn A)}\subseteq\EE(\termn A)=\biorth{\interpv{\termn A}}$,
hence $\TT(\termn A)=\orth{\interpv{\termn A}}\subseteq\orth{\stackset{\EE(\termn A)}}$.
The reasoning for $\EE(A)=\orth{\valset{\TT(A)}}$ is symmetric.
\end{proof}

\begin{lem}
\label{lem:id}Let $A$ be a formula. For all $x$ one has $x\in\valset{\TT(A)}$
and for all $\alpha$ one has $\alpha\in\stackset{\EE(A)}$.
\end{lem}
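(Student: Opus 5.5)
The plan is to reduce the claim to the definitions of $\TT$ and $\EE$ through \prettyref{lem:gen}, so that no induction on $A$ is needed.

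\textbf{Variables.} By \prettyref{lem:gen} we have $\TT(A)=\orth{\stackset{\EE(A)}}$, where the orthogonal is taken with respect to the pole of the polarity of $A$. So it suffices to show two things:
\begin{itemize}
\item $x\in\VV$;
\item $\vareps{\cut xS}{}\in\Bot$ for every $S\in\stackset{\EE(A)}$.
\end{itemize}

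The second point is immediate from the definition of $\EE$. Every $S\in\stackset{\EE(A)}$ lies in $\EE$, and by definition a context $e$ is in $\EE$ exactly when every command $\vareps{\cut xe}{}$ is in $\Bot$. This holds for any variable $x$ and for either polarity $\varepsilon$. Hence $x\polar S$ for all such $S$, and $x\in\orth{\stackset{\EE(A)}}=\TT(A)$.

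For the first point, $x$ is syntactically a value. It belongs to $\TT$ because $\vareps{\cut x{\alpha}}{}$ is $\argRed R{}$-normal: no reduction rule applies when both sides of the cut are variables. Alternatively, apply \prettyref{lem:term-sn}, since $x$ has no sub-commands. Therefore $x\in\VV$, and so $x\in\valset{\TT(A)}$.

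\textbf{Covariables.} The case of $\alpha$ is symmetric. \prettyref{lem:gen} gives $\EE(A)=\orth{\valset{\TT(A)}}$. Every $V\in\valset{\TT(A)}$ lies in $\TT$, so $\vareps{\cut V{\alpha}}{}\in\Bot$ by the very definition of $\TT$; hence $\alpha\in\EE(A)$. Moreover $\alpha$ is a stack, and it is strongly normalising by \prettyref{lem:term-sn} (no sub-commands). So $\alpha\in\stackset{\EE(A)}$.

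\textbf{Expected obstacle.} The only delicate point is the bookkeeping of polarities. One must check that the pole used in $\orth{\cdot}$ for the type $A$ ($\polarp$ if $A$ is positive, $\polarn$ if negative) matches the cut annotation $\vareps{\cut{}{}}{}$ appearing in the definitions of $\TT$ and $\EE$. Those definitions quantify over arbitrary polarity annotations, so this check goes through. If one prefers to avoid relying on that reading of the definitions, the fallback is \prettyref{prop:saturation}. Besides the one-step $\vareps{\mt}{}$- and $\vareps{\mu}{}$-reductions, every command $\cut xS$ is irreducible. Those two reductions yield a renaming of a sub-command, which is in $\Bot$ because renaming cannot create redexes, the rules being left-linear.
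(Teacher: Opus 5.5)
Your proof is correct and is essentially the paper's argument; the paper just says the lemma is immediate from the definitions of $\TT$ and $\EE$. A variable is orthogonal to every strongly normalising context by the very definition of $\EE$, and dually a covariable to every strongly normalising expression by the definition of $\TT$; moreover $x$ and $\alpha$ are trivially strongly normalising. Going through \prettyref{lem:gen} is a harmless convenience: unfolding $\TT(P)=\porth{\EE(P)}$, $\TT(N)=\north{\interpv N}$, $\EE(P)=\porth{\interpv P}$ and $\EE(N)=\north{\TT(N)}$ directly gives the same one-line check without it.
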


\begin{proof}
This is immediate from the definitions of $\TT$ and $\EE$.
\end{proof}

\begin{defn}
For $\Gamma=(x_{1}\colon A_{1},\dots,x_{n}\colon A_{n})$ and $\Delta=(\alpha\colon B)$
(possibly empty), and any substitution $\sigma:\{x_{1},\dots,x_{n},\alpha\}\rightarrow\VV\cup\SS$,
we write $\sigma\Vdash\Gamma,\Delta$ whenever $\forall i,\sigma(x_{i})\in\valset{\TT(A_{i})}$
and $\sigma(\alpha)\in\stackset{\EE(A)}$.
\end{defn}

\begin{thm}
\label{lem:adequacy}Consider $\sigma\Vdash\Gamma,\Delta$. One has:
\begin{itemize}
\item if $c\colon(\Gamma\vdash\Delta)$ then $c[\sigma]\in\Bot$,
\item if $\Gamma\vdash t\colon A$ then $t[\sigma]\in\TT(A)$,
\item if $\Gamma\mid e\colon A\vdash\Delta$ then $e[\sigma]\in\EE(A)$.
\end{itemize}
\end{thm}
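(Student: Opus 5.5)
We argue by simultaneous induction on the typing derivation of $c$, $t$ or $e$, following the usual reducibility-candidates pattern. The structural rules ($\sigma$ in $\Substexp{\Gamma}{\Gamma'}$) are handled by composing substitutions: if $\sigma\Vdash\Gamma',\Delta$ and $\rho\in\Substexp{\Gamma}{\Gamma'}$ then $\sigma\circ\rho\Vdash\Gamma,\Delta$, since $\Vdash$ only constrains each variable separately. The axioms are \prettyref{lem:id}, since $x[\sigma]=\sigma(x)\in\valset{\TT(A)}\subseteq\TT(A)$, and similarly for $\alpha$.

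\textbf{Cuts.} For $\vareps{\cut te}{}$ we have $t[\sigma]\in\TT(A)$ and $e[\sigma]\in\EE(A)$ by induction, and we must show $\vareps{\cut{t[\sigma]}{e[\sigma]}}{}\in\Bot$. When $A$ is positive, $\TT(A)=\orth{\EE(A)}$ by definition, so this is immediate. When $A$ is negative, $\EE(A)=\orth{\TT(A)}$ by \prettyref{lem:gen}, so it is again immediate.

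\textbf{Binders $\mu\vareps{\alpha}{}.c$ and $\mt\vareps x{}.c$.} For $\mu\varp{\alpha}.c$ we use $\TT(P)=\orth{\stackset{\EE(P)}}$ (\prettyref{lem:gen}). Given $S\in\stackset{\EE(P)}$, the induction hypothesis applied to $\sigma[\alpha\mapsto S]$ gives $c[\sigma,S/\alpha]\in\Bot$. By \prettyref{prop:saturation}, $\varp{\cut{\mu\alpha.c[\sigma]}S}\argred R{}c[\sigma,S/\alpha]$ then lies in $\Bot$. Membership in $\TT$ follows from \prettyref{lem:term-sn}, because the sub-commands are SN (take $S=\alpha$). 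For $\mu\varn{\alpha}.c$ we need $\TT(N)=\orth{\interpv N}$, and the same argument applies since $\interpv N\subseteq\SS$. The co-binders $\mt$ are symmetric. Note that a $\mt\varn x.c$ facing a negative $\mu\varn{\alpha}$ is never at stake here, since the pole for $\varn{}$ is tested only against values.

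\textbf{Logical rules.} For constructors on the focused side (values $V\smallotimes W$, $\iith(V)$, $()$, $\diam V$, $\term V$, and stacks $V\push S$, $\pith\push S$, $\init S$, $\oc S$), we show membership in $\interpv{\cdot}$ or a suitable biorthogonal. For $\otimes,\oplus,\unite,\mixarrow,\with$ the term lies in $\interpv A\subseteq\TT(A)$ (respectively $\EE(A)$), because the components lie in $\valset{\TT}$ or $\stackset{\EE}$ by the induction hypothesis. The SN requirement comes from the lemma that the maps preserve SN. For $\diam V$ with $V\in\valset{\TT(A)}$ we get $\diam V\in\mon\valset{\TT(A)}\subseteq\north{\north{(\cdots)}}$, and we need $\TT(\mon A)=\orth{\interpv{\mon A}}$, which holds by biorthogonality. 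Dually, $\oc S\in\EE(\oc A)$. For the invertible sides ($\mt(x\smallotimes y).c$, $\mui xcy{c'}$, $\mu(x\push\alpha).c$, $\mupair{\alpha}c{\beta}{c'}$, $\mt().c$), we test against elements of $\interpv{\cdot}$. By \prettyref{lem:gen}, testing against $\interpv{\cdot}$ suffices for positive $\EE$, because $\EE(P)=\orth{\interpv P}$. Each such cut reduces by one $R$-step to $c[\sigma,\ldots]$, which is in $\Bot$ by the induction hypothesis, and we close with \prettyref{prop:saturation}. The $\top$ and $\falso$ cases are vacuous, since $\interpv{\cdot}=\emptyset$, up to SN of sub-terms.

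\textbf{Main obstacle: the modal rules $(\vdash\oc)$ and $(\mon\vdash)$.} Here $\interpv{\oc A}=\valset{\orth{(\oc\stackset{\EE(A)})}}$, so for $\mu\prom{\alpha}.c$ we must check two things. First, $\varp{\cut{\mu\prom{\alpha}.c[\sigma]}{\oc S}}\in\Bot$ for $S\in\stackset{\EE(A)}$; this follows by one $(R\oc)$-step and the induction hypothesis on $\sigma[\alpha\mapsto S]$. Second, the value $\mu\prom{\alpha}.c[\sigma]$ must be in $\VV$, which holds by \prettyref{lem:term-sn}. We then need $\interpv{\oc A}\subseteq\TT(\oc A)$, i.e.\ $\interpv P\subseteq\pbiorth{\interpv P}$. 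The delicate point is that $\TT(\oc A)$ is the biorthogonal of $\interpv{\oc A}$ rather than of $\oc\stackset{\EE(A)}$. So for the dereliction rule we must verify $\oc S\in\EE(\oc A)=\orth{\interpv{\oc A}}$. This holds by construction of $\interpv{\oc A}$ as an orthogonal. The case $\mt\diam x.c$ and $\diam V$ is exactly dual.

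The restriction of contexts to $\oc\Gamma$ plays no role in this argument, so the proof is insensitive to the extra structural rules, as remarked before the statement.
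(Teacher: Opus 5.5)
Your proposal is correct and follows essentially the same route as the paper. Both argue by induction on the typing derivation and organise the cases the same way:
\begin{itemize}
\item the $\mu/\mt$ binders and the invertible constructors are closed via Lemma~\ref{lem:gen} and Proposition~\ref{prop:saturation};
\item the focused constructors are placed in $\interpv{\cdot}$ or the appropriate orthogonal;
\item the cases $(\vdash\oc)$, $(\oc\vdash)$, $(\vdash\mon)$ and $(\mon\vdash)$ follow from the orthogonality built into $\interpv{\oc A}$ and $\interpv{\mon A}$.
\end{itemize}
You are, if anything, slightly more explicit than the paper, for instance in checking via Lemma~\ref{lem:term-sn} that the binder itself is strongly normalising before invoking saturation. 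The only nit is that the axiom case rests directly on $\sigma\Vdash\Gamma,\Delta$, not on Lemma~\ref{lem:id}.
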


\begin{proof}
This is proved by induction on the derivations.

\emph{Rules ($\vdash\mathit{ax}$) and ($\mathit{ax}\vdash$):} immediate.\emph{
Rule ($\vdash\vareps{\mu}{}$):} One has to show $\mu\vareps{\alpha}{}.c[\sigma]\in\TT(\termeps A{})$
for some $\sigma\Vdash\Gamma$ with $\alpha\not\in\dom{\Gamma}$.
By \prettyref{lem:gen}, it suffices to show $\vareps{\cut{\mu\vareps{\alpha}{}.c[\sigma]}S}{}\in\Bot$
for all $S\in\stackset{\EE(\termeps A{})}$. By \prettyref{prop:saturation},
this follows from $c[\sigma\addsubst S{\alpha}]\in\Bot$ which follows
from the induction hypothesis. \emph{Rule ($\vareps{\mt}{}\vdash$):}
same reasoning.

\emph{Rule (cut):} One has to show $\vareps{\cut te}{}[\sigma]\in\Bot$
for any $\sigma\Vdash\Gamma,\Gamma',\Delta$. We consider the restrictions
$\sigma'\Vdash\Gamma$ and $\sigma''\Vdash\Gamma',\Delta$ of $\sigma$.
By induction hypothesis, one has $t[\sigma']\in\TT(\termeps A{})$
and $e[\sigma'']\in\EE(\termeps A{})$, hence the result from $\TT(\termeps A{})\mathrel{\vareps{\Bot}{}}\EE(\termeps A{})$.

\emph{Rules ($\sigma$), ($\vdash\sigma$) and ($\sigma\vdash$) for
$\sigma\in\Substexp{\Gamma}{\Gamma'}$.} For any $\sigma'\Vdash\Gamma'$
one has $(\sigma'\circ\sigma)\Vdash\Gamma$, from which the result
follows by induction.

\emph{Rule ($\mixarrow\focvdash$):} One has to show $V\push S[\sigma]\in\valset{\TT(A)}\bullet\stackset{\EE(B)}$
for any $\sigma\Vdash\Gamma,\Gamma',\Delta$. We consider the restrictions
$\sigma'\Vdash\Gamma$ and $\sigma''\Vdash\Gamma',\Delta$ of $\sigma$.
By induction, one has $V[\sigma']\in\valset{\TT(A)}$ and $S[\sigma'']\in\stackset{\EE(B)}$,
hence the result. \emph{Same reasoning for the rules ($\focvdash\otimes$),
($\with_{i}\focvdash$), and ($\ensuremath{\focvdash\oplus_{i}}$).}

\emph{Rule ($\vdash\with$):} the goal is to prove $\mupair{\alpha}c{\beta}{c'}\in\TT(A\with B)$
for $\sigma\Vdash\Gamma$ with $\alpha,\beta\notin\dom{\Gamma}$.
By definition it is enough to prove $\varn{\cut{\mupair{\alpha}c{\beta}{c'}[\sigma]}{\pith\push S_{i}}}\in\Bot$
for both $S_{1}\in\stackset{\EE(A)}$ and $S_{2}\in\stackset{\EE(B)}$.
By \prettyref{prop:saturation}, this follows from $\mupair{\alpha}c{\beta}{c'}[\sigma]\in\TT$,
$c[\sigma\addsubst{S_{1}}{\alpha}]\in\Bot$ and $c'[\sigma\addsubst{S_{2}}{\beta}]\in\Bot$.
The latter two follow from induction hypothesis. The former follows
from \prettyref{lem:term-sn} and $c[\sigma],c'[\sigma]\in\Bot$ obtained
by the induction hypothesis by substituting $\alpha$ and $\beta$
by themselves by \prettyref{lem:id}. \emph{Same reasoning for the
rules ($\vdash\mathord{\mixarrow}$), ($\otimes\vdash$), ($\oplus\vdash$).}

\emph{Rule ($\ensuremath{\focvdash\top}$):} one has to show $\term V[\sigma]\in\north{\emptyset}=\mathbb{V}$.
By \prettyref{lem:term-sn}, this follows from $V[\sigma]\in\mathbb{V}$
which follows by the induction hypothesis. \emph{Same reasoning for
the rule ($\falso\ensuremath{\focvdash}$).}

\emph{Rule ($\ensuremath{\vdash\oc}$):} we show $\mu\prom{\alpha}.c[\sigma]\in\interpv{\oc A}=\valset{\porth{(\prom{\stackset{\EE(A)}})}}$.
We show for any $S\in\stackset{\EE(A)}$, $\varp{\cut{\mu\prom{\alpha}.c[\sigma]}{\oc S}}\in\Bot$.
By \prettyref{prop:saturation} this follows from $c[\sigma\addsubst S{\alpha}]\in\Bot$
which follows from induction hypothesis. \emph{Same reasoning for
the rule ($\mon\vdash$).}

\emph{Rule ($\ensuremath{\oc\focvdash}$):} the goal is to show $\oc S[\sigma]\in\porth{\valset{\porth{(\prom{\stackset{\EE(A)}})}}}$.
One has $\oc S[\sigma]\in\prom{\stackset{\EE(A)}}$ by induction hypothesis,
and $\prom{\stackset{\EE(A)}}\subseteq\porth{\valset{\porth{(\prom{\stackset{\EE(A)}})}}}$
follows from $\valset{\porth{(\prom{\stackset{\EE(A)}})}}\subseteq\porth{\prom{\stackset{\EE(A)}}}$.
\emph{Same reasoning for the rule ($\ensuremath{\focvdash\mon}$).}
\end{proof}

\begin{thm}[Strong normalization]
\label{thm:SN}Any typable term is strongly normalizing.
\end{thm}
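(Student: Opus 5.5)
The plan is to derive the statement as a direct corollary of the adequacy result \prettyref{lem:adequacy}, instantiated with the identity substitution. Let $c\colon(\Gamma\vdash\Delta)$, $\Gamma\vdash t\colon A\mid$ or $\Gamma\mid e\colon A\vdash\Delta$ be a typable term. Let $\sigma$ be the substitution mapping every variable $x_{i}$ of $\Gamma$ to itself and the co-variable $\alpha$ of $\Delta$ (if any) to itself. By \prettyref{lem:id}, each $x_{i}\in\valset{\TT(A_{i})}$ and $\alpha\in\stackset{\EE(B)}$, so $\sigma\Vdash\Gamma,\Delta$. Since $c[\sigma]=c$, $t[\sigma]=t$ and $e[\sigma]=e$, \prettyref{lem:adequacy} yields $c\in\Bot$, $t\in\TT(A)$ and $e\in\EE(A)$.

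It remains to check that $\TT(A)\subseteq\TT$ and $\EE(A)\subseteq\EE$; this is built into the definitions. The orthogonals $\porth{\cdot}$ and $\north{\cdot}$ are taken inside $\powerset{\TT}$ and $\powerset{\EE}$ (respectively inside $\powerset{\SS}$ and $\powerset{\VV}$, with $\SS\subseteq\EE$ and $\VV\subseteq\TT$). Hence $\TT(P)=\pbiorth{\interpv P}$, $\TT(N)=\north{\interpv N}$, and the dual sets for $\EE$, are all subsets of $\TT$ and $\EE$.

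So $t\in\TT$ and $e\in\EE$. By definition this means that every command $\vareps{\cut t{\beta}}{}$ and $\vareps{\cut xe}{}$ is strongly normalising. By \prettyref{lem:term-sn}, it also means that all sub-commands of $t$ (resp. $e$) are in $\Bot$.

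One point needs care: strong normalisation of the term itself, rather than of the cuts against variables. An infinite $\argRed R{}$-sequence from $t$ (respectively $e$) lifts to one from $\vareps{\cut t{\beta}}{}$ (respectively $\vareps{\cut xe}{}$), since reduction is compatible. So $t$ and $e$ are strongly normalising in the naive sense as well.

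I expect no real obstacle here; the substance lies in \prettyref{lem:adequacy}. The only delicate step is confirming that the identity substitution is admissible, i.e.\ that variables inhabit every $\valset{\TT(A)}$. For $\oc A$ this amounts to $\varp{\cut x{\oc S}}\in\Bot$, which holds because this command is $\argRed R{}$-normal and $S$ is strongly normalising. For $\mon A$ the dual check is $\varn{\cut{\diam V}{\alpha}}\in\Bot$.
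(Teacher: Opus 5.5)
Your proposal is correct and is the paper's own argument: apply Theorem~\ref{lem:adequacy} with the identity substitution, which satisfies $\id\Vdash\Gamma,\Delta$ by Lemma~\ref{lem:id}, and read off $c\in\Bot$, $t\in\mathbb{T}$ and $e\in\mathbb{E}$. Your extra checks are sound, with one small slip. The command $\varp{\cut x{\oc S}}$ need not be $\argRed R{}$-normal, since reductions may occur inside $S$. What is true is that it has no head $\argred R{}$-redex, and together with $S$ being strongly normalising this puts it in $\Bot$ by Lemma~\ref{lem:comm-sm}.
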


\begin{proof}
For $c\colon(\Gamma\vdash\Delta)$, $\Gamma\vdash t\colon A$ and
$\Gamma\mid e\colon A\vdash\Delta$, we show $c\in\Bot$, $t\in\TT$
and $e\in\EE$. This follows from \prettyref{lem:adequacy} applied
with the identity substitution, which indeed satisfies $\id\Vdash\Gamma,\Delta$
by \prettyref{lem:id}.
\end{proof}

\subsection{\label{subsec:compl-Focusing}Completeness of focusing}

We describe focused normal forms \citep{Andreoli92,Lau04FocLL,Liang2009a}
and prove their completeness, strengthened with respect to the interpretation:
for any proof there is an \emph{equivalent} focused proof. This section
is taken from \citet{Munch-Maccagnoni2017curry} with slight adaptations.

In this section,
\begin{itemize}
\item sequents are measured by the number of connectives and units,
\item terms are measured by the number of subterms of the form $\varp{\cut V{\alpha}}$,
$\varn{\cut xS}$, $\oc S$ or $\diam V$ in their $\argRed R{}$-normal
form if it exists (in which case it is unique by confluence), otherwise
it is infinite by convention.
\end{itemize}
\begin{defn}
We define an \emph{inversion} relation $\succ$ between sequents and
multisets of sequents:
\begin{align*}
(\Gamma\vdash A\mixarrow B) & \succ\{(\Gamma,A\vdash B)\}\\
(\Gamma,A\otimes B,\Gamma'\vdash\Delta) & \succ\{(\Gamma,A,B,\Gamma'\vdash\Delta)\}\\
(\Gamma,\unite,\Gamma'\vdash\Delta) & \succ\{(\Gamma,\Gamma'\vdash\Delta)\}\\
(\Gamma\vdash A\with B) & \succ\{(\Gamma\vdash A),(\Gamma\vdash B)\}\\
(\Gamma,A\oplus B,\Gamma'\vdash\Delta) & \succ\{(\Gamma,A,\Gamma'\vdash\Delta),(\Gamma,B,\Gamma'\vdash\Delta)\}\\
(\Gamma\vdash\top) & \succ\emptyset\\
(\Gamma,\falso,\Gamma'\vdash\Delta) & \succ\emptyset
\end{align*}
A sequent that is normal for $\succ$ is called \emph{inverted}.
\end{defn}

In other words, a sequent $\Gamma\vdash\Delta$ is inverted\emph{
}if:
\begin{itemize}
\item $\Gamma$ only contains formulae that are either negative or of the
form $\oc A$ or $\varp X$; and
\item $\Delta$ contains either a positive formula, or a formula of the
form $\varn X$ or $\mon A$.
\end{itemize}
\begin{lem}
\label{lem:The-extension-of}The extension of $\succ$ to a relation
between multisets of sequents is terminating and confluent.
\end{lem}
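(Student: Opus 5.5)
Termination and confluence of the multiset extension of $\succ$ will be obtained in the standard way: termination by a multiset-ordering argument, and confluence by Newman's lemma from local confluence.

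\emph{Termination.} Every rule of $\succ$ sends a sequent $s$ to a finite multiset of sequents, each with strictly fewer connectives and units than $s$. For $\otimes$, $\mixarrow$ and $\with$ one connective disappears. For $\oplus$ each premise loses the $\oplus$ together with one of the two subformulae. For $\unite$, $\top$, $\falso$ the unit disappears, or the result is empty. The extended relation on multisets replaces one element $s$ by such a multiset. It is therefore contained in the Dershowitz--Manna multiset extension of the well-founded order $<$ on $\mathbb{N}$, applied to the multiset of sizes of the sequents, and that extension is well-founded. Hence every reduction sequence of multisets is finite.

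\emph{Local confluence.} By Newman's lemma it suffices to close critical pairs. Consider two steps from a multiset $M$. If they rewrite distinct occurrences of sequents in $M$, they commute trivially in one step each. Otherwise both rewrite the same sequent $s$ using two different inversion steps. Each step acts on a single formula occurrence: one in $\Gamma$ (the connectives $\otimes,\unite,\oplus,\falso$ on the left), or the formula in $\Delta$ (the connectives $\mixarrow,\with,\top$ on the right). Two distinct steps on $s$ therefore act on disjoint occurrences, either two left occurrences or one left and one right. Such steps are independent, with three cases.
\begin{itemize}
\item If neither step is $\top$ or $\falso$, applying the other step to every sequent of each result yields the same multiset. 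A binary step followed by a binary step gives four sequents in both orders, and the $\oplus$ case duplicates the residual occurrence in both branches.
\item If one step produces $\emptyset$ (a $\top$ or $\falso$ step), the other side can still perform that same step in each of its resulting sequents, since $\top$ or $\falso$ survives there. This also yields $\emptyset$. A $\mixarrow$ step moves $A$ into $\Gamma$ but keeps $B$ on the right, so a $\falso$ in $\Gamma$ is untouched.
\item Some care is needed with $\mixarrow$ combined with a left step: the right formula changes from $A\mixarrow B$ to $B$. A left step does not touch $\Delta$, and the $\mixarrow$ step only appends $A$ to $\Gamma$. The left occurrence being inverted is still present, so the two steps still commute.
\end{itemize}
Since the left-hand contexts are ordered lists, these residual steps produce literally equal sequents up to the position of the inserted formulae. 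Both orders insert at the same positions, so the results agree.

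The main (mild) obstacle is the bookkeeping in this case analysis, in particular the duplication in the $\oplus$ and $\with$ cases and the erasing $\top$/$\falso$ cases. Termination together with local confluence then gives confluence by Newman's lemma. A corollary is that every sequent has a unique inverted normal form, a multiset of inverted sequents.
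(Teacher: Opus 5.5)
Your proof is correct and takes the same route as the paper. Termination holds because each step strictly decreases the multiset (Dershowitz--Manna) order induced by the size of sequents, and confluence follows by Newman's lemma from local confluence; the paper merely asserts local confluence, and your case analysis (distinct occurrences, duplication under $\oplus$ and $\with$, erasure under $\top$ and $\falso$, the formula appended by the $\mixarrow$ step) supplies the details it leaves implicit.
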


\begin{proof}
The relation is strictly decreasing for the induced multiset order,
therefore it is terminating. It is confluent by Newman's lemma because
it is locally confluent.
\end{proof}

\begin{figure}
\hspace*{\fill}
\begin{varwidth}{\linewidth}
\[
\addcustomstretch[2.8]{\begin{array}{cc}
\unrulecsepdouble{c\colon(\Gamma,x\colon A}{\alpha\colon B)}{\cut{\mu(x\push\alpha).c}{\beta}\colon(\Gamma}{\beta\colon A\mixarrow B)}{} & \binruledouble{c\colon(\Gamma}{\alpha\colon A)}{c'\colon(\Gamma}{\beta\colon B)}{\cut{\mupair{\alpha}c{\beta}{c'}}{\gamma}\colon(\Gamma}{\gamma\colon A\with B)}{}\\
\unrulecsepdouble{c\colon(\Gamma,x\colon A,y\colon B,\Gamma'}{\Delta)}{\cut z{\mt(x\smallotimes y).c}\colon(\Gamma,z\colon A\otimes B,\Gamma'}{\Delta)}{} & \binruledouble{c\colon(\Gamma,x\colon A,\Gamma'}{\Delta)}{c'\colon(\Gamma,y\colon B,\Gamma'}{\Delta)}{\cut z{\mui xcy{c'}}\colon(\Gamma,z\colon A\oplus B,\Gamma'}{\Delta)}{}\\
\unrulecsepdouble{c\colon(\Gamma,\Gamma'}{\Delta)}{\cut x{\mt().c}\colon(\Gamma,x\colon\unite,\Gamma'}{\Delta)}{} & \axrulecdouble{\cut x{\init{}_{\Gamma,\Gamma',\Delta}}\colon(\Gamma,x\colon\falso,\Gamma'\vdash\Delta)}{}\\
 & \axrulecdouble{\cut{\term{}_{\Gamma}}{\alpha}\colon(\Gamma\vdash\alpha\colon\top)}{}
\end{array}}
\]
\end{varwidth}
\hspace*{\fill}

\hlinefill{}

\caption{\label{fig:Inversion}Inversion}
\hspace*{\fill}
\begin{varwidth}{\linewidth}
\[
\addcustomstretch[2.6]{\begin{array}{c}
\unrulecsepdouble{c\colon(\Gamma}{\alpha\colon N)}{\Gamma}{\mu\varn{\alpha}.c\colon N\mid}{}\qquad\unrulecsepdouble{c\colon(\Gamma,x\colon P}{\Delta)}{\Gamma\mid\mt\varp x.c\colon P}{\Delta}{}\qquad\unrulecsepdouble{c\colon(\oc\Gamma}{\alpha\colon A)}{\oc\Gamma}{\mu\prom{\alpha}.c\colon\oc A\mid}{}\qquad\unrulecsepdouble{c\colon(\oc\Gamma,x\colon A}{\mon\Delta)}{\oc\Gamma\mid\mt\diam x.c\colon\mon A}{\mon\Delta}{}\\
\axruledouble{!\Gamma,x\colon\varp X}{x\colon\varp X\mid}{}\qquad\axruledouble{!\Gamma,x\colon\oc A,!\Gamma'}{x\colon\oc A\mid}{}\qquad\axruledouble{!\Gamma\mid\alpha\colon\varn X}{\alpha\colon\varn X}{}\\
\binruledouble{\oc\Gamma',\restr{\Gamma}{\fv V}}{V\colon A\mid}{\oc\Gamma',\restr{\Gamma}{\fv W}}{W\colon B\mid}{\oc\Gamma',\Gamma}{V\smallotimes W\colon A\otimes B\mid}{}\qquad\binruledouble{\oc\Gamma',\restr{\Gamma}{\fv V}}{V\colon A\mid}{\oc\Gamma',\restr{\Gamma}{\fv S}\mid S\colon A}{\Delta}{\oc\Gamma',\Gamma\mid V\push S\colon A\mixarrow B}{\Delta}{}\\
\axruledouble{\oc\Gamma}{()\colon\unite\mid}{}\qquad\unrulecsepdouble{\Gamma}{V\colon A_{i}\mid}{\Gamma}{\iith(V)\colon A_{1}\oplus A_{2}\mid}{}\qquad\unrulecsepdouble{\Gamma\mid S\colon A_{i}}{\Delta}{\Gamma\mid\pith\push S\colon A_{1}\with A_{2}}{\Delta}{}
\end{array}}
\]
\end{varwidth}
\hspace*{\fill}

\hlinefill{}

\caption{\label{fig:Focusing}Focusing in $\protect\ILLpdiam$}
\end{figure}

\begin{lem}
\label{lem:For-any-term}Let $f$ be any term. Then $f\subst{x\smallotimes y}z$,
$f\subst{\iith(x)}y$, $f\subst{()}x$, $f\subst{x\push\alpha}{\beta}$,
and $f\subst{\pith\push\alpha}{\beta}$ are smaller than $f$.
\end{lem}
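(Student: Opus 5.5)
We reduce to the $\argRed R{}$-normal form of $f$ and study how each substitution interacts with it. If $f$ has no normal form, its measure is infinite by convention. In that case the substituted term also has infinite measure, and ``smaller'' must be read so that the claim concerns terms of finite measure. So we assume $f$ is normalising and write $f_0$ for its unique normal form, which exists by \prettyref{thm:SN} and confluence. Since $\argRed R{}$ is compatible with substitution, $f\subst{x\smallotimes y}z\argReds R{}f_0\subst{x\smallotimes y}z$. By confluence it therefore suffices to normalise $f_0\subst{x\smallotimes y}z$ and count the subterms of the forms $\varp{\cut V{\alpha}}$, $\varn{\cut xS}$, $\oc S$ and $\diam V$ in the result. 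The other four substitutions are handled the same way.

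\textbf{Step 1: classify occurrences.} We inspect the occurrences of the substituted (co)variable in $f_0$, using the grammar of normal forms together with typing. Since $z$ is linear of positive type $A\otimes B$, it can occur in $f_0$ only as a component of a value constructor (inside $V\smallotimes W$, $\iith(V)$, $\diam V$, or $V\push S$), as $\varp{\cut z{\alpha}}$, or as the head of a pattern match $\varp{\cut z{\mt(x'\smallotimes y').c}}$. The cases $\mt\varp{x'}.c$ and $\init S$ cannot occur in a normal form at this position. The cases $\iith(x)$ and $()$ are analogous. Dually, $\beta$ of negative type $A\mixarrow B$ (resp. $A_1\with A_2$) can occur only as a stack component, as $\varn{\cut x{\beta}}$, or as the argument of a copattern $\varn{\cut{\mu(x'\push\alpha').c}{\beta}}$.

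\textbf{Step 2: count.} After substitution, the only new redexes are the pattern-matching cuts $\varp{\cut{x\smallotimes y}{\mt(x'\smallotimes y').c}}$ and their analogues. Each fires in one step to $c\subst{x}{x'\addsubst y{y'}}$. These steps only rename variables and cannot create further redexes, because of left-linearity and the argument already used in \prettyref{lem:term-sn}. Hence the normal form of the substituted term is $f_0$ with every such match contracted and every other occurrence replaced in place. We then compare counts case by case. A value-component occurrence leaves every counted subterm unchanged. An occurrence $\varp{\cut z{\alpha}}$ becomes $\varp{\cut{x\smallotimes y}{\alpha}}$ and is still counted once. A contracted match removes the surrounding cut together with its binder and keeps the body, so no counted subterm is added.

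\textbf{Step 3: strictness, the expected obstacle.} Steps 1 and 2 only give that the measure does not increase, whereas the lemma asserts a strict decrease. To obtain it, we will show that some occurrence of the substituted variable in $f_0$ is of the matching kind, so that at least one counted subterm disappears. Two points need care. First, when $z$ (resp. $\beta$) does not occur in $f$ at all, the substitution is the identity and no strict decrease is possible. We will therefore state, and prove, the lemma for the situation in which it is applied in the focusing argument that follows: the substituted variable is the one just introduced by the inversion of \prettyref{fig:Inversion}, and it occurs in $f$. Second, we must identify precisely which counted subterm is lost when a matching occurrence is contracted. We would first try to locate it at the boundary between the inverted and focused phases of \prettyref{fig:Focusing}, and otherwise strengthen the measure so that each contracted pattern match is itself counted. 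This identification is the step we expect to be the main difficulty.
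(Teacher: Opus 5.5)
Your Steps 1 and 2 prove the lemma as intended, by essentially the paper's argument. You first pass to the normal form, using compatibility of substitution with $\argRed R{}$ and confluence. For $\argRed R{}$-normal $f$, the only redexes in the substituted term are the constructor-against-pattern cuts that the substitution creates. Contracting one of them only renames variables or selects a branch, so it creates no new redex. Neither these contractions nor the substitution itself increases the number of counted subterms. The paper argues on untyped terms, but your use of typing in Step 1 does no harm.

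The problem is Step 3, which rests on a misreading. ``Smaller'' here means the non-strict bound $\size{\normalise{f\subst p{\kappa}}}\leq\size{\normalise f}$. The strict version you set out to prove is false for this measure, even when the substituted variable occurs in matching position.

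Take the $\argRed R{}$-normal command $f=\varp{\cut z{\mt(x'\smallotimes y').\varp{\cut{x'\smallotimes y'}{\alpha}}}}\colon(z\colon A\otimes B\vdash\alpha\colon A\otimes B)$. Its size is $1$, because the outer cut is not of a counted form. Then $f\subst{x\smallotimes y}z\argred R{}\varp{\cut{x\smallotimes y}{\alpha}}$, which also has size $1$. This is exactly what \prettyref{lem:For-any-typable} produces when it inverts $z$ in $c=f$.

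So neither restricting to variables that occur nor locating a ``lost'' subterm can give strictness. Your fallback of counting pattern-matching cuts would also break \prettyref{prop:For-any-typable}. There, the $\varp{\mt}$-expansion $S_0\mapsto\mt\varp x.\varp{\cut x{S_0}}$ must preserve size, and the new cut $\varp{\cut x{S_0}}$ would then be counted.

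Strictness is not needed anyway. The inversion phase terminates by the measure on sequents (\prettyref{lem:The-extension-of}). Strict decrease of term size is claimed only for $V'$ and $S'$ in \prettyref{prop:For-any-typable}. With $\leq$, the non-normalisable case is also trivial, since anything is $\leq\infty$, so your remark about the substituted term's measure there is unnecessary. Drop Step 3 and conclude after Step 2.
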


\begin{proof}
Note that if $f$ is not $\argRed R{}$-normalisable then the result
is trivial by convention. For any term $f$ let us write $\size f$
the number of subterms of the form $\varp{\cut V{\alpha}}$, $\varn{\cut xS}$,
$\oc S$ or $\diam V$, and $\normalise f$ the $\argRed R{}$-normal
form of $f$ if it exists. In particular, if $f$ is normalisable
then its size is given by $\size{\normalise f}$. We now consider
$f\subst p{\kappa}$ as in the statement. With these notations the
result is stated $\size{\normalise{f\subst p{\kappa}}}\leq\size{\normalise f}$.

We first prove the result for $f$ $\argRed R{}$-normal, that is
$\size{\normalise{f\subst p{\kappa}}}\leq\size f$. Since $f$ is
$\argRed R{}$-normal, the only redexes in $f\subst p{\kappa}$ are
the ones created by the substitution. Necessarily, those are of the
form $\varn{\cut{\mu p.C}p}$ or $\varp{\cut p{\mt p.C}}$ (up to
renaming of bound variables), and the reduction just selects a field
of $C$, thus decreasing $\size{\cdot}$ without creating new redexes.
Iterating the argument by induction on $f\subst p{\kappa}$ to eliminate
all the redexes in this way, we obtain its $\argRed R{}$-normal form\footnote{It is in fact the parallel reduct of the redexes created by the substitution,
and also the complete development of the substitution, but it is not
necessary to use these notions here.} and it verifies $\size{\normalise{f\subst p{\kappa}}}\leq\size{f\subst p{\kappa}}$.
We conclude by noting $\size{f\subst p{\kappa}}\leq\size f$ since
substitution can only decrease the number of commands of the form
$\varp{\cut V{\alpha}}$ or $\varn{\cut xS}$.

Now, in the case of $f$ $\argRed R{}$-normalisable, we have just
proved that $\normalise{(\normalise f\subst p{\kappa})}$ exists and
$\size{\normalise{(\normalise f\subst p{\kappa})}}\leq\size{\normalise f}$.
But $\normalise{(\normalise f\subst p{\kappa})}=\normalise{f\subst p{\kappa}}$
via the reduction $f\subst p{\kappa}\argReds R{}\normalise f\subst p{\kappa}\argReds R{}(\normalise{\normalise f\subst p{\kappa})}$
and uniqueness of the normal form. 
\end{proof}

\begin{lem}
\label{lem:For-any-typable}For any $c\colon\Psi$, and for any $\{\Psi_{1},\dots,\Psi_{n}\}\prec\Psi$,
there exists an equivalent command $c'\colon\Psi$ derived from smaller
commands of type $\Psi_{1},\dots,\Psi_{n}$ by application of a rule
in Fig.~\ref{fig:Inversion}.
\end{lem}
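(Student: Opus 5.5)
The plan is a case analysis on the inversion step $\Psi\succ\{\Psi_{1},\dots,\Psi_{n}\}$. In each case we $\eta$-expand $c$ at the formula being inverted, push the resulting term through $\argRed R{}$, and read off the premises. Throughout, \emph{equivalent} means $\argeq{RE}{}$, and \emph{smaller} is measured by the size of the $\argRed R{}$-normal form. By \prettyref{thm:SN}, every typable term has such a normal form, and by confluence it is unique.

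First, the two cases on the right. Take $\Psi=(\Gamma\vdash\beta\colon A\mixarrow B)$. The command $c$ has the co-variable $\beta$ of negative type $A\mixarrow B$, so we first write $c\argeq E{}\cut{\mu\varn{\beta}.c}{\beta}$. We then apply $(E{\mixarrow})$ to the value $\mu\varn{\beta}.c$, obtaining $\mu(x\push\alpha).\varn{\cut{\mu\varn{\beta}.c}{x\push\alpha}}$. After one $(R\vareps{\mu}{})$ step this equals $\cut{\mu(x\push\alpha).c\subst{x\push\alpha}{\beta}}{\beta}$. This is the first rule of Fig.~\ref{fig:Inversion}, applied to $c_{1}\df c\subst{x\push\alpha}{\beta}$, which has type $(\Gamma,x\colon A\vdash\alpha\colon B)$. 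The $\with$ case works the same way: $(E\with)$ yields the premises $c\subst{\pith\push\alpha_{i}}{\beta}$. For $\top$, $(E\top)$ directly gives $\cut{\term{}_{\Gamma}}{\beta}$ with no premise.

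Next, the cases on the left, which are dual. For $z\colon A\otimes B$ we write $c\argeq E{}\cut z{\mt\varp z.c}$, apply $(E\otimes)$ to the stack, and reduce. This produces $\cut z{\mt(x\smallotimes y).c\subst{x\smallotimes y}z}$. The cases $\oplus$ (premises $c\subst{\iith(x_{i})}z$), $\unite$ (premise $c\subst{()}z$) and $\falso$ (via $(E\falso)$) follow identically. In every case each premise has the form $c\subst p{\kappa}$ for one of the substitutions listed in \prettyref{lem:For-any-term}. That lemma shows each premise is no larger than $c$, and typing of each premise is immediate from the typing rules.

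The main obstacle is that \prettyref{lem:For-any-term} only gives $\leq$, while ``smaller'' in an induction measure should presumably be strict, or be combined with the strict decrease of sequents along $\succ$. I expect the intended use is a lexicographic measure: the size of the sequent, which strictly decreases under $\succ$, paired with the size of the term. Under that reading, non-strict decrease of the term suffices and the lemma holds as stated. If strictness of the term measure were needed, I would sharpen the proof of \prettyref{lem:For-any-term}. Whenever the inverted variable actually occurs in a focused position $\varp{\cut z{\alpha}}$ or $\varn{\cut xS}$, the substitution creates a redex whose reduction deletes that counted command. The remaining subcase, where $z$ occurs only under a $\mt$-binder of a non-focused cut, would need separate checking. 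I would also verify that the $\argeq{RE}{}$ steps used are well-typed instances, in particular that the $\mt\varp z$ and $\mu\varn{\beta}$ expansions respect the linear context splitting.
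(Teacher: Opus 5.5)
Your proof is correct and essentially the paper's. The paper takes as premises the unreduced cuts $\cut{x\smallotimes y}{\mt z.c}$, $\cut{\mu\beta.c}{x\push\alpha}$, and so on, which are one $(R\mt)$ or $(R\mu)$ step from your $c\subst{x\smallotimes y}z$, $c\subst{x\push\alpha}{\beta}$, and bounds them with \prettyref{lem:For-any-term} exactly as you do. As you suspected, ``smaller'' is meant non-strictly here: termination of inversion comes from the sequent measure, and strict decrease is only claimed in the focusing proposition, so the strictness worry is moot; also, relating $c$ to $\cut z{\mt z.c}$ (resp.\ $\cut{\mu\beta.c}{\beta}$) is an $R$-step, not an $E$-step.
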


\begin{proof}
The result follows by mapping each inversion as follows:
\begin{align*}
c\colon(\Gamma,z\colon A\otimes B,\Gamma'\vdash\Delta) & \succ\{\cut{x\smallotimes y}{\mt z.c}\colon(\Gamma,x\colon A,y\colon B,\Gamma'\vdash\Delta)\}\\
c\colon(\Gamma,z\colon A\oplus B,\Gamma'\vdash\Delta) & \succ\{\cut{\ileft(x)}{\mt z.c}\colon(\Gamma,x\colon A,\Gamma'\vdash\Delta),\cut{\iright(y)}{\mt z.c}\colon(\Gamma,y\colon B,\Gamma'\vdash\Delta)\}\\
c\colon(\Gamma,x\colon\unite,\Gamma'\vdash\Delta) & \succ\{\cut{()}{\mt x.c}\colon(\Gamma,\Gamma'\vdash\Delta)\}\\
c\colon(\Gamma\vdash\beta\colon A\mixarrow B) & \succ\{\cut{\mu\beta.c}{x\push\alpha}\colon(\Gamma,x\colon A\vdash\alpha\colon B)\}\\
c\colon(\Gamma\vdash\gamma\colon A\with B) & \succ\{\cut{\mu\gamma.c}{\pleft\push\alpha}\colon(\Gamma\vdash\alpha\colon A),\cut{\mu\gamma.c}{\pright\push\beta}\colon(\Gamma\vdash\beta\colon B)\}\\
c\colon(\Gamma,x\colon\falso,\Gamma'\vdash\Delta) & \succ\emptyset\\
c\colon(\Gamma\vdash\alpha\colon\top) & \succ\emptyset
\end{align*}
The commands on the right-hand side are well-typed by straightforward
derivation. They are smaller by application of \prettyref{lem:For-any-term}.
When composed with the respective constructions of Fig.~\ref{fig:Inversion},
one obtains a command $c'\argeq{RE}{}c$ and thus equivalent.
\end{proof}

\begin{prop}[Inversion]
\label{prop:For-any-typable-1}For any sequent $\Psi$, there exist
inverted sequents $\Psi_{1},\dots,\Psi_{n}$ such that any command
$c\colon\Psi$ can up to equivalence be derived from smaller commands
$c_{1}\colon\Psi_{1},\dots,c_{n}\colon\Psi_{n}$ by applying the rules
in Fig.~\ref{fig:Inversion} in any order.
\end{prop}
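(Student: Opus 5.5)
We argue by well-founded induction on $\Psi$ along $\succ$; this is justified because \prettyref{lem:The-extension-of} tells us that $\succ$, extended to multisets of sequents, is terminating and confluent. Given $\Psi$, we take as $\Psi_{1},\dots,\Psi_{n}$ the multiset of inverted sequents reached by any maximal $\succ$-reduction sequence starting from $\{\Psi\}$. Confluence guarantees that this multiset does not depend on the reduction sequence chosen. This independence is what later justifies the clause ``in any order'': every order of applying the inversion rules ends at the same leaves.

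The key step is \prettyref{lem:For-any-typable}, iterated. If $\Psi$ is already inverted, we take $n=1$, $\Psi_{1}=\Psi$, $c_{1}=c$, and there is nothing to prove. Otherwise we choose an arbitrary inversion $\Psi\succ\{\Psi'_{1},\dots,\Psi'_{k}\}$. By \prettyref{lem:For-any-typable}, $c\argeq{RE}{}c'$, where $c'$ is obtained by applying the corresponding rule of Fig.~\ref{fig:Inversion} to commands $c'_{j}\colon\Psi'_{j}$, each smaller than $c$. Each $\Psi'_{j}$ is strictly below $\Psi$ for $\succ$, so the induction hypothesis applies to it. It yields inverted sequents for $\Psi'_{j}$ and smaller commands from which $c'_{j}$ is equivalently derivable by rules of Fig.~\ref{fig:Inversion}. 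We then substitute these derivations into the rule producing $c'$. Equivalence is preserved because $\argeq{RE}{}$ is a compatible relation, and sizes still decrease because being ``smaller'' is transitive. Collecting the leaves gives a multiset equal to the $\succ$-normal form of $\{\Psi\}$, again by confluence.

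The main obstacle is the accounting of the size measure when derivations are composed. The measure is taken on $\argRed R{}$-normal forms, so we must check that if $c'_{j}\argeq{RE}{}$ (rule applied to $c''$) with $c''$ smaller than $c'_{j}$, and $c'_{j}$ is smaller than $c$, then $c''$ is still smaller than $c$. This follows from transitivity of $\leq$ on sizes, together with the fact that sizes are well defined up to $\argeq{R}{}$ on strongly normalising terms; \prettyref{thm:SN} provides strong normalisation, and confluence gives uniqueness of normal forms. One point needs care: \prettyref{lem:For-any-term} gives only non-strict inequalities, so ``smaller'' should be read as $\leq$. Strictness is not needed here, because termination of the construction comes from $\succ$ and not from the size of commands. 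A second subtlety is the $\falso$ and $\top$ inversions, which have no premises. In those cases the command is replaced, up to $\argeq{RE}{}$ via the expansions $(E\falso)$ and $(E\top)$, by the axioms of Fig.~\ref{fig:Inversion}, and they contribute no leaves to the multiset.
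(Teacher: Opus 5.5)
Your proof is correct and takes essentially the paper's approach. The paper's proof just cites \prettyref{lem:The-extension-of}: termination gives the well-founded induction, and confluence gives uniqueness of the inverted leaves regardless of order. It also cites \prettyref{lem:For-any-typable} for a single inversion step up to $\argeq{RE}{}$ with non-increasing size. Your write-up unfolds how these two lemmas combine, and your remarks are accurate: ``smaller'' must be read as $\leq$ here, termination comes from $\succ$ rather than from the size measure, and the $\falso$/$\top$ cases come from $(E\falso)$ and $(E\top)$.
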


\begin{proof}
By Lemmas~\ref{lem:The-extension-of} and~\ref{lem:For-any-typable}.
\end{proof}

\begin{prop}[Focusing]
\label{prop:For-any-typable}For any typable command $c\colon(\Gamma\vdash\Delta)$
where $\Gamma\vdash\Delta$ is inverted, there exist equivalent command
and derivation of either of the following forms:
\begin{align*}
 & \unrulecsepdouble{\oc\Gamma',\Gamma''}{V\colon\termp{\Delta(\alpha)}\mid}{\varp{\cut V{\alpha}}\colon(\Gamma}{\Delta)}{} &  & \unrulecsepdouble{\oc\Gamma',\Gamma''\mid S\colon\termn{\Gamma(x)}}{\Delta}{\varn{\cut xS}\colon(\Gamma}{\Delta)}{} &  & \unrulecsepdouble{\oc\Gamma',\Gamma''\mid S\colon\Gamma'(x)}{\Delta}{\varp{\cut x{\oc S}}\colon(\Gamma}{\Delta)}{} &  & \unrulecsepdouble{\oc\Gamma',\Gamma''}{V\colon\Delta(\alpha)}{\varn{\cut{\diam V}{\alpha}}\colon(\Gamma}{\Delta)}{}
\end{align*}
where $\oc\Gamma'$ is the restriction of $\Gamma$ to formulae of
the form $\oc A$, and $\Gamma''$ is obtained from $\Gamma$ by
removing $\oc\Gamma'$ and $x$ (if applicable). Moreover, $S$ and
$V$ derive from zero or more commands strictly smaller than $c$
by applications of rules in Fig.~\ref{fig:Focusing}.
\end{prop}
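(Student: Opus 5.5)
We argue by strong induction on the size of $c$ (the number of focusing subterms in its $\argRed R{}$-normal form, which exists by \prettyref{thm:SN}). First replace $c$ by its $\argRed R{}$-normal form, which is equivalent and of the same size. A normal command is a cut $\vareps{\cut te}{}$ containing no $\argred R{}$-redex. We then do a case analysis on the polarity $\varepsilon$ and on the head constructors of $t$ and $e$, using the fact that $\Gamma\vdash\Delta$ is inverted. That is, $\Gamma$ contains only negative formulae, $\oc A$ or $\varp X$, and $\Delta$ is positive, $\varn X$ or $\mon A$.

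\emph{Case analysis.} If $\varepsilon=+$, normality forbids $t=\mu\varp{\beta}.c'$ unless $e$ is not a stack, and forbids $e=\mt\varp x.c'$ when $t$ is a value. If $t=\mu\varp\beta.c'$ and $e=\mt\varp x.c''$, the command would reduce, so normal positive cuts have $t=V$ and $e=S$.
\begin{itemize}
\item If $S=\alpha$, we are in the first form.
\item If $S$ is a pattern-matching stack ($\mt(x\smallotimes y).c'$, $\mui{}{}{}{}$, $\mt().c'$), then normality forces $V$ to be a variable $x$ of positive type in $\Gamma$. Positive formulae in an inverted $\Gamma$ are only $\varp X$ or $\oc A$, which have no such stacks, so this case is impossible.
\item If $S=\oc S'$, then $V$ is not a promotion, so $V=x\colon\oc A$, and we get the third form.
\end{itemize}
Symmetrically, a normal negative cut $\varn{\cut Ve}$ has $e=S$. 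If $V=x$ we get the second form. If $V=\diam W$, then $S$ is not $\mt\diam y.c'$, hence $S=\alpha$ and we get the fourth form. If $V$ is a negative constructor ($\mu(x\push\alpha)$, $\mu\ppair{}{}$, $\term{}$), then $S$ must be a covariable of negative type, excluded by inversion. Units $\top,\falso$ are handled the same way via the $\term{}$/$\init{}$ constructors.

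\emph{Decomposing the focus.} It remains to show that the value $V$ (resp. stack $S$) decomposes by the rules of Fig.~\ref{fig:Focusing} into strictly smaller commands. We proceed by induction on the syntax of $V$ or $S$:
\begin{itemize}
\item Constructors $\smallotimes,\iith,(),\push,\pith$ are directly rules of Fig.~\ref{fig:Focusing}. Context splitting follows free variables, and $\oc\Gamma'$ is shared because contraction and weakening are available for $\oc$-formulae via $\Substexp{\Gamma}{\Gamma'}$.
\item Variables of type $\varp X$ or $\oc A$ and covariables of type $\varn X$ are axioms. A variable of another positive type cannot arise in an inverted context.
\item A variable $x$ of positive type $P$ occurring in focus inside the decomposition, where $P$ is neither atomic nor $\oc A$, must be $\eta$-expanded using $(E\otimes)$, $(E\oplus)$, etc. Rather than doing that, we note that every position where the phase ends is one of the four ``blur'' rules $\mu\varn\alpha.c'$, $\mt\varp x.c'$, $\mu\prom\alpha.c'$, $\mt\diam x.c'$. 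Their premises $c'$ are strictly smaller than $c$, since $c$ contains at least one additional focusing subterm (the head cut counted by the size), and they are then handled by the outer induction after re-applying \prettyref{prop:For-any-typable-1}.
\item For $\mu\prom\alpha.c'$, typing already forces the context $\oc\Gamma$.
\end{itemize}

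\emph{Expected obstacle.} The delicate step is the size bookkeeping. We must check that sub-commands extracted from the normal form count strictly fewer focusing subterms than $c$, and that any $\eta$-expansion needed to reach a fully focused shape (expanding variables or covariables of composite type in focus) does not increase size. For the first point, size is computed on normal forms and the head cut itself contributes one subterm of the counted forms; in the third and fourth forms the counted subterm is $\oc S$ or $\diam V$. For the second point we would use \prettyref{lem:For-any-term}, which shows that substitutions of constructor patterns for variables do not increase size. A secondary subtlety is the $\oc$/$\mon$ cases, where $\Gamma''$ must be shown empty of linear variables: for $\mu\prom\alpha.c'$ this follows from typing, and for $\mt\diam x.c'$ it follows since $\Delta$ is then $\mon$-typed.
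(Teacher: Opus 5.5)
\textbf{Gap: the boundary of the focusing phase.} Your analysis of the head cut is correct and matches the paper's. The paper gets $c=\cut{x}{S}$ or $c=\cut{V}{\alpha}$ from the generation lemma and sorts by the types allowed in an inverted sequent; your constructor-by-constructor analysis does the same. The gap is in decomposing the focus. Your claim that ``every position where the phase ends is one of the four blur rules'' is false for $\argRed R{}$-normal terms. The phase on $V$ (resp.\ $S$) stops at the first sub-value of negative type (resp.\ sub-stack of positive type), and in a normal form this can be:
\begin{itemize}
\item any negative value, such as $\mu(x\push\gamma).c''$, $\mupair{\alpha_1}{c_1}{\alpha_2}{c_2}$, $\term{W}$, $\diam W$, or a negative variable;
\item any positive stack, such as $\mt(u\smallotimes v).c''$, $\mui{x_1}{c_1}{x_2}{c_2}$, $\mt().c''$, $\init{S'}$, $\oc S'$, or a covariable of positive type.
\end{itemize}
None of these is covered by a rule of Fig.~\ref{fig:Focusing}. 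For instance $\varp{\cut{\ileft(\mu(x\push\gamma).c'')}{\alpha}}$ is normal, yet the premise of the $\iota$ rule is a $\lambda$-abstraction, not some $\mu\varn{\beta}.c'$.

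\textbf{The missing step.} The paper treats these as base cases $\Gamma\vdash V_0\colon N\mid$ and $\Gamma\mid S_0\colon P\vdash\Delta$. It $\eta$-expands them with the shift-like rules $(E\varn{\mu})$ and $(E\varp{\mt})$, namely $V_0\argred E{}\mu\varn{\beta}.\varn{\cut{V_0}{\beta}}$ and $S_0\argred E{}\mt\varp{y}.\varp{\cut{y}{S_0}}$, which produce genuine blur rules. One then checks the size: when $V_0$ is not a variable and $S_0$ not a covariable, the new cut is not of a counted form. So the premise is no larger than $V_0$ or $S_0$, hence strictly smaller than $c$, whose head (or $\oc S$, $\diam V$) is counted.

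The $(E\otimes)$, $(E\oplus)$ expansions of positive variables that you invoke are beside the point: as you yourself note, such variables cannot occur in focus. Likewise \prettyref{lem:For-any-term} concerns the inversion phase, not this step.

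\textbf{A caution on the size bookkeeping you flagged.} When $V_0$ is a bare negative variable or $S_0$ a bare positive covariable, the expanded cut $\varn{\cut{y}{\beta}}$ or $\varp{\cut{y}{\alpha}}$ is itself counted. For example, $c=\varp{\cut{y\smallotimes z}{\alpha}}$ with $y\colon\varn X$, $z\colon\varp Y$ has size $1$, and so does the forced premise $\varn{\cut{y}{\beta}}$. So for this measure strict decrease fails there, and the paper's remark that ``the expansion preserves the size'' glosses over this as well.

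Finally, your outer induction re-applying \prettyref{prop:For-any-typable-1} is not needed. The statement only asks that the premises be strictly smaller, not that they be focused in turn.
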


\begin{proof}
Up to equivalence, one can assume $c$ $\argRed R{}$-normal: indeed,
the normal form exists by \prettyref{thm:SN}, its derivation is equivalent
by soundness, and it has the same size by definition. Then one has
either $c=\cut x{S^{\Gamma(x)}}$ or $c=\cut{V^{\Delta(\alpha)}}{\alpha}$
by generation lemma, with $S$ and $V$ $\argRed R{}$-normal. We
sort them into one of three cases: either $\varp{\cut{V'}{\alpha}}$,
$\varn{\cut x{S'}}$, or $\varp{\cut x{\oc S'}}$, derived as in the
above statement. If $\Gamma(x)$ is negative or $\Delta(\alpha)$
is positive, we are in one of the first two cases. Since $\Gamma\vdash\Delta$
is inverted, only three cases remain: $\Gamma(x)=\varp X$, $\Gamma(x)=\oc A$,
$\Delta(\alpha)=\varn X$, and $\Delta(\alpha)=\mon A$. By generation
lemma, one respectively has $S=\alpha$ (first case), $S=\oc S'$
(third case), $V=x$ (second case), and $V=\diam V'$ (fourth case).
Notice that $V'$ and $S'$ are strictly smaller than $c$.  Then
the result follows by an induction on $S'$ and $V'$ typable in an
inverted sequent, by generation lemma, and applying weakening on the
hypotheses. The base cases are
\begin{align*}
 & \Gamma\mid\alpha\colon N\vdash\Delta &  & \Gamma\vdash x\colon P\mid &  & \oc\Gamma\vdash\mu\prom{\alpha}.c\colon\oc A\mid &  & \oc\Gamma\mid\mt\diam x.c\colon\mon A\vdash\mon\Delta &  & \Gamma\mid S_{0}\colon P\vdash\Delta &  & \Gamma\vdash V_{0}\colon N\mid
\end{align*}
In the first two cases, one has indeed $P=\Gamma(x)$ of the form
$\varp X$ or $\oc A$ and $N=\Delta(\alpha)$ of the form $\varn X$
or $\mon A$ by inversion. In the last three cases, $\mu\prom{\alpha}.c$,
$\mt\diam x.c$, $V_{0}$, and $S_{0}$ are strictly smaller than
$c$ by being subterms of $V'$ or $S'$. Then in the last two cases
we apply a $\varp{\mt}$- or a $\varn{\mu}$-expansion to obtain equivalent
terms $\mt\varp x.c'$ and $\mu\varn{\alpha}.c'$. The expansion preserves
the size, so $c'$ is strictly smaller than $c$.
\end{proof}

Thus, for any proof, there is an equivalent proof obtained by alternating
one inversion phase with one or more\footnote{in the case where a rule $(\vdash\oc)$ or $(\mon\vdash)$comes from
a hypothesis that is already inverted.} focusing phases.

\end{document}